\newtheorem{thm}{Theorem}
\newtheorem{lem}[thm]{Lemma}
\newtheorem{cor}[thm]{Corollary}
\newtheorem{prop}[thm]{Proposition}
\theoremstyle{definition}
\newtheorem{defn}[thm]{Definition}
\newtheorem{eg}[thm]{Example}
\theoremstyle{remark}
\newtheorem*{rmk}{Remark}
\newcommand{\eps}{\varepsilon}
\newcommand{\DEF}{\,{:=}\,}
\newcommand{\FED}{\,{=:}\,}
\newcommand{\PT}[1]{\mathbf{#1}}
\newcommand{\sgnEqconst}{G}
\newcommand{\dd}{\,d}
\DeclareMathOperator{\bal}{Bal}
\DeclareMathOperator{\betafcn}{B}
\DeclareMathOperator{\CAP}{cap}
\DeclareMathOperator{\essinf}{``\inf''}
\DeclareMathOperator{\gammafcn}{\Gamma}
\DeclareMathOperator{\dist}{dist}
\DeclareMathOperator{\kelvin}{K}
\DeclareMathOperator{\kelvinMEAS}{\mathcal{K}}
\DeclareMathOperator{\supp}{supp}
\newcommand{\PSET}{X}
\DeclareMathOperator{\HyperF}{F}
\DeclareMathOperator{\HyperTildeF}{\tilde{F}}
\newcommand{\Hypergeom}[5]{{\sideset{_#1}{_#2}\HyperF\!\left(\substack{\displaystyle#3\\\displaystyle#4};#5\right)}}
\newcommand{\HypergeomReg}[5]{{\sideset{_#1}{_#2}
\HyperTildeF\!\left(\substack{\displaystyle#3\\\displaystyle#4};#5\right)}}
\newcommand{\Pochhsymb}[2]{{\left(#1\right)_{#2}}}
\title[Riesz external field problems on the hypersphere]{Riesz external field problems on the hypersphere and optimal point separation}
\author[J. S. Brauchart, P. D. Dragnev, and E. B. Saff]{Johann S. Brauchart\textasteriskcentered, Peter D. Dragnev\textdagger,
and Edward B. Saff\textdaggerdbl} 
\thanks{\noindent \textasteriskcentered The research of this author
was supported, in part, by an APART-Fellowship of the Austrian Academy of
Sciences and by an Australian Research Council Discovery grant. \\
\textdagger The research of this author was supported, in part, by a Grants-in-Aid program of ORESP at IPFW and by a grant from the Simons Foundation no. 282207.\\
\textdaggerdbl The research of this author was supported, in
part, by the U. S. National Science Foundation under grant DMS-1109266 as well as by an Australian Research Council Discovery grant.}
\date{\today}
\begin{document}

\address{J. S. Brauchart:
School of Mathematics and Statistics,
University of New South Wales,
Sydney, NSW, 2052,
Australia }
\address{P. D. Dragnev:
Department of Mathematical Sciences, Indiana University-Purdue
University Fort Wayne, Fort Wayne, IN 46805, USA}
\address{E. B. Saff:
Center for Constructive Approximation, Department of Mathematics,
Vanderbilt University, Nashville, TN 37240, USA}

\email{j.brauchart@unsw.edu.au} \email{dragnevp@ipfw.edu}
\email{Edward.B.Saff@Vanderbilt.Edu}

\begin{abstract}

We consider the minimal energy problem on the unit sphere $\mathbb{S}^d$ in the Euclidean space $\mathbb{R}^{d+1}$ in the presence of an external field $Q$, where the energy arises from the Riesz potential $1/r^s$ (where $r$ is the Euclidean distance and $s$ is the Riesz parameter) or the logarithmic potential $\log(1/r)$. Characterization theorems of Frostman-type for the associated extremal measure, previously obtained by the last two authors, are extended to the range $d-2  \leq s < d - 1.$  The proof uses a maximum principle for measures supported on $\mathbb{S}^d$. When $Q$ is the Riesz $s$-potential of a signed measure and $d-2 \leq s <d$,  our results lead to explicit point-separation estimates for $(Q,s)$-Fekete points, which are $n$-point configurations minimizing the Riesz $s$-energy on $\mathbb{S}^d$ with external field $Q$. In the hyper-singular case $s > d$, the short-range pair-interaction enforces well-separation even in the presence of more general external fields. As a further application,
we determine the extremal and signed equilibria when the external field is due to a negative point charge outside a positively charged isolated sphere. Moreover, we provide a rigorous analysis of the three point external field problem and numerical results for the four point problem.
\end{abstract}

\keywords{$\alpha$-subharmonic functions, balayage,
minimal energy problems with external fields, Riesz spherical potentials}
\subjclass[2000]{31B05 (31B15, 78A30)}

\maketitle

\nocite{Zo2004}

\section{Introduction}

Let $\mathbb{S}^d \DEF \{ \PT{x} \in \mathbb{R}^{d+1} : |\PT{x}|=1
\}$ be the unit sphere in $\mathbb{R}^{d+1}$, where $|\PT{\cdot}|$
denotes the Euclidean norm. Given a compact set $E\subset
\mathbb{S}^d$, consider the class $\mathcal{M}(E)$ of unit positive
Borel measures supported on $E$. For $s>0$ the {\em Riesz
$s$-potential} and {\em Riesz $s$-energy} of a measure $\mu \in
\mathcal{M}(E)$ are given, respectively, by
\begin{equation*}
U_s^\mu(\PT{x}) \DEF \int k_s( \PT{x}, \PT{y} ) \dd \mu(\PT{y}),
\quad \PT{x} \in \mathbb{R}^{d+1}, \qquad \mathcal{I}_s(\mu) \DEF
\int \int k_s( \PT{x}, \PT{y} ) \dd \mu(\PT{x}) \dd \mu(\PT{y}),
\end{equation*}
where $k_s (\PT{x}, \PT{y}) \DEF |\PT{x}-\PT{y}|^{-s}$ is the
so-called {\em Riesz kernel}.
The {\em $s$-capacity} of $E$ is then defined as ${\rm cap}_s(E) \DEF 1/ W_s(E)$ for $s>0$, where $W_s(E) \DEF \inf \{ \mathcal{I}_s(\mu) : \mu \in \mathcal{M}(E) \}$ is the {\em $s$-energy} of the set $E$. A property is said to hold {\em quasi-everywhere} (q.e.), if the exceptional set has $s$-capacity zero.
When ${\rm cap}_s(E)>0$, there exists a unique minimizer $\mu_E = \mu_{s,E}$, called the {\em $s$-equilibrium measure on $E$}, such that $\mathcal{I}_s(\mu_E) = W_s(E)$.
For more details see \cite[Chapter~II]{La1972}.

Whenever $s = 0$ (we shall use $s = \log$), which occurs, for example, when $s = d - 2$ and $d = 2$, we replace the Riesz kernel $k_s$ by the {\em logarithmic kernel}
\begin{equation*}
k_{\log}(\PT{x}, \PT{y}) \DEF \log(1/|\PT{x}-\PT{y}|).
\end{equation*}
(In this case we define $\CAP_{\log}(E) \DEF \exp\{-W_{\log}(E)\}$.)

We shall refer to a lower semi-continuous function $Q:\mathbb{S}^d
\to (-\infty,\infty]$ such that ${Q(\PT{x})<\infty}$ on a set of
positive Lebesgue surface measure, as an {\em external field}. We
note that the lower semi-continuity implies the existence of a
finite $c_Q$ such that $Q(\PT{x})\geq c_Q$ for all $\PT{x}\in
\mathbb{S}^d$. The {\em weighted energy associated with $Q(\PT{x})$}
is then given by
\begin{equation} \label{energy}
I_{Q,s}(\mu) \DEF \mathcal{I}_s(\mu) + 2 \int Q(\PT{x}) \dd \mu(\PT{x}), \qquad \mu \in \mathcal{M}(E).
\end{equation}
(The terminology ``weighted energy'' is used here to indicate the presence of an external field, and should not be confused with ``weighted energy functionals'', where the Riesz $s$-kernel is multiplied by a weight function $w(\PT{x},\PT{y})$. We leave the study of the external field problem for such generalized kernels for a future investigation.)

\begin{defn} \label{def:external.field.problem}
The Riesz external field problem on the unit sphere $\mathbb{S}^d$
for the external field $Q$ is concerned with minimizing the weighted energy
\eqref{energy} among all Borel probability measures $\mu$ supported
on $\mathbb{S}^d$.
A measure $\mu_{Q,s} \in \mathcal{M}(\mathbb{S}^d)$ with
\begin{equation*}
I_{Q,s}(\mu_{Q,s}) = V_{Q,s} \DEF \inf \left\{ I_{Q,s}(\mu) : \mu \in \mathcal{M}(\mathbb{S}^d) \right\}
\end{equation*}
is called an {\em $s$-extremal (or positive equilibrium) measure on $\mathbb{S}^d$ associated with $Q$}.
\end{defn}

If we consider only measures supported on some compact subset $E \subset \mathbb{S}^d$ with positive $s$-capacity, then the minimizing measure is referred to as the {\em $s$-extremal measure on $E$ associated with $Q$} and denoted by $\mu_{E,Q,s}$. In the particular case when $Q\equiv 0$, the measure $\mu_{Q,s}$ on $\mathbb{S}^d$ is just the normalized unit surface area measure on the sphere for which we use the symbol $\sigma_d$.

We shall also consider the discrete analogue of the above external field problem which is defined as follows.
\begin{defn} \label{def:discrete.external.field.problem}
Let $s > 0$ or $s = \log$. For a set of $n$ points $\PSET_n = \{\PT{x}_1, \dots, \PT{x}_n \} \subset \mathbb{S}^d$ the {\em discrete weighted energy associated with $Q$} is given by
\begin{equation} \label{eq:discrete.Riesz.energy}
E_s^{Q}(\PSET_n) \DEF \mathop{\sum_{j=1}^n \sum_{k=1}^n}_{k \neq j} \Big[ k_s( \PT{x}_j, \PT{x}_k ) + Q(\PT{x}_j) + Q(\PT{x}_k) \Big].
\end{equation}
Then the discrete external field problem on the sphere $\mathbb{S}^d$ concerns the minimization
\begin{equation} \label{eq:discrete.Riesz.energy.minimum}
\mathcal{E}^Q_s (n) \DEF \min \Big\{ E_s^{Q}(\PSET_n) : \PSET_n \subset \mathbb{S}^d, | \PSET_n | = n \Big\},
\end{equation}
where $| A |$ denotes the cardinality of the set $A$.
A solution of the discretized minimization problem \eqref{eq:discrete.Riesz.energy.minimum} is called an {\em $n$-point $(Q,s)$-Fekete set}.
\end{defn}

The existence of $(Q,s)$-Fekete sets is an easy consequence of the lower semi-continuity of the energy functional and the compactness of the unit sphere. Further, we remark that a standard argument establishes the following monotonicity property
\begin{equation*}
\frac{\mathcal{E}^Q_s (n)}{n(n-1)} \leq \frac{\mathcal{E}^Q_s (n+1)}{(n+1)n} \qquad \text{for all $n \geq 2$.}
\end{equation*}

We remark that the discrete problem has application to image processing, namely the half-toning of images based on electrostatic repulsion of printed dots in the presence of an image-driven external field; cf. Schmaltz et al.~\cite{SchGwBrWei2010} and Gr{\"a}f~\cite[Section~6.5.2]{Gr2013}.

The outline of the paper is as follows.
In Section~\ref{sec:characterization} we provide Frostman-type characterization theorems for the solution to the external field minimal energy problem on the sphere. This is facilitated by a new restricted maximum principle on the sphere which holds for the range $d - 2 \leq s < d$ (see Theorem~\ref{thm:restr.max.principle}). We also introduce the signed equilibrium measure and discuss its relation to the positive equilibrium measure.
In Section~\ref{sec:separation} we establish that for a large class of external fields $Q$, the sequences of $n$-point $(Q,s)$-Fekete sets are well-separated; that is, have separation distance of order $n^{-1/d}$ (Theorems~\ref{thm:main3} and \ref{thm:main4}). 
In Section~\ref{sec:neg.external.field}, for an external field due to a negative point charge, we provide a detailed analysis and give explicit representations of the signed equilibrium (Theorem~\ref{thm:SignEq}) and  the $s$-extremal measure on $\mathbb{S}^d$ (Theorem~\ref{thm:s.equilibrium.measure}). This extends results in \cite{BrDrSa2009}.
In Section~\ref{sec:Examples} we rigorously characterize the $3$-point $(Q,s)$-Fekete set for a general class of convex external fields and provide numerical results for the four point problem with Riesz external fields (Figures~\ref{fig3} and \ref{fig2} illustrate the analysis).
The proofs of our results are provided in Section~\ref{sec:proofs}.

\section{Basic Properties and Characterization Theorems}
\label{sec:characterization}

In \cite{DrSa2007} the second and the third authors formulated the
following Frostman-type proposition, which deals with the existence
and uniqueness of the measure $\mu_{Q,s}$, as well as a criterion
that characterizes $\mu_{Q,s}$ in terms of its potential. The proof
of this proposition follows closely the proof of \cite[Theorem
I.1.3]{SaTo1997}. It could also be derived as a particular case from
the more general results in \cite{Zo2003} (see especially Theorems~1 and 2, and Proposition~1 of that paper).

\begin{prop} \label{prop:1}
Let $0< s < d$.\footnote{A similar result holds for the logarithmic case.} For the minimal energy problem on $\mathbb{S}^d$
with external field $Q$ the following properties hold:
\begin{itemize}
\item[\rm (a)] $V_{Q,s}$ is finite.
\item[\rm (b)] There exists a unique $s$-extremal measure
$\mu_{Q,s}\in \mathcal{M}(\mathbb{S}^d)$ associated with $Q$.
Moreover, the support $S_{Q,s} \DEF \supp( \mu_{Q,s} )$ of this measure is contained in the
compact set $E_M \DEF \{ \PT{x} \in \mathbb{S}^d : Q(\PT{x}) \leq M
\}$ for some $M>0$.
\item[\rm (c)] The measure $\mu_{Q,s}$ satisfies the variational inequalities
\begin{align}
U_s^{\mu_{Q,s}}(\PT{x}) + Q(\PT{x}) &\geq F_{Q,s} \quad \text{q.e. on $\mathbb{S}^d$,} \label{VarEq1} \\
U_s^{\mu_{Q,s}}(\PT{x}) + Q(\PT{x}) &\leq F_{Q,s} \quad
\text{everywhere on $S_{Q,s}$,} \label{VarEq2}
\end{align}
where
\begin{equation}
F_{Q,s} \DEF V_{Q,s} - \int Q(\PT{x}) \dd \mu_{Q,s}(\PT{x}).
\label{VarConst}
\end{equation}
\item[\rm (d)] Inequalities \eqref{VarEq1} and
\eqref{VarEq2} completely characterize the $s$-extremal measure
$\mu_Q$ in the sense that if $\nu \in \mathcal{M}(\mathbb{S}^d)$ is
a measure with finite $s$-energy such that for some constant $C$ we
have
\begin{align}
U_s^{\nu}(\PT{x}) + Q(\PT{x}) &\geq C \quad \text{q.e. on $\mathbb{S}^d$,} \label{VarEq3} \\
U_s^{\nu}(\PT{x}) + Q(\PT{x}) &\leq C \quad \text{everywhere on $\supp (\nu)$,} \label{VarEq4}
\end{align}
then $\nu=\mu_{Q,s}$ and $C=F_{Q,s}$.
\end{itemize}
\end{prop}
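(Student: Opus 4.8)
The plan is to follow the classical Frostman-type argument from potential theory with external fields, as in \cite[Theorem~I.1.3]{SaTo1997}, adapted to the sphere.

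First I would establish (a): finiteness of $V_{Q,s}$. Since $Q \geq c_Q > -\infty$ and, for $0 < s < d$, the sphere has positive $s$-capacity (indeed $\sigma_d$ has finite $s$-energy because $d - 2 \le s < d$ covers only part of the range, but for all $0<s<d$ one checks $\mathcal{I}_s(\sigma_d) < \infty$ directly), the measure $\sigma_d$ restricted to a sublevel set of $Q$ of positive measure gives a competitor with finite weighted energy, so $V_{Q,s} < \infty$. The lower bound $V_{Q,s} > -\infty$ follows from $\mathcal{I}_s(\mu) \ge 0$ and $Q \ge c_Q$, giving $I_{Q,s}(\mu) \ge 2 c_Q$.

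Next, for (b), I would take a minimizing sequence $\{\mu_n\}$; by compactness of $\mathbb{S}^d$ and Helly's selection theorem a subsequence converges weak-$*$ to some $\mu_{Q,s} \in \mathcal{M}(\mathbb{S}^d)$. Lower semicontinuity of $\mathcal{I}_s$ and of $\int Q \dd\mu$ (the latter from lower semicontinuity of $Q$) under weak-$*$ convergence yields $I_{Q,s}(\mu_{Q,s}) \le V_{Q,s}$, hence equality. Uniqueness follows from strict convexity of the energy functional: the Riesz $s$-kernel for $0<s<d$ is strictly positive definite, so $\mathcal{I}_s(\mu-\nu) > 0$ for $\mu \neq \nu$ of finite energy, and the usual parallelogram-type computation on $I_{Q,s}((\mu+\nu)/2)$ forces $\mu = \nu$. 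For the support statement, one shows $V_{Q,s} < \infty$ implies $F_{Q,s}$ is finite, and then any point where $Q$ is too large cannot lie in the support; more carefully, pick $M$ so that $E_M$ has positive $s$-capacity, observe the extremal measure on $E_M$ coincides with $\mu_{Q,s}$ by a restriction/comparison argument, giving $S_{Q,s} \subset E_M$.

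For (c), the variational inequalities come from perturbing $\mu_{Q,s}$: for (VarEq2), if $U_s^{\mu_{Q,s}} + Q > F_{Q,s}$ on a subset of $S_{Q,s}$ of positive $\mu_{Q,s}$-measure, shifting mass away decreases $I_{Q,s}$, a contradiction; integrating the reverse inequality against $\mu_{Q,s}$ pins down the constant as $F_{Q,s} = V_{Q,s} - \int Q \dd\mu_{Q,s}$. For (VarEq1), one uses that if $U_s^{\mu_{Q,s}} + Q < F_{Q,s}$ on a set of positive $s$-capacity, then adding a small multiple of the equilibrium measure of that set again decreases the weighted energy. Finally (d), the characterization, is the part I expect to be the main obstacle in terms of care: given $\nu$ satisfying \eqref{VarEq3}–\eqref{VarEq4}, I would integrate \eqref{VarEq1} (for $\mu_{Q,s}$) against $\nu$ and \eqref{VarEq4} (for $\nu$) against $\mu_{Q,s}$, using that q.e.\ statements are harmless against finite-energy measures (which charge no set of zero capacity), to obtain $\mathcal{I}_s(\mu_{Q,s}, \nu) + \int Q\dd\nu \ge F_{Q,s}$ and $\mathcal{I}_s(\nu,\mu_{Q,s}) + \int Q\dd\mu_{Q,s} \le C$; combined with the analogous pair obtained by integrating against the measures themselves, a short manipulation yields $\mathcal{I}_s(\mu_{Q,s} - \nu) \le 0$, whence $\mu_{Q,s} = \nu$ by strict positive-definiteness, and then $C = F_{Q,s}$. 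The delicate points throughout are the justification of interchanging integrals (Fubini, valid by nonnegativity of $k_s$) and the handling of the q.e.\ exceptional sets, but since all measures involved have finite $s$-energy these are standard; alternatively, as the excerpt notes, the whole proposition can be cited from \cite{DrSa2007} or deduced from \cite[Theorems~1,~2 and Proposition~1]{Zo2003}.
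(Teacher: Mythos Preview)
Your proposal is correct and aligns with what the paper does: the paper gives no detailed proof of this proposition, stating only that it follows closely the proof of \cite[Theorem~I.1.3]{SaTo1997} and can alternatively be derived from the general results \cite[Theorems~1,~2 and Proposition~1]{Zo2003}. Your sketch is precisely an outline of the Saff--Totik argument the paper defers to, and you correctly flag at the end that the result may simply be cited.
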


Observe that, if the external field $Q$ is continuous on $\mathbb{S}^d$, then the inequality in \eqref{VarEq3} holds everywhere on $\mathbb{S}^d$.

\begin{rmk}
Proposition~\ref{prop:1} remains true if $\mathbb{S}^d$ is replaced with any compact subset $K \subset \mathbb{S}^d$ with $\CAP_s( K ) > 0$. Notationally, the dependence on $K$ will be indicated by a subscript $K$ (e.g., $\mu_{K,Q,s}$, $F_{K,Q,s}$, etc.).
\end{rmk}

In the case when $d-1\leq s<d$, \cite[Theorem 1.3]{DrSa2007}
analyzes further the characterization property from
Proposition~\ref{prop:1}(d) by studying the supremum and the {\em
essential infimum} of the weighted potential $U_s^{\nu}(\PT{x}) +
Q(\PT{x})$. Our first theorem extends this analysis to the larger
range $d-2 \leq s < d$. 
To state the theorem we introduce the notation $\mathop{\essinf}_{\PT{x} \in E}$ to denote the {\em essential infimum} of $f$ with respect to a set $E\subset\mathbb{S}^d$; that is,
\begin{equation*}
\mathop{\essinf}_{\PT{x} \in E} f(\PT{x}) \DEF \sup \left\{ c : f(\PT{x}) \geq c \ \text{q.e. on $E$} \right\};
\end{equation*}
in other words, the infimum is taken quasi-everywhere.

\begin{thm} \label{thm:main}
Let $d-2 \leq s < d$, $Q$ be an external field on $\mathbb{S}^d$, and
$F_{Q,s}$ be defined as in \eqref{VarConst}. For any measure
$\lambda \in \mathcal{M}(\mathbb{S}^d)$ we have
\begin{align}
\mathop{\essinf}_{\PT{x} \in S_{Q,s}} \left[ U_s^\lambda(\PT{x}) +
Q(\PT{x}) \right] &\leq F_{Q,s} \label{eq:essinf} \intertext{and}
\sup_{\PT{x} \in\supp (\lambda)} \left[ U_s^\lambda(\PT{x}) + Q(\PT{x})
\right] &\geq F_{Q,s}. \label{eq:sup}
\end{align}
If equality holds in both inequalities, then $\lambda = \mu_{Q,s}$.
\end{thm}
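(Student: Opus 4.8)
The plan is to establish the two inequalities separately by comparison arguments that pit the arbitrary test measure $\lambda$ against the genuine extremal measure $\mu_{Q,s}$, using the variational inequalities \eqref{VarEq1}--\eqref{VarEq2} from Proposition~\ref{prop:1} together with the restricted maximum principle (Theorem~\ref{thm:restr.max.principle}), whose validity on the range $d-2\le s<d$ is precisely what lets us push the earlier $d-1\le s<d$ analysis down to the wider range. For \eqref{eq:essinf}, I would argue by contradiction: suppose $U_s^\lambda+Q\ge F_{Q,s}+\delta$ q.e.\ on $S_{Q,s}$ for some $\delta>0$. Integrating this against $\mu_{Q,s}$ (whose energy is finite, so q.e.\ statements may be integrated) and using $\int Q\,d\mu_{Q,s}=V_{Q,s}-F_{Q,s}$ from \eqref{VarConst} gives $\int U_s^\lambda\,d\mu_{Q,s}+V_{Q,s}-F_{Q,s}\ge F_{Q,s}+\delta$, i.e.\ $\int U_s^{\mu_{Q,s}}\,d\lambda\ge 2F_{Q,s}-V_{Q,s}+\delta$ after Fubini. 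On the other hand \eqref{VarEq2} gives $U_s^{\mu_{Q,s}}+Q\le F_{Q,s}$ on $S_{Q,s}$, but to integrate this against $\lambda$ I need $U_s^{\mu_{Q,s}}+Q\le F_{Q,s}$ to hold not just on $S_{Q,s}$ but on all of $\supp(\lambda)$ — which is false in general. This is exactly where the maximum principle enters: the restricted maximum principle on $\mathbb{S}^d$ upgrades \eqref{VarEq2} to the global bound $U_s^{\mu_{Q,s}}(\PT{x})+Q(\PT{x})\le F_{Q,s}$ (or at least $U_s^{\mu_{Q,s}}\le \sup_{S_{Q,s}}U_s^{\mu_{Q,s}}$) everywhere on $\mathbb{S}^d$, after which integrating against $\lambda$ and combining the two estimates yields a contradiction with $\delta>0$.

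For \eqref{eq:sup}, the argument is the mirror image, but now it is \eqref{VarEq1} that is the convenient global q.e.\ bound and \eqref{VarEq2} the local one. Assume $U_s^\lambda+Q\le F_{Q,s}-\delta$ everywhere on $\supp(\lambda)$; integrate against $\lambda$ to get $\mathcal{I}_s(\lambda)+\int Q\,d\lambda\le F_{Q,s}-\delta$. Separately, \eqref{VarEq1} says $U_s^{\mu_{Q,s}}+Q\ge F_{Q,s}$ q.e., hence everywhere on $\mathbb{S}^d$ up to a set of capacity zero, and in particular $\lambda$-a.e.\ provided $\lambda$ charges no set of zero $s$-capacity — one should check that a test measure with infinite energy is harmless here since the left side of \eqref{eq:sup} is then $+\infty$, so WLOG $\lambda$ has finite energy and thus annihilates polar sets. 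Integrating \eqref{VarEq1} against $\lambda$ gives $\int U_s^{\mu_{Q,s}}\,d\lambda+\int Q\,d\lambda\ge F_{Q,s}$, i.e.\ $\int U_s^\lambda\,d\mu_{Q,s}+\int Q\,d\lambda\ge F_{Q,s}$. Comparing with $\mathcal{I}_s(\lambda)+\int Q\,d\lambda\le F_{Q,s}-\delta$ produces $\int U_s^\lambda\,d\mu_{Q,s}-\mathcal{I}_s(\lambda)\ge\delta>0$; but the standard energy inequality $\int\int k_s\,d(\lambda-\mu_{Q,s})d(\lambda-\mu_{Q,s})\ge 0$ for the positive-definite Riesz kernel (valid for $0<s<d$, and in the logarithmic case among measures of equal total mass) combined with $\mathcal{I}_s(\mu_{Q,s})\le$ something forces the reverse, a contradiction. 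I would organize this so the cross-energy term is handled by the inequality $2\int U_s^\lambda\,d\mu_{Q,s}\le \mathcal{I}_s(\lambda)+\mathcal{I}_s(\mu_{Q,s})$ and an estimate on $\mathcal{I}_s(\mu_{Q,s})$ coming back from \eqref{VarEq2} integrated against $\mu_{Q,s}$.

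Finally, for the equality case: if equality holds in both \eqref{eq:essinf} and \eqref{eq:sup}, then $\essinf_{S_{Q,s}}[U_s^\lambda+Q]=\sup_{\supp(\lambda)}[U_s^\lambda+Q]=F_{Q,s}$, which says $U_s^\lambda+Q\ge F_{Q,s}$ q.e.\ on $S_{Q,s}$ — and one wants this q.e.\ on all of $\mathbb{S}^d$ — while $U_s^\lambda+Q\le F_{Q,s}$ on $\supp(\lambda)$. To invoke the uniqueness in Proposition~\ref{prop:1}(d) I need the lower bound $U_s^\lambda+Q\ge C$ to hold q.e.\ on the \emph{whole} sphere, not merely on $S_{Q,s}$; this again is delivered by the restricted maximum principle applied to the potential of $\lambda$ (the weighted potential cannot dip below $F_{Q,s}$ off $S_{Q,s}$ without violating the principle, given it attains that infimum on $S_{Q,s}$ and $\supp(\lambda)$). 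Once both variational inequalities \eqref{VarEq3}--\eqref{VarEq4} are verified with $C=F_{Q,s}$, Proposition~\ref{prop:1}(d) gives $\lambda=\mu_{Q,s}$ outright.

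\textbf{Main obstacle.} The delicate point throughout is the passage from \emph{local} inequalities (valid on $S_{Q,s}$ or $\supp(\lambda)$) to \emph{global} inequalities (valid q.e.\ on $\mathbb{S}^d$), which is the only way to integrate one measure's weighted potential against the other measure. This is not automatic for $s<d-1$ — it is exactly the content of the new restricted maximum principle, and verifying that its hypotheses are met (in particular that the relevant potentials are of the form to which Theorem~\ref{thm:restr.max.principle} applies, and handling the boundary case $s=d-2$ including the logarithmic instance $d=2$, $s=0$) is where the real work lies. A secondary bookkeeping obstacle is tracking which statements hold genuinely everywhere versus only q.e., and confirming that measures of infinite energy can be excluded from consideration without loss.
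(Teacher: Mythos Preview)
Your overall strategy---argue by contradiction and compare $\lambda$ with $\mu_{Q,s}$---is right, but the specific mechanism you propose breaks down at the crucial step in both inequalities.

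For \eqref{eq:essinf}, you claim the sphere maximum principle upgrades \eqref{VarEq2} to $U_s^{\mu_{Q,s}}+Q\le F_{Q,s}$ on all of $\mathbb{S}^d$. This is false: off $S_{Q,s}$ the weighted potential satisfies the \emph{opposite} inequality \eqref{VarEq1}, and Theorem~\ref{thm:restr.max.principle} controls only pure potentials $U_s^\mu$, not $U_s^\mu+Q$ for an arbitrary lower semi-continuous $Q$. Your fallback bound $U_s^{\mu_{Q,s}}\le\sup_{S_{Q,s}}U_s^{\mu_{Q,s}}$ is correct but too weak: following your own steps it yields only $2F_{Q,s}-V_{Q,s}+\delta\le F_{Q,s}-\inf_{S_{Q,s}}Q$, i.e.\ $\delta\le\int Q\,d\mu_{Q,s}-\inf_{S_{Q,s}}Q$, which is no contradiction. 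For \eqref{eq:sup}, your energy-positivity chain likewise fails to close: combining the assumed upper bound, \eqref{VarEq1} integrated against $\lambda$, and $2\int U_s^\lambda\,d\mu_{Q,s}\le\mathcal{I}_s(\lambda)+\mathcal{I}_s(\mu_{Q,s})$ produces only $\int Q\,d\lambda\ge\int Q\,d\mu_{Q,s}+\delta$, again not absurd.

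The missing device is to integrate not against $\mu_{Q,s}$ or $\lambda$ but against the \emph{unweighted} equilibrium measures $\mu_{S_{Q,s}}$ and $\mu_{\supp(\lambda)}$ of the two supports. For \eqref{eq:essinf}, the contradiction hypothesis together with \eqref{VarEq2} gives $U_s^\lambda\ge U_s^{\mu_{Q,s}}+(L_1-F_{Q,s})$ q.e.\ on $S_{Q,s}$ (note $Q$ has cancelled); integrating against $\mu_{S_{Q,s}}$ and applying Fubini yields
\[
\int U_s^{\mu_{S_{Q,s}}}\,d\lambda \ge \int U_s^{\mu_{S_{Q,s}}}\,d\mu_{Q,s}+(L_1-F_{Q,s}).
\]
Now the Gauss inequalities give $U_s^{\mu_{S_{Q,s}}}\ge W_s(S_{Q,s})$ q.e.\ on $S_{Q,s}\supset\supp(\mu_{Q,s})$, while the sphere maximum principle---applied to the \emph{pure} potential $U_s^{\mu_{S_{Q,s}}}$, which is where it legitimately enters---extends $U_s^{\mu_{S_{Q,s}}}\le W_s(S_{Q,s})$ from $\supp(\mu_{S_{Q,s}})$ to all of $\mathbb{S}^d\supset\supp(\lambda)$. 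This yields the clean contradiction $W_s(S_{Q,s})\ge W_s(S_{Q,s})+(L_1-F_{Q,s})$. The proof of \eqref{eq:sup} is symmetric, using $\mu_{\supp(\lambda)}$. The point is that the unweighted equilibrium potential is essentially constant on the relevant set, so $Q$ never appears in the maximum-principle step.
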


For the restricted range $d-1\leq s < d$, the proof of this theorem as given in \cite{DrSa2007} utilizes the principle of domination for Riesz potentials, which generally is stated for the parameter range $d-1\leq s < d + 1$ and measures supported on any subsets of $\mathbb{R}^{d+1}$. (A restricted version of the principle of domination for $d-2 < s < d$ was established in \cite[Lemma~5.1]{DrSa2007}.)
Via a different approach that utilizes the following {\em restricted maximum principle on the sphere}, we are able to prove the result for $s$ in the extended range $d - 2 \leq s < d$; see Section~\ref{sec:proofs}. 

\begin{thm}[Sphere Maximum Principle] \label{thm:restr.max.principle}
Let {$d-2 \leq s < d$}. Suppose $\mu$ is a positive measure with $\supp (\mu) \subset \mathbb{S}^d$ such that for some $M > 0$, the relation $U_s^\mu(\PT{x}) \leq M$ holds $\mu$-almost everywhere on $\mathbb{S}^d$. Then $U_s^\mu(\PT{x}) \leq M$ holds everywhere on $\mathbb{S}^d$.
\end{thm}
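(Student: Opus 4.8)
The plan is to reduce the statement to the classical maximum principle for Riesz potentials in $\mathbb{R}^{d+1}$ by a stereographic-type argument, or alternatively — and this is the route I would actually pursue — to exploit the fact that the restriction of the Riesz kernel to the sphere is, after the Kelvin transform / inversion, comparable to a Riesz kernel of a different parameter on a lower-dimensional ``flat'' set. The key observation is that for $\PT{x},\PT{y}\in\mathbb{S}^d$ one has $|\PT{x}-\PT{y}|^2 = 2 - 2\,\PT{x}\cdot\PT{y}$, so the Riesz $s$-kernel on the sphere is a function of the geodesic distance alone and behaves, near the diagonal, like $r^{-s}$ with $r$ the chordal distance. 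For $d-2\le s<d$, this puts us in the regime where the $s$-kernel restricted to $\mathbb{S}^d$ is $\alpha$-superharmonic for a suitable $\alpha$, which is exactly the structural input one needs for a maximum principle.

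First I would recall that the full-space Riesz kernel $|\PT{x}-\PT{y}|^{-s}$ on $\mathbb{R}^{d+1}$ satisfies the classical maximum principle (Maria--Frostman) for $0<s\le d-1$, i.e. for $s$ strictly below the dimension of the sphere as a subset. Since here $d-2\le s<d$, the naive ambient principle fails when $s\ge d-1$, so the sphere structure must be used. The second step is to pass to the set $S\DEF\supp(\mu)\subset\mathbb{S}^d$ and to the signed measure framework: assume $U_s^\mu\le M$ holds $\mu$-a.e.; I want to upgrade this to $S$ and then to all of $\mathbb{S}^d$. For the first upgrade I would use the lower semicontinuity of $U_s^\mu$ together with the fact that the set where $U_s^\mu>M$ is open and, by the maximum principle for the \emph{ambient} problem in the admissible sub-range, cannot meet $\supp(\mu)$; more precisely, one writes $\mathbb{S}^d$ locally as a graph and applies a balayage argument. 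The third and central step is the balayage/sweeping argument: for an arbitrary point $\PT{z}\in\mathbb{S}^d\setminus S$, one sweeps the unit point mass $\delta_{\PT{z}}$ onto $S$ (or onto $\mathbb{S}^d$) to obtain a measure $\widehat{\delta}_{\PT{z}}$ with $U_s^{\widehat{\delta}_{\PT{z}}}\le U_s^{\delta_{\PT{z}}}$ everywhere, $U_s^{\widehat{\delta}_{\PT{z}}}= U_s^{\delta_{\PT{z}}}$ q.e. on $S$, and total mass $\le 1$; this balayage exists precisely because for $d-2\le s<d$ the Riesz kernel on $\mathbb{S}^d$ admits balayage onto compact subsets (this is the content of the restricted domination results of \cite{DrSa2007}, and the sphere case is cleaner because $\CAP_s(\mathbb{S}^d)>0$). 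Then by Fubini,
\begin{equation*}
U_s^\mu(\PT{z}) = \int U_s^{\delta_{\PT{z}}}\dd\mu \ge \int U_s^{\widehat{\delta}_{\PT{z}}}\dd\mu = \int U_s^\mu \dd\widehat{\delta}_{\PT{z}} \le M\cdot\widehat{\delta}_{\PT{z}}(\mathbb{S}^d)\le M,
\end{equation*}
where the middle equality uses the energy-principle symmetry $\int U_s^{\delta_{\PT z}}\dd\mu=\int U_s^\mu\dd\delta_{\PT z}$ applied before and after balayage, and the final inequality uses $U_s^\mu\le M$ on $S$ together with $\supp(\widehat\delta_{\PT z})\subset S$ and the q.e.\ statement (exceptional sets have $s$-capacity zero hence $\widehat\delta_{\PT z}$-measure zero). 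This gives $U_s^\mu(\PT{z})\le M$, as desired.

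The main obstacle I anticipate is justifying the existence and the defining properties of the balayage measure $\widehat{\delta}_{\PT{z}}$ on the sphere in the full range $d-2\le s<d$ — in particular the inequality $U_s^{\widehat{\delta}_{\PT{z}}}\le U_s^{\delta_{\PT z}}$ everywhere (not just q.e.) and the mass bound $\widehat{\delta}_{\PT z}(\mathbb{S}^d)\le 1$. For $s<d-1$ this is classical, but for $d-1\le s<d$ one is above the ``forbidden'' exponent and must invoke the restricted principle of domination and the special geometry of $\mathbb{S}^d$ (one can reduce to a half-space / hyperplane model via inversion centered at a sphere point, turning $\mathbb{S}^d$ into $\mathbb{R}^d\cup\{\infty\}$ and the Riesz $s$-kernel into a Kelvin-transformed kernel on $\mathbb{R}^d$, where $s<d$ means the transformed problem is sub-dimensional and the classical maximum principle applies); controlling the Kelvin/Jacobian factors and the point at infinity is where the care is needed. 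A secondary technical point is handling the q.e.\ exceptional sets throughout — but since every set of $s$-capacity zero is null for every measure of finite $s$-energy, and $\widehat\delta_{\PT z}$ has finite energy, this is routine.
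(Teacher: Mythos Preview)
Your proposal has a genuine gap and a conceptual inversion that together leave the argument incomplete.

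\textbf{The inequality chain proves nothing.} As written, your display reads
\[
U_s^\mu(\PT{z}) \;=\; \int U_s^{\delta_{\PT{z}}}\dd\mu \;\ge\; \int U_s^{\widehat{\delta}_{\PT{z}}}\dd\mu \;=\; \int U_s^\mu \dd\widehat{\delta}_{\PT{z}} \;\le\; M,
\]
i.e.\ $A \ge C$ and $C \le M$, from which $A \le M$ does \emph{not} follow. The step you actually need is the \emph{equality} $\int U_s^{\delta_{\PT z}}\dd\mu = \int U_s^{\widehat{\delta}_{\PT z}}\dd\mu$, which holds because $U_s^{\widehat{\delta}_{\PT z}} = U_s^{\delta_{\PT z}}$ q.e.\ on $S$ and $\mu$ has finite $s$-energy (this last fact follows from $\int U_s^\mu\,\dd\mu \le M\|\mu\|$, but you never invoke it). Using the everywhere inequality $U_s^{\widehat{\delta}_{\PT z}} \le U_s^{\delta_{\PT z}}$ is the wrong direction here.

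\textbf{The problematic range is inverted.} You assert that the ambient maximum principle in $\mathbb{R}^{d+1}$ holds for $0<s\le d-1$ and fails for $s\ge d-1$. It is the other way around: Landkof's maximum and domination principles in $\mathbb{R}^{p}$ require $p-2\le s<p$, so with $p=d+1$ the ambient principle covers $d-1\le s<d$ directly. The new content of the theorem is precisely the sub-range $d-2\le s<d-1$, and there your remark ``for $s<d-1$ this is classical'' is false (e.g.\ for $\sigma_d$ on $\mathbb{S}^d\subset\mathbb{R}^{d+1}$ and $s<d-1$ the potential is strictly larger at the center than on the sphere).

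\textbf{Circularity of the balayage step.} Even after repairing the chain, everything rests on the existence of $\widehat{\delta}_{\PT z}=\bal_s(\delta_{\PT z},S)$ with $\|\widehat{\delta}_{\PT z}\|\le 1$ and $U_s^{\widehat{\delta}_{\PT z}}\le U_s^{\delta_{\PT z}}$ everywhere on $\mathbb{S}^d$. For $d-2\le s<d-1$ these properties are not available from the ambient theory; they are essentially equivalent to a domination principle for measures on $\mathbb{S}^d$, which is what you are trying to prove. You yourself note that one should ``reduce to a hyperplane model via inversion centered at a sphere point'' to justify this---and that reduction \emph{is} the paper's proof. The paper chooses a center $\PT{a}\in\mathbb{S}^d$ in the level set $\{U_s^\mu\le M\}$, introduces the comparison measure $\nu\DEF[M/W_s(\mathbb{S}^d)]\,\sigma_d$ with $U_s^\nu\equiv M$ on $\mathbb{S}^d$, applies the Kelvin transform with center $\PT{a}$ so that $\mathbb{S}^d$ becomes $\mathbb{R}^d$, observes that $U_s^{\mu^*}\le U_s^{\nu^*}$ holds $\mu^*$-a.e.\ with $U_s^{\nu^*}$ now $(d-s)$-superharmonic in $\mathbb{R}^d$, and invokes the domination principle in $\mathbb{R}^d$ (valid because $d-2\le s<d$ is exactly the range $p-2\le s<p$ with $p=d$). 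Transforming back and moving the center of inversion finishes the proof. Your proposal defers the only non-trivial step to this same Kelvin-transform argument; once you carry it out, the balayage detour is unnecessary.
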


An essential part of the analysis of external field problems is the determination the $s$-extremal (equilibrium) measure on $\mathbb{S}^d$ associated with the external field $Q$ and, in particular, its support. In principle, if the latter is known, the measure $\mu_{Q,s}$ can be recovered by solving an integral equation for the weighted $s$-potential of $\mu_{Q,s}$ arising from the variational inequalities \eqref{VarEq1} and \eqref{VarEq2}. A substantially easier problem is to find a (signed) measure that has constant weighted $s$-potential everywhere on $\mathbb{S}^d$. The solution of this problem turns out to be useful in solving the harder problem. This motivates the study of the signed equilibrium measure associated with an external field which is defined as follows.

\begin{defn} \label{def:signed.equilibrium}
Given a compact subset $K\subset \mathbb{R}^p$ ($p\geq 3$) and an external field $Q$, we call a signed measure $\eta_{K,Q}=\eta_{K,Q,s}$ supported on $K$ and of total charge $\eta_{K,Q}(K)=1$ {\em a signed $s$-equilibrium on $K$ associated with $Q$} if its weighted Riesz $s$-potential is constant on $K$; that is,
\begin{equation}  \label{signedeq}
U_s^{\eta_{K,Q}}(\PT{x}) + Q(\PT{x}) = \sgnEqconst_{K,Q,s} \qquad \text{for all $\PT{x} \in K$.}
\end{equation}
\end{defn}
We note that if a signed equilibrium exists, then it is unique (see \cite[Lemma 23]{BrDrSa2009}).

A remarkable connection exists to the Riesz analog of the \emph{Mhaskar-Saff $F$-functional} from classical logarithmic potential theory in the plane (see \cite{MhSa1985} and \cite[Chapter IV, p. 194]{SaTo1997}).
\begin{defn}
The \emph{$\mathcal{F}_s$-functional} of a compact subset $E \subset \mathbb{S}^d$ of positive $s$-capacity is defined as
\begin{equation} \label{Functional}
\mathcal{F}_s(E) \DEF W_s(E) + \int Q(\PT{x}) \, \dd \mu_E(\PT{x}),
\end{equation}
where $W_s(E)$ is the $s$-energy of $E$ and $\mu_E$ is the $s$-equilibrium measure (without external field) on $E$.
\end{defn}
Let $d - 2 \leq s < d$ with $s > 0$. If the signed equilibrium on a compact set $K \subset \mathbb{S}^d$ associated with $Q$ exists, then integration of \eqref{signedeq} with respect to $\mu_K$ shows that
\begin{equation} \label{eq:functional.identity}
\mathcal{F}_s( K ) = \sgnEqconst_{K,Q,s}.
\end{equation}
The essential property of the $\mathcal{F}_s$-functional is the following (cf. \cite[Theorem~9]{BrDrSa2009}).
\begin{prop} \label{prop:F.s.functional}
Let $d - 2 \leq s < d$ with $s > 0$ and $Q$ be an external field on a compact subset $K \subset \mathbb{S}^d$ with $\CAP_s( K ) > 0$. Then the $\mathcal{F}_s$-functional is minimized for the support of the $s$-extremal measure $\mu_{K,Q,s}$ on $K$ associated with $Q$; that is, for every compact subset $E \subset K$ with $\CAP_s( E ) > 0$,
\begin{equation*}
\mathcal{F}_s( E ) \geq \mathcal{F}_s( \supp( \mu_{K,Q,s} ) ) = F_{K,Q,s}. 
\end{equation*}
\end{prop}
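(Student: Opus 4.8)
The plan is to derive everything from the Frostman-type characterization of $\mu_{K,Q,s}$ in Proposition~\ref{prop:1} (in its version on the compact set $K$) together with the Sphere Maximum Principle of Theorem~\ref{thm:restr.max.principle}, applied to \emph{unweighted} equilibrium potentials. Throughout, write $\mu \DEF \mu_{K,Q,s}$, $S \DEF \supp(\mu)$, $F \DEF F_{K,Q,s}$; note that $\mu$ has finite $s$-energy (because $V_{Q,s}$ is finite and $Q$ is bounded below), so $S$ has positive $s$-capacity and the quantities $\mu_S$, $W_s(S)$, $\mathcal{F}_s(S)$ are all well defined. For a compact $E \subset K$ with $\CAP_s(E) > 0$, let $\mu_E$ be its $s$-equilibrium measure, so $\mathcal{I}_s(\mu_E) = W_s(E)$ and $\supp(\mu_E) \subseteq E \subset \mathbb{S}^d$.

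First I would verify the identity $\mathcal{F}_s(S) = F$. From \eqref{VarEq1} and \eqref{VarEq2} (on $K$) one has $U_s^{\mu}(\PT{x}) + Q(\PT{x}) = F$ for q.e.\ $\PT{x} \in S$; since $\mu_S$ has finite energy it does not charge sets of $s$-capacity zero, so integrating against $\mu_S$ gives $\int U_s^{\mu}\,\dd\mu_S + \int Q\,\dd\mu_S = F$. By Tonelli (the mutual energy is finite by Cauchy--Schwarz for energies), $\int U_s^{\mu}\,\dd\mu_S = \int U_s^{\mu_S}\,\dd\mu$, and since $U_s^{\mu_S} = W_s(S)$ q.e.\ on $S$ while $\mu$ is a finite-energy measure supported on $S$, this last integral equals $W_s(S)$. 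Hence $\mathcal{F}_s(S) = W_s(S) + \int Q\,\dd\mu_S = F$.

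Next comes the step where the sphere geometry enters, and which I expect to be the crux. I would first observe that $U_s^{\mu_E} = W_s(E)$ holds $\mu_E$-almost everywhere: indeed $U_s^{\mu_E} \ge W_s(E)$ q.e.\ on $E$ by Frostman, hence $\mu_E$-a.e.\ (as $\mu_E$ ignores capacity-zero sets), while $\int U_s^{\mu_E}\,\dd\mu_E = W_s(E)$, forcing equality $\mu_E$-a.e. In particular $U_s^{\mu_E}(\PT{x}) \le W_s(E)$ for $\mu_E$-a.e.\ $\PT{x} \in \mathbb{S}^d$, so the Sphere Maximum Principle (Theorem~\ref{thm:restr.max.principle}, valid for $d-2 \le s < d$) upgrades this to $U_s^{\mu_E}(\PT{x}) \le W_s(E)$ for \emph{every} $\PT{x} \in \mathbb{S}^d$, in particular for every $\PT{x} \in S$. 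This is the one genuinely non-routine point: for $s < d-1$ the classical maximum principle for Riesz potentials on general sets is unavailable, and it is precisely the sphere-restricted version that makes the argument go through in the extended range $d-2 \le s < d$.

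Finally I would assemble the inequality. Since $E \subset K$, the lower variational inequality \eqref{VarEq1} gives $U_s^{\mu}(\PT{x}) + Q(\PT{x}) \ge F$ q.e.\ on $E$; integrating against $\mu_E$ (again a capacity-zero-null measure) and using Tonelli,
\[
F \;\le\; \int U_s^{\mu}\,\dd\mu_E + \int Q\,\dd\mu_E \;=\; \int U_s^{\mu_E}\,\dd\mu + \int Q\,\dd\mu_E \;\le\; W_s(E) + \int Q\,\dd\mu_E \;=\; \mathcal{F}_s(E),
\]
where the last inequality is the global bound from the previous paragraph applied on $S = \supp(\mu)$. Together with $\mathcal{F}_s(S) = F$ this is exactly the claim $\mathcal{F}_s(E) \ge \mathcal{F}_s(\supp(\mu_{K,Q,s})) = F_{K,Q,s}$. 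The remaining details — finiteness of the mutual energies so Tonelli applies, the fact that finite-energy measures do not charge polar sets, and the standard Frostman properties of $\mu_E$ and $\mu_S$ — are routine potential theory and I would only sketch them.
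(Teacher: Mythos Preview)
The paper does not prove this proposition in the text; it simply cites \cite[Theorem~9]{BrDrSa2009}. Your argument is correct and is the standard Mhaskar--Saff argument, with the Sphere Maximum Principle (Theorem~\ref{thm:restr.max.principle}) supplying the key global bound $U_s^{\mu_E}\le W_s(E)$ on all of $\mathbb{S}^d$ in the extended range $d-2\le s<d$. This is exactly the device the paper deploys in its proof of the closely related Theorem~\ref{thm:main}, so your approach is fully aligned with the paper's methods.

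One minor remark: the assertion in your first step that $U_s^{\mu_S}=W_s(S)$ q.e.\ on $S$ also tacitly uses Theorem~\ref{thm:restr.max.principle}. Frostman alone gives only $U_s^{\mu_S}\ge W_s(S)$ q.e.\ on $S$ and $U_s^{\mu_S}\le W_s(S)$ on $\supp(\mu_S)$; extending the upper bound to all of $S$ (indeed to all of $\mathbb{S}^d$) is precisely the same sphere-maximum-principle step you spell out for $\mu_E$. Alternatively you can bypass this: from \eqref{VarEq2} alone, integrating against $\mu_S$ and using $\int U_s^{\mu_S}\,d\mu\ge W_s(S)$ (which needs only the q.e.\ lower Frostman bound) gives $\mathcal{F}_s(S)\le F$, while $\mathcal{F}_s(S)\ge F$ is the case $E=S$ of your final inequality.
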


Given a compact subset $K \subset \mathbb{S}^d$, the \emph{extended support $\widetilde{S}_{K,Q,s}$ of $\mu_{K,Q,s}$} is defined by
\begin{equation} \label{estlspt}
\widetilde{S}_{K,Q,s} \DEF \left\{ \PT{x} \in K : U_s^{\mu_{K,Q,s}}(\PT{x}) + Q(\PT{x}) \leq F_{K,Q,s} \right\}.
\end{equation}
The following theorem, which is the Riesz analog of \cite[Theorem~2.6]{DrSa1997} and \cite[Lemma 3]{KuDr1999}, establishes a relation between the extended support $\widetilde{S}_{Q,s}$ of $\mu_{Q,s}$ (by \eqref{VarEq2} this set contains the support of $\mu_{Q,s}$) and the support of the positive part $\eta_Q^+$ of the Jordan decomposition $\eta_Q^+ -\eta_Q^- $ of the signed equilibrium $\eta_Q=\eta_{\mathbb{S}^d,Q,s}$ on $\mathbb{S}^d$ associated with~$Q$.

\begin{thm}\label{thm:signsupp.2}
Let $d - 2 \leq s < d$ and suppose that $Q$ is an external field such that a signed $s$-equilibrium $\eta_\rho=\eta_{\Sigma_\rho,Q,s}$ on a spherical cap $\Sigma_\rho = \{ \PT{x} \in \mathbb{S}^d : | \PT{x} - \PT{p} | \geq \rho \}$ exists. Then
\begin{equation*} 
\mu_{Q,s} \big|_{\Sigma_\rho} \leq \eta^+_\rho \big|_{S_{Q,s}} \qquad \text{and} \qquad S_{Q,s} \cap \Sigma_\rho \subset \supp(\eta^+).
\end{equation*}
Furthermore, if $F_{Q,s} < \sgnEqconst_{\Sigma_\rho,Q,s}$, then $\widetilde{S}_{Q,s} \cap \Sigma_\rho \subset \supp ( \eta_\rho^+ )$.
\end{thm}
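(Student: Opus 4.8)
The plan is to exploit the defining property of the signed equilibrium $\eta_\rho$ on $\Sigma_\rho$—namely that its weighted $s$-potential equals the constant $\sgnEqconst_{\Sigma_\rho,Q,s}$ on all of $\Sigma_\rho$—together with the variational inequalities \eqref{VarEq1} and \eqref{VarEq2} for $\mu_{Q,s}$, and then to compare the two measures via a balayage/maximum-principle argument on the sphere. First I would write, for $\PT{x}\in\Sigma_\rho$,
\begin{equation*}
U_s^{\eta_\rho}(\PT{x}) - U_s^{\mu_{Q,s}}(\PT{x}) = \sgnEqconst_{\Sigma_\rho,Q,s} - Q(\PT{x}) - U_s^{\mu_{Q,s}}(\PT{x}) \geq \sgnEqconst_{\Sigma_\rho,Q,s} - F_{Q,s},
\end{equation*}
where the inequality uses \eqref{VarEq1} (valid q.e., hence good enough after the maximum principle is invoked). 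Decomposing $\eta_\rho = \eta_\rho^+ - \eta_\rho^-$, this says that the positive measure $\eta_\rho^+ + \mu_{Q,s}\big|_{\Sigma_\rho}$ has $s$-potential dominating that of $\eta_\rho^- $ up to an additive constant on $\Sigma_\rho$. The goal is to turn this potential-theoretic domination into the stated \emph{measure} inequality $\mu_{Q,s}\big|_{\Sigma_\rho} \leq \eta_\rho^+\big|_{S_{Q,s}}$.

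The key steps, in order, would be: (i) restrict attention to $S_{Q,s}\cap\Sigma_\rho$, on which \eqref{VarEq2} gives the reverse inequality $U_s^{\mu_{Q,s}}(\PT{x}) + Q(\PT{x}) \leq F_{Q,s}$ (equality where $\PT{x}\in S_{Q,s}$, in fact); (ii) combine with the signed-equilibrium identity to get that on $S_{Q,s}\cap\Sigma_\rho$ the measure $\nu \DEF \eta_\rho^+ - \mu_{Q,s}\big|_{\Sigma_\rho}$ (a signed measure) has weighted potential behaving so that its negative part is ``swept away''; (iii) apply the Sphere Maximum Principle of Theorem~\ref{thm:restr.max.principle} to the positive part of $(\mu_{Q,s}\big|_{\Sigma_\rho} - \eta_\rho^+)$—using that this difference has nonpositive $s$-potential on the support of that positive part—to conclude the positive part is the zero measure, which is exactly $\mu_{Q,s}\big|_{\Sigma_\rho} \leq \eta_\rho^+$; and (iv) note that since $\supp(\mu_{Q,s}\big|_{\Sigma_\rho}) \subset S_{Q,s}$, the bound may be localized to $\eta_\rho^+\big|_{S_{Q,s}}$. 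The support inclusion $S_{Q,s}\cap\Sigma_\rho \subset \supp(\eta^+)$ then follows since $\mu_{Q,s}$ puts positive mass on every (relatively open) neighborhood of each of its support points lying in $\Sigma_\rho$, and such mass is bounded above by $\eta_\rho^+$-mass, forcing those points into $\supp(\eta_\rho^+)$; one must also argue that $\supp(\eta_\rho^+)\subset\supp(\eta^+)$ for the equilibrium on the full sphere, presumably by an analogous comparison between $\eta_\rho$ and $\eta=\eta_{\mathbb{S}^d,Q,s}$ restricted to the cap.

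For the final assertion, assume $F_{Q,s} < \sgnEqconst_{\Sigma_\rho,Q,s}$. Here I would take $\PT{x}\in\widetilde{S}_{Q,s}\cap\Sigma_\rho$, so that $U_s^{\mu_{Q,s}}(\PT{x}) + Q(\PT{x}) \leq F_{Q,s} < \sgnEqconst_{\Sigma_\rho,Q,s} = U_s^{\eta_\rho}(\PT{x}) + Q(\PT{x})$, whence $U_s^{\mu_{Q,s}}(\PT{x}) < U_s^{\eta_\rho}(\PT{x})$ at that point. If such an $\PT{x}$ were outside $\supp(\eta_\rho^+)$, then near $\PT{x}$ the potential $U_s^{\eta_\rho}$ is (in the appropriate restricted sense) dominated by $U_s^{\eta_\rho^+}$ restricted to $\supp(\eta_\rho^+)$ plus a contribution from $-\eta_\rho^-$, and combined with the already-established measure inequality on $S_{Q,s}$ one derives a contradiction with the strict inequality via the maximum principle again. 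The \textbf{main obstacle} I anticipate is step (iii)/the final contradiction: the Sphere Maximum Principle as stated applies to a \emph{positive} measure whose potential is bounded $\mu$-a.e., so one must carefully construct the correct positive measure (a Jordan-type piece of a difference of two measures, possibly after a balayage onto $S_{Q,s}$ or onto $\Sigma_\rho$) and verify that its potential is genuinely bounded almost everywhere with respect to \emph{itself}, not merely with respect to one of the ingredient measures—handling the q.e.\ exceptional sets in \eqref{VarEq1} requires the maximum principle precisely to upgrade them, so the logical order of invocations must be arranged with care. A secondary technical point is ensuring all potentials involved are finite (finite-energy hypotheses), which should follow from $s < d$ and the explicit structure of equilibrium measures on spherical caps.
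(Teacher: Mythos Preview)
Your overall framing---combine the variational inequalities \eqref{VarEq1}--\eqref{VarEq2} with the constant-potential identity for $\eta_\rho$ on $\Sigma_\rho$ to obtain two-sided potential comparisons, then deduce a measure inequality---matches the paper's. But step~(iii) is where the argument breaks down, and this is a genuine gap, not a detail.

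The Sphere Maximum Principle (Theorem~\ref{thm:restr.max.principle}) cannot do what you ask of it. You propose to take $\tau \DEF (\mu_{Q,s}|_{\Sigma_\rho} - \eta_\rho^+)^+$ and argue that its potential is nonpositive on its own support, hence $\tau=0$. But $U_s^\tau$ is \emph{not} equal to $U_s^{\mu_{Q,s}|_{\Sigma_\rho}} - U_s^{\eta_\rho^+}$; the potential of a Jordan-positive part is not the difference of the potentials. What you actually have, after combining \eqref{VarEq1}--\eqref{VarEq2} with the signed-equilibrium identity, are inequalities of the form $U_s^{\nu} \geq U_s^{\eta_\rho^+}$ q.e.\ on $\Sigma_\rho$ and $U_s^{\nu} \leq U_s^{\eta_\rho^+}$ on $\widetilde{S}_{Q,s}\cap\Sigma_\rho$, for a suitable positive measure $\nu$ built from $\mu_{Q,s}$, $\eta_\rho^-$, and (to absorb the constant $\sgnEqconst_{\Sigma_\rho,Q,s}-F_{Q,s}$) a multiple of $\bal_s(\sigma_d,\Sigma_\rho)$. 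Converting such two-sided \emph{potential} inequalities into a \emph{measure} inequality $\nu|_A \leq \eta_\rho^+|_A$ requires a de~La~Vall\'ee~Poussin--type theorem (see Fuglede~\cite{Fu1992} and Janssen~\cite{Ja2000}), not merely a maximum principle. This is the key tool you are missing.

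The paper carries this out by first applying a Kelvin transformation centered at a point of $\mathbb{S}^d\setminus\supp(\eta_\rho^+)$, which sends the problem to $\mathbb{R}^d$ where the classical principle of domination (Proposition~\ref{prop:Thm1.29}) upgrades the q.e.\ inequality to a global one, and where the de~La~Vall\'ee~Poussin theorem is available. Pulling back then gives $\nu|_{\widetilde{S}_{Q,s}\cap\Sigma_\rho} \leq \eta_\rho^+|_{\widetilde{S}_{Q,s}\cap\Sigma_\rho}$, from which all three conclusions follow immediately (the last using the strict inequality $F_{Q,s}<\sgnEqconst_{\Sigma_\rho,Q,s}$ to ensure the balayage term is genuinely present). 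Incidentally, the ``$\supp(\eta^+)$'' in the statement is a typographical slip for $\supp(\eta_\rho^+)$; there is no separate full-sphere signed equilibrium to compare to, so that part of your outline is unnecessary.
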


\begin{rmk}
The theorem remains true if the $s$-extremal measure on a compact subset $K \subset \mathbb{S}^d$ with $\CAP_s( K ) > 0$ and signed $s$-equilibria $\eta_E$  on compact subsets $E \subset \mathbb{S}^d$ with ${\supp( \eta_E^+ ) \subset K}$ are considered.
\end{rmk}

The characterization results for the shape of the support of the $s$-extremal measure on $\mathbb{S}^d$ associated with a rotational symmetric external field given in \cite[Theorem~10]{BrDrSa2009} immediately carry over to the external fields with extended range.
\begin{prop} \label{prop:ConnThm} Let $d-2\leq s<d$ with $s>0$ and the external field $Q:\mathbb{S}^d\to (-\infty,\infty]$ be rotationally invariant about the polar axis; that is, $Q(\PT{z})=f(\xi)$, where $\xi$ is the altitude of $\PT{z}=(\sqrt{1-\xi^2}\; \overline{\PT{z}},\xi)$, $\overline{\PT{z}} \in \mathbb{S}^{d-1}$. Suppose that $f$ is a convex function on $[-1,1]$. Then the support of the $s$-extremal measure $\mu_Q$ on $\mathbb{S}^d$ is a spherical zone; namely, there are numbers $-1\leq t_1\leq t_2\leq 1$ such that
\begin{equation} \label{eqsupp}
\supp(\mu_Q) = \Sigma_{t_1,t_2} \DEF \{ (\sqrt{1-u^2}\, \overline{\PT{x}},u)\ :\ t_1 \leq u \leq t_2, \, \overline{\PT{x}}\in \mathbb{S}^{d-1} \}.
\end{equation}
Moreover, if additionally $f$ is increasing, then $t_1=-1$ and the support of $\mu_Q$ is a spherical cap centered at the South Pole.
\end{prop}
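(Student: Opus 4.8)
The plan is to show that the proof of \cite[Theorem~10]{BrDrSa2009} transfers to the extended range $d-2\le s<d$. The only potential-theoretic inputs of that argument are the Frostman characterization (Proposition~\ref{prop:1}), the analysis of the (essential) extrema of the weighted potential (Theorem~\ref{thm:main}), and a maximum principle for measures on the sphere; the first two are stated here for all $0<s<d$, and wherever \cite{BrDrSa2009} invoked the domination principle in the Newtonian-type range $d-1\le s<d$ one now uses the Sphere Maximum Principle (Theorem~\ref{thm:restr.max.principle}) and the restricted domination principle. I outline the argument for completeness. Since $Q$ is rotationally invariant about the polar axis and the $s$-extremal measure $\mu_Q$ is unique (Proposition~\ref{prop:1}(b)), $\mu_Q$ is itself rotationally invariant, so $\supp(\mu_Q)$ is a union of parallels; that is, $\supp(\mu_Q)=\{(\sqrt{1-u^2}\,\overline{\PT x},u):u\in A,\ \overline{\PT x}\in\mathbb{S}^{d-1}\}$ for a compact set $A\subseteq[-1,1]$. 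Put $t_1\DEF\min A$ and $t_2\DEF\max A$. It remains to prove (i) $A=[t_1,t_2]$ and (ii) $t_1=-1$ whenever $f$ is increasing.

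For this I push $\mu_Q$ forward under the altitude map to obtain $\widehat\mu\in\mathcal{M}([-1,1])$, so that $U_s^{\mu_Q}(\PT z)=\int_{-1}^1\Phi_s(\xi,u)\,d\widehat\mu(u)$, where $\xi$ is the altitude of $\PT z$ and $\Phi_s(\xi,u)$ denotes the Riesz $s$-potential at altitude $\xi$ of the normalized $\mathbb{S}^{d-1}$-invariant probability measure on the parallel at altitude $u$. Writing $g(\xi)\DEF U_s^{\mu_Q}(\PT z)+Q(\PT z)$, the variational inequalities \eqref{VarEq1}--\eqref{VarEq2} say that $g\ge F_{Q,s}$ q.e.\ on $[-1,1]$ and $g\le F_{Q,s}$ on $A$; together with lower semicontinuity and the fact that a nonempty relatively open spherical zone has positive $s$-capacity, this gives $g\equiv F_{Q,s}$ on $A$ and $g$ continuous across $A$ from outside. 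The single analytic ingredient is a property of the zonal kernel: \emph{for each fixed $u$, the map $\xi\mapsto\Phi_s(\xi,u)$ is strictly increasing and strictly convex on $[-1,u)$, and strictly decreasing and strictly convex on $(u,1]$.} This follows by differentiating a Gauss-hypergeometric integral representation of $\Phi_s$ as in \cite{BrDrSa2009}, the computation being uniform over $d-2\le s<d$; the monotonicity is a short correlation-inequality estimate, while the sign of $\partial_\xi^2\Phi_s$ requires a more careful analysis.

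Granting this, part (i) is proved by excluding gaps. If $A\neq[t_1,t_2]$, there is a component $(a,b)$ of the open set $[t_1,t_2]\setminus A$ with $a,b\in A$. Since $\mu_Q$ has no mass at altitudes in $(a,b)$, on that interval $U_s^{\mu_Q}=\int_{[-1,1]\setminus(a,b)}\Phi_s(\cdot,u)\,d\widehat\mu(u)$, and every $u$ in the domain of integration satisfies $(a,b)\subseteq[-1,u)$ or $(a,b)\subseteq(u,1]$; hence $U_s^{\mu_Q}$, and therefore $g=U_s^{\mu_Q}+f$, is strictly convex on $(a,b)$. But $g\ge F_{Q,s}$ on $(a,b)$ while $g(a)=g(b)=F_{Q,s}$, so strict convexity forces $g<F_{Q,s}$ on $(a,b)$, a contradiction; thus $\supp(\mu_Q)=\Sigma_{t_1,t_2}$. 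For part (ii), suppose $f$ is increasing and $t_1>-1$. On $[-1,t_1)$ there is no mass of $\mu_Q$, and every support altitude is $\ge t_1>\xi$ there, so the zonal-kernel lemma gives that $\xi\mapsto U_s^{\mu_Q}$ is strictly increasing on $[-1,t_1)$; with $f$ non-decreasing, $g$ is strictly increasing on $[-1,t_1)$, and $g\to F_{Q,s}$ as the altitude tends to $t_1$ (by continuity of the weighted equilibrium potential across $A$). Hence $g<F_{Q,s}$ on a spherical zone of positive $s$-capacity just below the parallel at $t_1$, contradicting \eqref{VarEq1}. Therefore $t_1=-1$.

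The main obstacle is the zonal-kernel lemma, specifically establishing $\partial_\xi^2\Phi_s(\xi,u)>0$ for $\xi\neq u$ uniformly over $d-2\le s<d$; the second-derivative estimate is delicate, in particular in the sub-range $d-2\le s<d-1$ that the Newtonian-type arguments of \cite{BrDrSa2009} do not reach. A secondary, more routine obstacle is the semicontinuity/capacity bookkeeping needed to pass between the ``q.e.'' and ``everywhere'' forms of the variational inequalities near a single parallel—which has zero $s$-capacity once $s\ge d-1$—and this is precisely the step for which the Sphere Maximum Principle (Theorem~\ref{thm:restr.max.principle}) and the restricted domination principle are designed.
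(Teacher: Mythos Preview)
Your proposal is correct and matches the paper's approach exactly: the paper gives no separate proof but simply asserts (in the sentence preceding the proposition) that \cite[Theorem~10]{BrDrSa2009} ``immediately carr[ies] over'' to the extended range $d-2\le s<d$, and your outline is a faithful unpacking of that argument. One small correction of emphasis: you identify the zonal-kernel convexity $\partial_\xi^2\Phi_s>0$ as the ``main obstacle,'' but that lemma is a purely analytic statement about a hypergeometric integral and is established in \cite{BrDrSa2009} without any range restriction arising from potential theory---the only genuinely new ingredient required to push the proof from $d-1\le s<d$ down to $d-2\le s<d$ is the extended characterization Theorem~\ref{thm:main} (equivalently, the Sphere Maximum Principle), which you also correctly single out.
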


Next we focus on the discretized version of the Riesz external field problem given in Definition~\ref{def:external.field.problem}.
Recall that the normalized counting measure associated with an $n$-point set $\PSET_n = \{
\PT{x}_1,\PT{x}_2,\dots,\PT{x}_n \}$ is defined as
\begin{equation*}
\mu_{\PSET_n} \DEF \frac{1}{n} \sum_{j=1}^n \delta_{\PT{x}_j},
\end{equation*}
where $\delta_{\PT{x}}$ is the Dirac-delta measure with unit mass at $\PT{x}$. The continuous and discrete external field minimization problems are related in the following way.
\begin{prop}
Let $0 < s < d$ or $s = \log$. Then
\begin{equation*}
\lim_{n \to \infty} \frac{\mathcal{E}^Q_s(n)}{n^2} = V_{Q,s} = I_{Q,s}(\mu_{Q,s}).
\end{equation*}
Furthermore, if $\{ \PSET_{n,Q,s} \}_{n=2}^\infty$ is any sequence of $n$-point $(Q,s)$-Fekete sets on $\mathbb{S}^d$ (see Definition~\ref{def:discrete.external.field.problem}), then the sequence of the normalized counting measures $\mu_{\PSET_{n,Q,s}}$ associated with $\PSET_{n,Q,s}$ converges in the weak-star sense to the $s$-extremal measure $\mu_{Q,s}$.
\end{prop}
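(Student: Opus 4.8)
The plan is to establish the two assertions --- the energy limit and the weak-star convergence of Fekete counting measures --- by the standard Gamma-convergence-style argument adapted from \cite[Chapter~III]{SaTo1997}, exploiting the variational characterization already available from Proposition~\ref{prop:1}. First I would prove the upper bound $\limsup_{n\to\infty}\mathcal{E}^Q_s(n)/n^2 \le V_{Q,s}$. For this, fix $\eps>0$ and choose a measure $\mu\in\mathcal{M}(\mathbb{S}^d)$ with $I_{Q,s}(\mu)<V_{Q,s}+\eps$ and, if necessary (when $Q$ is only lower semi-continuous and $\mathcal{I}_s(\mu)$ could involve the diagonal), a measure whose support avoids the poles of $Q$ and whose $s$-energy is finite; a density argument (mollifying $\mu$ with rotations, or restricting to $E_M$ and spreading mass slightly) gives such a $\mu$ with $\mu\times\mu$ assigning no mass to the diagonal. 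Then sample $n$ points according to $\mu\times\cdots\times\mu$ and compute the expected value of $E^Q_s(\PSET_n)$: the double sum over $j\ne k$ of $k_s(\PT{x}_j,\PT{x}_k)$ has expectation $n(n-1)\mathcal{I}_s(\mu)$, while the external-field terms contribute $2n(n-1)\int Q\,d\mu$, so the expected discrete energy is $n(n-1)I_{Q,s}(\mu)$. Hence some configuration does at least as well, giving $\mathcal{E}^Q_s(n)\le n(n-1)I_{Q,s}(\mu) \le n(n-1)(V_{Q,s}+\eps)$, and dividing by $n^2$ and letting $n\to\infty$ then $\eps\to0$ yields the $\limsup$ bound.

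Next I would prove the matching lower bound together with the weak-star statement in one stroke. Let $\{\PSET_{n,Q,s}\}$ be any sequence of $(Q,s)$-Fekete sets and $\nu_n\DEF\mu_{\PSET_{n,Q,s}}$ the associated normalized counting measures. By weak-star compactness of $\mathcal{M}(\mathbb{S}^d)$, pass to a subsequence with $\nu_n\weakstarto\nu$ for some $\nu\in\mathcal{M}(\mathbb{S}^d)$. The key tool is the standard truncation/lower-semicontinuity estimate: for each $L>0$ set $k_s^{(L)}\DEF\min\{k_s,L\}$, a bounded continuous kernel on $\mathbb{S}^d\times\mathbb{S}^d$. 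Then
\begin{align}
\frac{E^Q_s(\PSET_{n,Q,s})}{n^2}
&\ge \frac{1}{n^2}\mathop{\sum_{j\ne k}} k_s^{(L)}(\PT{x}_j,\PT{x}_k) + \frac{2(n-1)}{n}\int Q\dd\nu_n \notag\\
&= \iint k_s^{(L)}\dd\nu_n\dd\nu_n - \frac{L}{n} + \frac{2(n-1)}{n}\int Q\dd\nu_n. \notag
\end{align}
Now let $n\to\infty$ along the subsequence: the first term tends to $\iint k_s^{(L)}\dd\nu\dd\nu$ by weak-star convergence of $\nu_n\times\nu_n$, the $L/n$ term vanishes, and $\liminf_n\int Q\dd\nu_n\ge\int Q\dd\nu$ by lower semi-continuity of $Q$ and the portmanteau theorem (here one also uses that $\supp(\nu_n)\subset E_M$ eventually --- which follows because Fekete points for a finite external field cannot cluster where $Q=+\infty$ --- or one simply uses that $Q\ge c_Q$ is bounded below). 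Thus $\liminf_n \mathcal{E}^Q_s(n)/n^2 \ge \iint k_s^{(L)}\dd\nu\dd\nu + 2\int Q\dd\nu$, and letting $L\to\infty$, monotone convergence gives $\liminf_n\mathcal{E}^Q_s(n)/n^2 \ge \mathcal{I}_s(\nu) + 2\int Q\dd\nu = I_{Q,s}(\nu) \ge V_{Q,s}$.

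Combining the two bounds forces $\lim_n\mathcal{E}^Q_s(n)/n^2 = V_{Q,s}$ and, along the subsequence, $I_{Q,s}(\nu)=V_{Q,s}$, so $\nu$ is an $s$-extremal measure; by the uniqueness in Proposition~\ref{prop:1}(b) (respectively its logarithmic analogue) we get $\nu=\mu_{Q,s}$. Since every weak-star convergent subsequence of $\{\nu_n\}$ has the same limit $\mu_{Q,s}$ and $\mathcal{M}(\mathbb{S}^d)$ is weak-star compact and metrizable, the whole sequence converges: $\nu_n\weakstarto\mu_{Q,s}$. The identity $V_{Q,s}=I_{Q,s}(\mu_{Q,s})$ is just the definition of the extremal measure. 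For the logarithmic case $s=\log$ the argument is identical after replacing $k_s$ by $k_{\log}$; the truncation is now from below (i.e.\ $\max\{k_{\log},-L\}$) since $\log(1/r)\to-\infty$ as $r\to2$, but the kernel is bounded above on $\mathbb{S}^d$, so the same lower-semicontinuity machinery applies.

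The main obstacle I anticipate is the treatment of the external-field term when $Q$ is merely lower semi-continuous and possibly $+\infty$ on a nonnegligible-looking closed set: one must make sure (i) in the upper-bound step that a comparison measure $\mu$ with $I_{Q,s}(\mu)$ close to $V_{Q,s}$, finite $s$-energy, and no diagonal mass actually exists --- this uses Proposition~\ref{prop:1}(b), which confines $\supp(\mu_{Q,s})$ to $E_M$ where $Q\le M<\infty$, together with a routine smoothing --- and (ii) in the lower-bound step that $\liminf\int Q\dd\nu_n\ge\int Q\dd\nu$ is legitimate; the cleanest route is to note $Q$ is bounded below by $c_Q$ and lower semi-continuous, hence a monotone increasing limit of continuous functions $Q_m\nearrow Q$, apply weak-star convergence to each $Q_m$, and let $m\to\infty$ by monotone convergence. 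Everything else is the bookkeeping of the classical minimal-energy argument.
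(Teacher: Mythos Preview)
Your proposal is correct and is precisely the ``standard argument'' the paper alludes to; the paper gives no detailed proof beyond the sentence ``The proof follows from a standard argument and utilizes the uniqueness result stated in Proposition~\ref{prop:1}(b),'' and your upper-bound-by-averaging, lower-bound-by-truncation, plus uniqueness-to-identify-the-limit scheme is exactly that argument. One inconsequential slip: in the logarithmic case the kernel $k_{\log}(\PT{x},\PT{y})=\log(1/|\PT{x}-\PT{y}|)$ is bounded below on $\mathbb{S}^d$ (since $|\PT{x}-\PT{y}|\le 2$ gives $k_{\log}\ge -\log 2$, not $-\infty$), so no separate truncation from below is needed and the same upper truncation $\min\{k_{\log},L\}$ works verbatim.
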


The proof follows from a standard argument and utilizes the uniqueness result stated in Proposition~\ref{prop:1}(b).

We are interested in determining sets that contain all the $(Q,s)$-Fekete sets. For this purpose it is useful to investigate the \emph{weighted $s$-potential} of the normalized counting measure $\mu_{\PSET_n}$ which is defined as
\begin{equation} \label{DiscrPot}
h_{\PSET_n}(\PT{x}) \DEF U_s^{\mu_{\PSET_n}}(\PT{x}) + Q(\PT{x}) = \frac{1}{n} \sum_{j=1}^n
\frac{1}{\left|\PT{x}-\PT{x}_j\right|^s} + Q(\PT{x}), \qquad \PT{x} \in \mathbb{S}^d.
\end{equation}
As an application of Theorem~\ref{thm:main} we deduce the following result.

\begin{thm} \label{thm:main2}
Let $d-2 \leq s<d$. Let $\PSET_n \subset \mathbb{S}^d$ be a set of
$n$ distinct points, and suppose that, for some constant $M$, the
associated weighted potential satisfies the inequality
\begin{equation} \label{th7.1}
h_{\PSET_n} (\PT{x}) \geq M \qquad \text{q.e. on
$S_{Q,s}={\supp}(\mu_{Q,s})$}.
\end{equation}
Then (cf. \eqref{VarConst})
\begin{equation} \label{th7.2}
U^{\mu_{\PSET_n}}_s (\PT{x})\geq M + U^{\mu_{Q,s}}_s (\PT{x}) - F_{Q,s} \qquad \text{everywhere on $\mathbb{S}^d$.}
\end{equation}
Furthermore,
\begin{equation} \label{th7.3}
h_{\PSET_n}(\PT{x}) \geq M  \qquad \text{q.e. on $\mathbb{S}^d$.}
\end{equation}
\end{thm}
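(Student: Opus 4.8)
The plan is to translate the hypothesis \eqref{th7.1} into a pointwise potential inequality valid $\mu_{Q,s}$-almost everywhere, to promote it to an inequality valid on all of $\mathbb{S}^d$ via a principle of domination, and then to deduce \eqref{th7.3} from \eqref{th7.2} together with the lower variational inequality \eqref{VarEq1}. Theorem~\ref{thm:main} enters only to certify that an additive constant appearing along the way has the correct sign, which is what makes this ``an application'' of that theorem.

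First I would reformulate the hypothesis. Combining \eqref{VarEq1} (restricted to $S_{Q,s}$) with \eqref{VarEq2} gives $U_s^{\mu_{Q,s}}(\PT{x}) + Q(\PT{x}) = F_{Q,s}$ quasi-everywhere on $S_{Q,s}$; since $\mu_{Q,s}$ has finite $s$-energy it charges no set of $s$-capacity zero, so this identity --- and likewise \eqref{th7.1} --- holds $\mu_{Q,s}$-almost everywhere. Subtracting, \eqref{th7.1} is equivalent to
\begin{equation*}
U_s^{\mu_{Q,s}}(\PT{x}) \le U_s^{\mu_{\PSET_n}}(\PT{x}) + \bigl( F_{Q,s} - M \bigr) \qquad \text{for $\mu_{Q,s}$-a.e.\ $\PT{x}$.}
\end{equation*}
Next, applying inequality \eqref{eq:essinf} of Theorem~\ref{thm:main} with $\lambda = \mu_{\PSET_n}$ gives $\essinf_{\PT{x} \in S_{Q,s}} h_{\PSET_n}(\PT{x}) \le F_{Q,s}$, whereas \eqref{th7.1} gives $\essinf_{\PT{x} \in S_{Q,s}} h_{\PSET_n}(\PT{x}) \ge M$; hence $M \le F_{Q,s}$, so that the constant $c = F_{Q,s} - M$ is nonnegative.

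Now I would invoke the principle of domination on $\mathbb{S}^d$ for the range $d - 2 \le s < d$ (the classical Riesz domination principle for $d-1 \le s < d$, the restricted version \cite[Lemma~5.1]{DrSa2007} for $d-2 < s < d$, and in general a consequence of the sphere maximum principle, Theorem~\ref{thm:restr.max.principle}) applied with the finite-energy measure $\mu_{Q,s}$, the (possibly atomic, infinite-energy) dominating measure $\mu_{\PSET_n}$, and the nonnegative constant $c$. This upgrades the displayed inequality from ``$\mu_{Q,s}$-a.e.'' to ``everywhere on $\mathbb{S}^d$'', which is exactly \eqref{th7.2}. Finally, adding $Q(\PT{x})$ to both sides of \eqref{th7.2} and using \eqref{VarEq1} gives, for q.e.\ $\PT{x} \in \mathbb{S}^d$,
\begin{equation*}
h_{\PSET_n}(\PT{x}) \ge M + \bigl( U_s^{\mu_{Q,s}}(\PT{x}) + Q(\PT{x}) - F_{Q,s} \bigr) \ge M,
\end{equation*}
which is \eqref{th7.3}.

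The main obstacle is the ``$\mu_{Q,s}$-a.e.\ $\Rightarrow$ everywhere'' passage: $\mu_{\PSET_n}$ has infinite $s$-energy and its potential blows up at each node $\PT{x}_j$, and for $d-2 \le s < d-1$ the Riesz kernel has order exceeding $2$, so the standard domination and maximum principles are not directly available; one must lean on the restricted machinery built around Theorem~\ref{thm:restr.max.principle}, and verify that the form invoked here draws its conclusion only on $\mathbb{S}^d$ (not on all of $\mathbb{R}^{d+1}$) and tolerates an atomic dominating measure. The logarithmic endpoint $s = d-2 = 0$ (possible only for $d = 2$) is handled analogously, with the logarithmic analogues of \eqref{VarEq1}--\eqref{VarEq2} and of the domination principle on $\mathbb{S}^2$.
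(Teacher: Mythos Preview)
Your proposal is correct and follows the same architecture as the paper's proof: use Theorem~\ref{thm:main} to certify $M \le F_{Q,s}$, rewrite \eqref{th7.1} via \eqref{VarEq2} as $U_s^{\mu_{Q,s}} \le U_s^{\mu_{\PSET_n}} + (F_{Q,s} - M)$ q.e.\ (hence $\mu_{Q,s}$-a.e.) on $S_{Q,s}$, upgrade to all of $\mathbb{S}^d$ by a domination argument, and deduce \eqref{th7.3} from \eqref{VarEq1}.

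The only substantive difference is in how the domination step is executed. You invoke an off-the-shelf ``sphere domination principle'' and flag that its existence in the needed form (atomic dominating measure, conclusion restricted to $\mathbb{S}^d$) must be verified. The paper instead makes this step fully explicit: it absorbs the nonnegative constant $F_{Q,s} - M$ into the potential of a measure $\lambda$ (the multiple of $\sigma_d$ with $U_s^\lambda \equiv F_{Q,s} - M$ on $\mathbb{S}^d$), picks a Kelvin center $\PT{a} \in \mathbb{S}^d \setminus S_{Q,s}$ (when $S_{Q,s} \subsetneq \mathbb{S}^d$; the case $S_{Q,s} = \mathbb{S}^d$ is handled by lower semi-continuity), transfers the inequality $U_s^{\mu_{Q,s}} \le U_s^{\mu_{\PSET_n} + \lambda}$ to $\mathbb{R}^d$, and then applies the classical domination principle (Proposition~\ref{prop:Thm1.29}) there --- exactly the device used to prove Theorem~\ref{thm:restr.max.principle} itself. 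This is what you would need to do to discharge the obstacle you identified; it is not a direct corollary of Theorem~\ref{thm:restr.max.principle} as stated (that result only compares to a constant, not to another potential), but rather a parallel application of its proof technique.
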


We point out that this is an extension of \cite[Theorem
1.7]{DrSa2007}, which, as with Theorem~\ref{thm:main} above, was
originally established for $d-1\leq s<d$. As in \cite[Corollary
1.9]{DrSa2007}, Theorems \ref{thm:main} and \ref{thm:main2} yield the following.

\begin{cor} \label{cor} 
For $d-2 \leq s < d$, every $(Q,s)$-Fekete set is contained in the extended support $\widetilde{S}_{Q,s}$.
\end{cor}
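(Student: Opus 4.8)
The plan is to combine Theorem~\ref{thm:main2} with the characterization of $\mu_{Q,s}$ from Proposition~\ref{prop:1}. Let $\PSET_n = \PSET_{n,Q,s}$ be an $n$-point $(Q,s)$-Fekete set and let $\mu_{\PSET_n}$ be its normalized counting measure with weighted potential $h_{\PSET_n}$ as in \eqref{DiscrPot}. The first step is to record the elementary variational property of Fekete points: if $\PT{x}_{j_0}$ is any of the Fekete points, then moving it to an arbitrary point $\PT{x} \in \mathbb{S}^d$ cannot decrease the discrete weighted energy $E_s^Q$. Writing out $E_s^Q(\PSET_n)$ from \eqref{eq:discrete.Riesz.energy} and isolating the terms involving $\PT{x}_{j_0}$, this optimality gives, after dividing by $2(n-1)$,
\begin{equation*}
\frac{1}{n-1}\sum_{k \neq j_0} k_s(\PT{x}_{j_0},\PT{x}_k) + Q(\PT{x}_{j_0}) \;\leq\; \frac{1}{n-1}\sum_{k \neq j_0} k_s(\PT{x},\PT{x}_k) + Q(\PT{x}) \qquad \text{for all } \PT{x} \in \mathbb{S}^d.
\end{equation*}
Thus the left-hand side is a lower bound $M_{j_0}$ for the quantity $\tfrac{1}{n-1}\sum_{k\neq j_0} k_s(\PT{x},\PT{x}_k) + Q(\PT{x})$ on all of $\mathbb{S}^d$. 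Passing from the sum over $k \neq j_0$ to the full normalized sum (which only adds the nonnegative term $\tfrac{1}{n}k_s(\PT{x},\PT{x}_{j_0})$ after rescaling $\tfrac{1}{n-1}$ to $\tfrac{1}{n}$ and using $Q \geq c_Q$ suitably), one obtains an inequality of the form $h_{\PSET_n}(\PT{x}) \geq M$ holding \emph{everywhere} on $\mathbb{S}^d$, hence in particular q.e.\ on $S_{Q,s}$, for an appropriate constant $M$; the precise bookkeeping here is routine but needs care with the $n/(n-1)$ factors and the sign of $Q$.

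The second step is to apply Theorem~\ref{thm:main2}: since hypothesis \eqref{th7.1} holds with this $M$, conclusion \eqref{th7.2} gives
\begin{equation*}
U_s^{\mu_{\PSET_n}}(\PT{x}) \;\geq\; M + U_s^{\mu_{Q,s}}(\PT{x}) - F_{Q,s} \qquad \text{everywhere on } \mathbb{S}^d,
\end{equation*}
and therefore $h_{\PSET_n}(\PT{x}) = U_s^{\mu_{\PSET_n}}(\PT{x}) + Q(\PT{x}) \geq M + [U_s^{\mu_{Q,s}}(\PT{x}) + Q(\PT{x})] - F_{Q,s}$ everywhere on $\mathbb{S}^d$. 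The third step is to evaluate this at a generic Fekete point $\PT{x}_{j_0}$ and compare with the upper bound built into the definition of the Fekete set. From the same one-point variational argument, testing against $\PT{x} = \PT{x}_j$ for $j \neq j_0$ and averaging, or more directly from the fact that $\PSET_n$ minimizes $E_s^Q$ one shows $h_{\PSET_n}(\PT{x}_{j_0})$ is not too large; the cleanest route is to observe that $M = \min_j M_j$ and that $h_{\PSET_n}(\PT{x}_{j_0}) \le M_{j_0} + \tfrac{1}{n}k_s(\PT{x}_{j_0},\PT{x}_{j_0})$ is not literally available (the kernel is infinite there), so instead one argues that each $\PT{x}_{j_0} \in S_{Q,s}$ asymptotically or uses Proposition~\ref{prop:1}(c): on $S_{Q,s}$ one has $U_s^{\mu_{Q,s}} + Q \le F_{Q,s}$, which combined with the reverse inequality forces $U_s^{\mu_{Q,s}}(\PT{x}_{j_0}) + Q(\PT{x}_{j_0}) \ge F_{Q,s}$ once we know $h_{\PSET_n}(\PT{x}_{j_0}) \le F_{Q,s}$; hence $\PT{x}_{j_0}$ lies in $\widetilde{S}_{Q,s}$ as defined in \eqref{estlspt}.

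To make the last step clean I would instead follow the argument of \cite[Corollary~1.9]{DrSa2007}: combine \eqref{th7.3} (which gives $h_{\PSET_n} \ge M$ q.e.\ on $\mathbb{S}^d$) with the defining one-point inequality at $\PT{x}_{j_0}$ to deduce $M \le h_{\PSET_n}(\PT{x})$ q.e., and then integrate the weighted potential against $\mu_{Q,s}$: since $\int (U_s^{\mu_{\PSET_n}} + Q)\,d\mu_{Q,s} = \int (U_s^{\mu_{Q,s}} + Q)\,d\mu_{\PSET_n} \le F_{Q,s}$ by symmetry of the energy integral and \eqref{VarEq1} together with $\mu_{Q,s}(S_{Q,s})=1$, one gets $M \le F_{Q,s}$. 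Plugging $M \le F_{Q,s}$ back into the everywhere-inequality $h_{\PSET_n}(\PT{x}) \geq M + [U_s^{\mu_{Q,s}}(\PT{x}) + Q(\PT{x})] - F_{Q,s}$ shows that wherever $h_{\PSET_n}(\PT{x}) \le F_{Q,s}$ we must have $U_s^{\mu_{Q,s}}(\PT{x}) + Q(\PT{x}) \le F_{Q,s}$, i.e.\ $\PT{x} \in \widetilde{S}_{Q,s}$; and the one-point optimality shows precisely that $h_{\PSET_n}(\PT{x}_{j_0}) \le F_{Q,s}$ at each Fekete point (this is where $M \le F_{Q,s}$ and $M_{j_0} \le$ the weighted potential evaluated on the support of $\mu_{Q,s}$, which is $\le F_{Q,s}$ q.e., get used). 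Hence every $\PT{x}_{j_0} \in \widetilde{S}_{Q,s}$, which is the assertion of the corollary. The main obstacle is the bookkeeping in the first and last steps — correctly handling the $\tfrac{1}{n}$ versus $\tfrac{1}{n-1}$ normalizations, the infinite self-interaction term, and pinning down that the constant $M$ arising from the discrete variational inequality is the same one that must be $\le F_{Q,s}$; none of this is deep, but it is the part where an error would most easily creep in.
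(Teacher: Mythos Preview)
Your proposal contains a genuine gap, and it is exactly the one you yourself keep circling around: you apply Theorem~\ref{thm:main2} to the \emph{full} $n$-point set $\PSET_n$, and then try to extract information by evaluating at a Fekete point $\PT{x}_{j_0}$. But $h_{\PSET_n}(\PT{x}_{j_0}) = U_s^{\mu_{\PSET_n}}(\PT{x}_{j_0}) + Q(\PT{x}_{j_0}) = +\infty$ because of the self-interaction term, so every inequality you write down with $h_{\PSET_n}(\PT{x}_{j_0})$ on the small side is vacuous. Your various attempts in the last paragraph --- integrating against $\mu_{Q,s}$, invoking $M \le F_{Q,s}$, claiming ``the one-point optimality shows precisely that $h_{\PSET_n}(\PT{x}_{j_0}) \le F_{Q,s}$'' --- all ultimately require an upper bound on $h_{\PSET_n}$ at a Fekete point, which is impossible. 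The bookkeeping with $n/(n-1)$ and $c_Q$ is a symptom of this, not the cure.

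The fix is simple and removes all of that bookkeeping: apply Theorem~\ref{thm:main2} not to $\PSET_n$ but to the $(n-1)$-point set $\PSET_{n-1} \DEF \PSET_{n,Q,s} \setminus \{\PT{x}_k\}$. The Fekete property says precisely that $\PT{x}_k$ is a global minimum on $\mathbb{S}^d$ of $h_{\PSET_{n-1}}$, so \eqref{th7.1} holds with
\[
M = h_{\PSET_{n-1}}(\PT{x}_k) = U_s^{\mu_{\PSET_{n-1}}}(\PT{x}_k) + Q(\PT{x}_k),
\]
which is finite. Theorem~\ref{thm:main2} then gives \eqref{th7.2} for $\PSET_{n-1}$, and evaluating at $\PT{x} = \PT{x}_k$ yields
\[
U_s^{\mu_{\PSET_{n-1}}}(\PT{x}_k) \ge U_s^{\mu_{\PSET_{n-1}}}(\PT{x}_k) + Q(\PT{x}_k) + U_s^{\mu_{Q,s}}(\PT{x}_k) - F_{Q,s},
\]
i.e.\ $U_s^{\mu_{Q,s}}(\PT{x}_k) + Q(\PT{x}_k) \le F_{Q,s}$, which is exactly $\PT{x}_k \in \widetilde{S}_{Q,s}$. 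This is the paper's proof; once you use the right point set, no $n/(n-1)$ rescaling, no $c_Q$, and no integration against $\mu_{Q,s}$ is needed.
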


We note that for most of the above theorems, $s = d - 2$ marks the lower end of the stated range of the Riesz parameter $s$. It turns out that the case $s = d - 2$ is distinctive because new phenomena arise in the solution of the signed equilibrium problem, see Section~\ref{sec:neg.external.field}.
Moreover, for $s$ in the interval $(0, d-2)$, the Riesz-$s$ kernel becomes strictly superharmonic when considered in the stereographic projection space of $\mathbb{S}^d$; consequently maximum principles and domination principles do not apply.

\section{Application to point separation}
\label{sec:separation}

Good separation of points is generally associated with the {\em stability} of an approximation or interpolation method (e.g., by splines or radial basis functions (RBF)); cf., e.g., \cite{FuWr2009,LeGSlWe2010,Sch1995}.
In this section we shall apply results from Section~\ref{sec:characterization} (especially Theorem~\ref{thm:signsupp.2} and Corollary~\ref{cor}) to obtain explicit point separation estimates for sequences of $n$-point $(Q,s)$-Fekete sets (cf. Definition~\ref{def:discrete.external.field.problem}) associated with a large class of external fields $Q$ and establish that such sequences are ``well-separated'' in the following sense. Let
\begin{equation*}
\delta( \PSET_n ) \DEF \min \left\{ \left| \PT{x}_j - \PT{x}_k \right|
: \PT{x}_j, \PT{x}_k \in \PSET_n, j \neq k \right\}
\end{equation*}
denote the {\em minimum distance} among the points in $\PSET_n$. Then a sequence $\{\PSET_n\}_{n\geq2}$, $\PSET_n \subset \mathbb{S}^d$ for all $n$, is called {\em well-separated} if $\delta( \PSET_n )$ is of order $n^{-1/d}$ as $n \to \infty$. (It suffices to show the existence of a constant $C$ such that
\begin{equation} \label{eq:lower.estimate.least.distance}
\delta( \PSET_n ) \geq C \, n^{-1/d} \qquad \text{for sufficiently large $n$,}
\end{equation}
since $\delta( \PSET_n )$ cannot exceed the best-packing distance which is of order $n^{-1/d}$; cf. \cite{CoSl1999}.)

In the potential-theoretical and field-free setting ($Q\equiv 0$) it has been known since Dahlberg~\cite{Da1978} that Fekete point sets (harmonic case $s = d - 1$) on a sufficiently smooth closed bounded $d$-dimensional surface in $\mathbb{R}^{d+1}$ that separates $\mathbb{R}^{d+1}$ into two parts will form a well-separated sequence (but no explicit constant for the lower bound of $\delta( \PSET_n )$ has been given). G{\"o}tz~\cite{Go2000} studied the discrete external field problem on surfaces in $\mathbb{R}^d$ where the energy functional is defined in terms of the Green function for a domain $X \subset \mathbb{R}^d$. His separation result generalizes Dahlberg's result.
Well-separation of minimal logarithmic energy configurations on $\mathbb{S}^2$ in the field-free setting was first established by Rakhmanov et al.~\cite{RaSaZh1994,RaSaZh1995}, and with an improved constant by Dragnev~\cite{Dr2007a}.
For minimal Riesz $s$-energy configurations on $\mathbb{S}^d$ in the field-free case, well-separation was established by Kuijlaars et al.~\cite{KuSaSu2007} for $s \in (d-1,d)$ and by Dragnev and Saff~\cite{DrSa2007} for $s \in (d-2,d)$. Damelin and Maymeskul \cite{DaMa2005} give a separation result of order $n^{-1/(s+2)}$, $0 < s \leq d - 2$, which is of sharp order in the boundary case $s = d - 2$. It is expected but
still unproven that minimal logarithmic and Riesz $s$-energy ($0<s<d-2$) configurations on $\mathbb{S}^d$, $d \geq3$, are well-separated.  The references \cite{Dr2007a}, \cite{DrSa2007} and \cite{RaSaZh1995} also provide an explicit constant in the lower estimate
\eqref{eq:lower.estimate.least.distance}. It should be noted that \cite{DrSa2007} uses external fields to derive the desired separation  estimates in the field-free setting. In the hyper-singular case $s > d$, Kuijlaars and Saff~\cite{KuSa1998} establish well-separation of minimal Riesz $s$-energy configurations on $\mathbb{S}^d$.

We now present a generalization of \cite[Theorem~1.5]{DrSa2007} to the case when an external field is present and given by a potential.

\begin{thm}\label{thm:main3} Let $d-2 \leq s < d$ and $Q(\PT{x}) \DEF U_s^\sigma (\PT{x})$
for some signed measure $\sigma$ with $\sigma_d$-a.e.
finite Riesz $s$-potential on $\mathbb{S}^d$. Assume the support of the
negative part $\sigma^-$ in the Jordan decomposition
$\sigma = \sigma^+ -\sigma^-$ satisfies that
\begin{equation} \label{eq:sep.cond.1}
\supp( \sigma^- ) \subset \left\{ \PT{x} \in \mathbb{R}^{d+1} : |\PT{x}| \geq r \right\}
\end{equation}
for some $r>1$ and that
\begin{equation} \label{eq:sep.cond.2}
c_\sigma = c_\sigma(r) \DEF 1 + \| \sigma^+ \| + \left( \dfrac{(r+1)^{d-s}}{W_s( \mathbb{S}^d ) \, (r-1)^d} - 1 \right) \| \sigma^- \| \, \geq \frac{1}{2}. 
\end{equation}
Then any sequence $( \PSET_{n,Q,s} )_{n=2}^\infty$ of $(Q,s)$-Fekete sets on $\mathbb{S}^d$ is
well-separated; more precisely, 
\begin{equation} \label{SepRes1}
\delta( \PSET_{n,Q,s} ) \geq \frac{K_{Q,s}}{n^{1/d}} \qquad \text{for all $n > 2 c_\sigma - 1$,}
\end{equation}
where
\begin{equation} \label{SepRes2}
K_{Q,s} \DEF \left( \frac{2^{d-s}}{W_s( \mathbb{S}^d )} \frac{1}{c_\sigma} \right)^{1/d}.
\end{equation}
It is understood that for $d = 2$ and $s = \log$ we replace $W_s( \mathbb{S}^d )$ by $1$ and $s$ by $0$.
\end{thm}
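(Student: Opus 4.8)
The idea is to reduce the separation bound to a statement about a single configuration point and then feed a suitably modified external field into the characterization machinery of Section~\ref{sec:characterization}.

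\emph{Step 1 (one-point reduction).} Fix an $n$-point $(Q,s)$-Fekete set $\PSET_n=\{\PT{x}_1,\dots,\PT{x}_n\}$ and let $\PT{x}_1$ be a point belonging to a closest pair, so that $\delta(\PSET_n)=|\PT{x}_1-\PT{x}_2|$ and $|\PT{x}_j-\PT{x}_1|\ge\delta(\PSET_n)$ for all $j\ge2$. Since $\PSET_n$ minimizes the discrete weighted energy \eqref{eq:discrete.Riesz.energy} over all $n$-point sets, a direct rearrangement of that energy --- isolating the pairs that involve $\PT{x}_1$ --- shows that $\{\PT{x}_2,\dots,\PT{x}_n\}$ is an $(n-1)$-point $(\widehat{Q},s)$-Fekete set for
\[
\widehat{Q}(\PT{x})\DEF\frac{n-1}{n-2}\,Q(\PT{x})+\frac{1}{n-2}\,|\PT{x}-\PT{x}_1|^{-s}=U_s^{\widehat{\sigma}}(\PT{x}),\qquad \widehat{\sigma}\DEF\frac{n-1}{n-2}\,\sigma+\frac{1}{n-2}\,\delta_{\PT{x}_1},
\]
valid for $n\ge3$. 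The key feature is that $\widehat{Q}$ is again the Riesz $s$-potential of a signed measure whose negative part $\widehat{\sigma}^-=\tfrac{n-1}{n-2}\sigma^-$ still has support in $\{|\PT{x}|\ge r\}$ (here $r>1$, so $\PT{x}_1\notin\supp(\sigma^-)$), and whose positive part contains the point charge $\tfrac{1}{n-2}\delta_{\PT{x}_1}$. Applying Corollary~\ref{cor} to the $(\widehat{Q},s)$-problem on $\mathbb{S}^d$ (legitimate since $d-2\le s<d$) gives $\{\PT{x}_2,\dots,\PT{x}_n\}\subset\widetilde{S}_{\widehat{Q},s}$.

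\emph{Step 2 (excluding a cap around $\PT{x}_1$).} It now suffices to show that $\widetilde{S}_{\widehat{Q},s}$ misses the open cap $\{\PT{x}\in\mathbb{S}^d:|\PT{x}-\PT{x}_1|<K_{Q,s}n^{-1/d}\}$, for then every $\PT{x}_j$, $j\ge2$, and hence $\delta(\PSET_n)$, is at least $K_{Q,s}n^{-1/d}$. Fix a small $\rho\in(0,\delta(\PSET_n))$ for which the signed $s$-equilibrium $\eta_\rho=\eta_{\Sigma_\rho,\widehat{Q},s}$ on $\Sigma_\rho=\{\PT{x}:|\PT{x}-\PT{x}_1|\ge\rho\}$ exists and $F_{\widehat{Q},s}<\sgnEqconst_{\Sigma_\rho,\widehat{Q},s}$; then Theorem~\ref{thm:signsupp.2} with $\PT{p}=\PT{x}_1$ gives $\widetilde{S}_{\widehat{Q},s}\cap\Sigma_\rho\subset\supp(\eta_\rho^+)$, and since $\rho<\delta(\PSET_n)$ each $\PT{x}_j$ lies in $\Sigma_\rho$, hence in $\supp(\eta_\rho^+)$. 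One next writes $\eta_\rho$ explicitly by sweeping $\widehat{\sigma}$ onto $\Sigma_\rho$: since $\widehat{Q}=U_s^{\widehat{\sigma}}$ agrees with $U_s^{\bal(\widehat{\sigma},\Sigma_\rho)}$ on $\Sigma_\rho$ and the $s$-equilibrium measure $\nu_{\Sigma_\rho}$ of $\Sigma_\rho$ has constant potential there, uniqueness of the signed equilibrium forces $\eta_\rho=(1+\widehat{m}_\rho)\,\nu_{\Sigma_\rho}-\bal(\widehat{\sigma},\Sigma_\rho)$, where $\widehat{m}_\rho$ is the total charge of $\bal(\widehat{\sigma},\Sigma_\rho)$. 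The point charge $\tfrac{1}{n-2}\delta_{\PT{x}_1}$ contributes to $\eta_\rho$ the negative term $-\tfrac{1}{n-2}\bal(\delta_{\PT{x}_1},\Sigma_\rho)$, and the explicit formula for Riesz $s$-balayage onto a spherical cap shows that, in the range between the edge of $\Sigma_\rho$ and the global scale, this swept charge has density of order $n^{-1}|\PT{x}-\PT{x}_1|^{-d}$. Comparing it with the locally almost uniform positive contribution of $(1+\widehat{m}_\rho)\nu_{\Sigma_\rho}$ --- whose mass is controlled by $c_\sigma$ once the auxiliary balayage masses of $\sigma^+$ and $\sigma^-$ are estimated using \eqref{eq:sep.cond.1}--\eqref{eq:sep.cond.2} --- one finds $\eta_\rho\le0$, hence $\eta_\rho^+\equiv0$, on $\{|\PT{x}-\PT{x}_1|<K_{Q,s}n^{-1/d}\}$ with $K_{Q,s}$ exactly as in \eqref{SepRes2}. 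Combined with Step~1 this yields \eqref{SepRes1}; the hypotheses $c_\sigma\ge\tfrac12$ and $n>2c_\sigma-1$ are precisely what keeps the positive mass of $\eta_\rho$ (which exceeds $c_\sigma$ by an $O(1/n)$ term from the swept point charge) under control so that the comparison is valid. The logarithmic case $d=2$, $s=\log$ follows by the same scheme with the conventions indicated after \eqref{SepRes2}.

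\emph{Anticipated main obstacle.} The heart of the argument is the balayage analysis in Step~2: extracting the precise $n^{-1}|\PT{x}-\PT{x}_1|^{-d}$ behavior of $\bal(\delta_{\PT{x}_1},\Sigma_\rho)$ from the explicit Riesz kernel of a spherical cap, and bookkeeping all the balayage masses (of $\sigma^+$, $\sigma^-$ and $\delta_{\PT{x}_1}$, onto both $\mathbb{S}^d$ and $\Sigma_\rho$) so that the exponent $-1/d$ and the exact constant $K_{Q,s}$ emerge. A secondary technical point is verifying, for the relevant $\rho$, the existence of the signed equilibrium $\eta_\rho$ on $\Sigma_\rho$ together with the strict inequality $F_{\widehat{Q},s}<\sgnEqconst_{\Sigma_\rho,\widehat{Q},s}$ demanded by Theorem~\ref{thm:signsupp.2}.
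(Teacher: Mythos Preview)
Your proposal follows essentially the same strategy as the paper: remove one Fekete point, regard it as an extra point-charge term in the external field, invoke Corollary~\ref{cor} to trap the remaining points in the extended support, and then use signed equilibria on caps $\Sigma_\rho$ (Theorem~\ref{thm:signsupp.2}) together with balayage estimates to show that this extended support avoids a cap of radius $K_{Q,s}n^{-1/d}$ around the removed point.

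Two places where your execution diverges from the paper's and would need adjustment. First, you aim to show for a \emph{single} small $\rho$ that $\eta_\rho\le 0$ on the whole annulus $\{\rho\le|\PT{x}-\PT{x}_1|<K_{Q,s}n^{-1/d}\}$ via a pointwise density comparison. The paper does not attempt this; instead it bounds $\eta_\rho$ from above by $c_\sigma$ times the signed equilibrium for the single point-charge problem, then quotes \cite[Theorem~13]{BrDrSa2009}, which only shows \emph{rim} negativity for each $\rho$ below the critical $\rho_0$, and closes with an iteration/supremum argument over admissible caps. Your full-annulus claim is not what the existing machinery delivers directly, so you would either have to prove it from the explicit balayage densities or fall back on the rim-negativity iteration. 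Second, you rely on the hypothesis $F_{\widehat{Q},s}<\sgnEqconst_{\Sigma_\rho,\widehat{Q},s}$ in Theorem~\ref{thm:signsupp.2}; the paper sidesteps this by first observing (Proposition~\ref{prop:1}(b)) that the support $S_{\widehat{Q},s}$ already lies in some $\Sigma_{r_k}$, and then applying the remark after Theorem~\ref{thm:signsupp.2} to caps $\Sigma_\rho$ containing $S_{\widehat{Q},s}$. Finally, note that the paper treats $d-2<s<d$ and the boundary case $s=d-2$ (including $s=\log$, $d=2$) separately, because in the latter the balayage measures $\bal_{d-2}(\sigma_d,\Sigma_\rho)$ and $\bal_{d-2}(\delta_{\PT{x}_1},\Sigma_\rho)$ acquire a singular boundary component on $\partial\Sigma_\rho$, which changes the comparison (see Proposition~\ref{prop:well-separation.min.Riesz.(d-2).energy} and Part~3 of the proof). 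Your sketch does not flag this split. As a minor remark, your $\widehat Q=\tfrac{n-1}{n-2}Q+\tfrac{1}{n-2}|\cdot-\PT{x}_1|^{-s}$ is actually the precise field for which $\PSET_n\setminus\{\PT{x}_1\}$ is an $(n-1)$-point Fekete set; the paper uses $\widetilde Q_k=Q+\tfrac{1}{n-2}|\cdot-\PT{x}_k|^{-s}$, which differs by the harmless factor $\tfrac{n-1}{n-2}$ on $Q$ and leads to the same constant $K_{Q,s}$ after the final estimates.
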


Since the {\em Riesz $s$-energy of $\mathbb{S}^d$} appearing in the separation constant in \eqref{SepRes2} is given by the formula
\begin{equation} \label{eq:W.s.S.d}
W_s( \mathbb{S}^d ) = 2^{d-1-s} \frac{\gammafcn( ( d + 1 ) / 2 ) \gammafcn( ( d - s ) / 2 )}{\sqrt{\pi} \, \gammafcn( d - s / 2 )}, \qquad s > 0,
\end{equation}
we have in the harmonic case $s=d-1$ that
\begin{equation} \label{SepRes2b}
K_{Q,d-1} = 2^{-1/d} \left[ 1 + \| \sigma^+ \| + \left( \dfrac{r+1}{(r-1)^d}-1\right) \|\sigma^-\| \right]^{-1/d}
\end{equation}
and in the limiting case $s = d - 2$ (and $d \geq 3$)
\begin{equation} \label{SepRes2c}
K_{Q,d-2} = \left( \frac{1}{d} \, \frac{\gammafcn( ( d + 1 ) / 2 )}{\sqrt{\pi} \, \gammafcn( d / 2 )} \right)^{-1/d} \left[ 1 + \| \sigma^+ \| + \left( \dfrac{(r+1)^{2}}{W_{d-2}( \mathbb{S}^d ) \, (r-1)^d} - 1 \right) \| \sigma^- \| \right]^{-1/d},
\end{equation}
wheras for $s = \log$ and $d = 2$
\begin{equation} \label{SepRes2clog}
K_{Q,\log} = 2 \left[ 1 + \| \sigma^+ \| + \left( \dfrac{(r+1)^{2}}{(r-1)^2} - 1 \right) \| \sigma^- \| \right]^{-1/2}.
\end{equation}

\begin{rmk}
Note that whenever the support of $\sigma^-$ lies outside of $\mathbb{S}^d$, then both conditions~\eqref{eq:sep.cond.1} and \eqref{eq:sep.cond.2} are satisfied by taking $r(>1)$ sufficiently close to $1$. Also observe that as $r$ approaches $1$, the constant $K_{Q,s}$ approaches $0$.
\end{rmk}

In case of $\sigma \equiv 0$ and $s = d - 2 > 0$ the above Theorem~\ref{thm:main3} yields a known result for the well-separation of $n$-point minimal Riesz $(d-2)$-energy configurations on $\mathbb{S}^d$ (\cite{DaMa2005} but without explicit constants). Also with $\sigma \equiv 0$, $s = \log$ and $d = 2$ we recover the same separation result as obtained in \cite{Dr2007a}. Here we prove them separately (with explicit constants in the former case), since they will be used to establish the separation bounds when an externalf field ($\sigma\not\equiv0$) is given.

\begin{prop} \label{prop:well-separation.min.Riesz.(d-2).energy}
For $Q \equiv 0$ and $d \geq 2$ we have
\begin{equation} \label{eq:(d-2).separation.estimate}
\delta( \PSET_{n,d-2} ) \geq \frac{\kappa_{d}}{\left( n - 1 \right)^{1/d}} 
\end{equation}
for any $n(\geq 3)$-point Riesz $(d-2)$-energy \footnote{When $d = 2$ we mean logarithmic energy.} minimizing configuration $\PSET_{n,d-2}$ on $\mathbb{S}^d$, where
\begin{equation} \label{eq:(d-2).separation.estimate.constant}
\kappa_{d} = \left( \frac{1}{d} \, \frac{\gammafcn( ( d + 1 ) / 2 )}{\sqrt{\pi} \, \gammafcn( d / 2 )} \right)^{-1/d}.
\end{equation}
\end{prop}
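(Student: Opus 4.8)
The plan is to treat this as the field-free specialization ($\sigma \equiv 0$, hence $Q \equiv 0$) of the machinery behind Theorem~\ref{thm:main3}, but with a sharper counting argument that gains a factor of $n-1$ instead of $n$. First I would fix an $n$-point minimizer $\PSET_{n,d-2} = \{\PT{x}_1,\dots,\PT{x}_n\}$ and let $\PT{x}_i, \PT{x}_j$ be two points realizing the minimal distance $\delta = \delta(\PSET_{n,d-2})$. When $Q\equiv 0$ the relevant $s$-extremal measure is $\sigma_d$, with $F_{0,d-2} = W_{d-2}(\mathbb{S}^d)$ and constant weighted potential $U_{d-2}^{\sigma_d} \equiv W_{d-2}(\mathbb{S}^d)$ everywhere on $\mathbb{S}^d$. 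The starting point is the discrete variational inequality satisfied by a Fekete configuration: removing the point $\PT{x}_i$ and comparing energies (or differentiating the energy with respect to $\PT{x}_i$ along the sphere) yields that the weighted potential of the remaining $n-1$ points, evaluated at $\PT{x}_i$, is bounded above by (a multiple of) $\mathcal{E}^0_{d-2}(n)/(n(n-1)) \le V_{0,d-2} = W_{d-2}(\mathbb{S}^d)$. Concretely, I expect the bound
\begin{equation*}
\frac{1}{n-1} \sum_{k \ne i} \frac{1}{|\PT{x}_i - \PT{x}_k|^{d-2}} \le W_{d-2}(\mathbb{S}^d),
\end{equation*}
coming from the monotonicity of $\mathcal{E}^0_{d-2}(n)/(n(n-1))$ together with the limit $\mathcal{E}^0_{d-2}(n)/n^2 \to V_{0,d-2}$ stated in the excerpt. (For $d=2$, $s=\log$, the analogous inequality has $W_{\log}(\mathbb{S}^2)$ on the right, which is replaced by $1$ per the stated convention, and the $\log(1/|\cdot|)$ kernel; the same argument goes through with additive rather than multiplicative comparisons.)

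Next I would isolate the single term $k = j$ from the sum: since $|\PT{x}_i - \PT{x}_k|^{-(d-2)} \ge 0$ for every $k$, the displayed inequality gives $|\PT{x}_i - \PT{x}_j|^{-(d-2)} \le (n-1)\, W_{d-2}(\mathbb{S}^d)$, i.e.
\begin{equation*}
\delta(\PSET_{n,d-2})^{d-2} \ge \frac{1}{(n-1)\, W_{d-2}(\mathbb{S}^d)}.
\end{equation*}
This already produces a lower bound of order $(n-1)^{-1/(d-2)}$, which is of the ``wrong'' order $n^{-1/(d-2)}$ rather than the desired $n^{-1/d}$; the content of the proposition is to do better. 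The improvement comes from replacing the crude ``one term'' estimate by a packing estimate: among the $n-1$ points $\PT{x}_k$, $k \ne i$, the number lying in a spherical cap of radius $t$ around $\PT{x}_i$ is at most of order $(t/\delta)^d$, because any two of them are $\delta$-separated and a cap of radius $t$ has surface measure $\asymp t^d$. Partitioning the remaining points into annular shells $\{\, 2^{m}\delta \le |\PT{x}_i - \PT{x}_k| < 2^{m+1}\delta \,\}$ and summing $\sum_{k\ne i}|\PT{x}_i-\PT{x}_k|^{-(d-2)}$ shell-by-shell, the geometric series in $m$ converges (this is exactly where $s = d-2 < d$ is used — the exponent of $2^{m}$ in the shell sum is $d - (d-2) = 2 > 0$ going the right way, so the sum is dominated by its first term $\asymp \delta^{-(d-2)}$). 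One therefore gets $\sum_{k \ne i}|\PT{x}_i-\PT{x}_k|^{-(d-2)} \le C_d\, \delta^{-(d-2)}$ for an explicit $C_d$; combined with the variational bound $\sum_{k\ne i}|\PT{x}_i-\PT{x}_k|^{-(d-2)} \le (n-1)W_{d-2}(\mathbb{S}^d)$ this still only reproduces the weak bound. The actual route to $n^{-1/d}$ must instead compare the dominant near-neighbour contribution to a \emph{lower} bound on the full sum: since $\mu_{\PSET_n}$ converges weak-$*$ to $\sigma_d$ and $U_{d-2}^{\sigma_d} \equiv W_{d-2}(\mathbb{S}^d)$, one has $\frac{1}{n-1}\sum_{k\ne i}|\PT{x}_i-\PT{x}_k|^{-(d-2)}$ close to $W_{d-2}(\mathbb{S}^d)$, hence $\sum_{k\ne i}|\PT{x}_i-\PT{x}_k|^{-(d-2)} \ge c\,(n-1)$; then the shell estimate $\le C_d\,\delta^{-(d-2)}$ forces $\delta^{-(d-2)} \ge c\,(n-1)/C_d$, i.e. $\delta \le C'(n-1)^{-1/(d-2)}$ — still the wrong direction. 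Re-examining, the correct pairing is: the \emph{upper} variational bound $\sum_{k\ne i}|\PT{x}_i-\PT{x}_k|^{-(d-2)} \le (n-1)W_{d-2}(\mathbb{S}^d)$ together with a \emph{lower} bound of the form $\sum_{k\ne i}|\PT{x}_i-\PT{x}_k|^{-(d-2)} \ge c_d\,\delta^{-d}$, the latter obtained because the $n-1$ points must be packed $\delta$-separated on $\mathbb{S}^d$ so at least $\asymp \delta^{-d}$ of them lie within bounded distance of $\PT{x}_i$ and each contributes a bounded-below amount — wait, that over-counts. The honest version, following \cite{DrSa2007} and \cite{DaMa2005}, is: pick the pair $\PT{x}_i,\PT{x}_j$ at distance $\delta$; the variational inequality at $\PT{x}_j$ gives $|\PT{x}_i - \PT{x}_j|^{-(d-2)} \le (n-1)W_{d-2}(\mathbb{S}^d) - \sum_{k\ne i,j}|\PT{x}_j - \PT{x}_k|^{-(d-2)}$, and the remaining sum is bounded \emph{below} by comparing it with the integral $\int_{\mathbb{S}^d}|\PT{x}_j - \PT{y}|^{-(d-2)}\,d\sigma_d(\PT{y}) = W_{d-2}(\mathbb{S}^d)$ up to a packing correction $\asymp \delta^{2}\cdot(\text{something})$; carrying the correction carefully yields $\delta^{d-2}\ge (\text{const})/(n-1)^{(d-2)/d}$.

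Thus the key steps in order are: (1) derive the discrete variational/Fekete inequality bounding the partial weighted potential $\frac{1}{n-1}\sum_{k\ne i}|\PT{x}_i-\PT{x}_k|^{-(d-2)}$ at each point by $\mathcal{E}^0_{d-2}(n)/(n(n-1)) \le W_{d-2}(\mathbb{S}^d)$ (using monotonicity and the $V_{0,d-2}$ limit from the excerpt); (2) for the minimal-distance pair $(\PT{x}_i,\PT{x}_j)$, extract $|\PT{x}_i-\PT{x}_j|^{-(d-2)}$ and bound the leftover sum over $k\ne i,j$ from below by $W_{d-2}(\mathbb{S}^d) \cdot \frac{n-1}{n}$ minus a $\delta$-dependent discrepancy term controlled via the packing bound (caps of radius $t$ contain $O((t/\delta)^d)$ points) and the convergence $U_{d-2}^{\mu_{\PSET_n}} \to U_{d-2}^{\sigma_d}$; (3) solve the resulting inequality for $\delta$ and simplify the constant using the explicit formula \eqref{eq:W.s.S.d} for $W_{d-2}(\mathbb{S}^d)$ — note $W_{d-2}(\mathbb{S}^d) = \frac{1}{d}\cdot\frac{\gammafcn((d+1)/2)}{\sqrt\pi\,\gammafcn(d/2)}$, which is exactly $\kappa_d^{-d}$, so the bound collapses to $\delta(\PSET_{n,d-2}) \ge \kappa_d (n-1)^{-1/d}$; (4) handle the $d=2$, $s=\log$ case by the parallel additive argument with $W_{\log}$ replaced by $1$. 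The main obstacle is Step~(2): extracting the factor $(n-1)^{(d-2)/d}$ rather than the naive $(n-1)$ requires the packing-discrepancy estimate to be sharp enough, and keeping the constant explicit (so that it comes out exactly $\kappa_d$ rather than $\kappa_d$ times an uncontrolled factor) is the delicate bookkeeping; I would model this step closely on the proof of \cite[Theorem~1.5]{DrSa2007} together with the boundary-case analysis in \cite{DaMa2005}, and indeed the cleanest route may be to deduce the proposition directly as the $\sigma\equiv 0$ case of Theorem~\ref{thm:main3}'s proof specialized so that the constant $c_\sigma = 1$ and the factor $n$ is replaced by $n-1$ via the monotonicity inequality $\mathcal{E}^0_{d-2}(n)/(n(n-1)) \le \mathcal{E}^0_{d-2}(n+1)/((n+1)n)$ stated in the excerpt.
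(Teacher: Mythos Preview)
Your proposal has a genuine gap in its primary strategy, and the fallback suggestion at the very end (specialize the proof of Theorem~\ref{thm:main3}) is pointed in the right direction but omits exactly the nontrivial content.

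The direct route via the Fekete variational inequality and packing estimates does not yield the order $n^{-1/d}$; you yourself notice this and backtrack several times. The crude one-term extraction from $\sum_{k\ne i}|\PT{x}_i-\PT{x}_k|^{-(d-2)} \le (n-1)W_{d-2}(\mathbb{S}^d)$ gives only $\delta \ge [(n-1)W_{d-2}(\mathbb{S}^d)]^{-1/(d-2)}$, of order $n^{-1/(d-2)}$, strictly weaker than $n^{-1/d}$. Your Step~(2) proposes to repair this by bounding the leftover sum $\sum_{k\ne i,j}|\PT{x}_j-\PT{x}_k|^{-(d-2)}$ from \emph{below}, but neither mechanism you invoke delivers that: packing arguments bound the number of $\delta$-separated points in a cap from \emph{above}, hence produce upper bounds on such sums, not lower bounds; and weak-star convergence of $\mu_{\PSET_n}$ to $\sigma_d$ is asymptotic, not a quantitative inequality with the explicit constant $\kappa_d$. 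So Step~(2) as written has no engine.

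The paper's argument is entirely different and is potential-theoretic rather than combinatorial. One fixes a point $\PT{x}_k$ of the minimizer and regards it as the source of an external field $Q(\PT{x}) = q\,|\PT{x}-\PT{x}_k|^{-(d-2)}$ with $q = 1/(n-2)$, so that $\PSET_{n,d-2}\setminus\{\PT{x}_k\}$ is an $(n-1)$-point $(Q,d-2)$-Fekete set. Corollary~\ref{cor} confines this set to the extended support $\widetilde{S}_{Q,d-2}$, and Theorem~\ref{thm:signsupp.2} confines $\widetilde{S}_{Q,d-2}$ inside $\supp(\overline{\eta}_\rho^{+})$ for the signed $(d-2)$-equilibrium $\overline{\eta}_\rho$ on any spherical cap $\Sigma_\rho$ containing $S_{Q,d-2}$. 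The substance is then an explicit computation: using the balayage formulas from \cite{BrDrSa2009} (at $s=d-2$ the balayage measures $\overline{\nu}_\rho,\overline{\epsilon}_\rho$ acquire a boundary component on $\partial\Sigma_\rho$), one finds that the boundary charge of $\overline{\eta}_\rho$ is negative for all $\rho<\rho_c$, where $\rho_c$ solves $\Psi_{d-2}(\rho)\,\rho^d/4 = q$ and $\Psi_{d-2}(\rho) \DEF W_{d-2}(\mathbb{S}^d)(1+q\|\overline{\epsilon}_\rho\|)/\|\overline{\nu}_\rho\|$. Since $\Psi_{d-2}$ is minimized at $\rho_c$, one gets $4q/\rho_c^{d} = \Psi_{d-2}(\rho_c) \le \Psi_{d-2}(0) = W_{d-2}(\mathbb{S}^d)(1+q)$; substituting $q=1/(n-2)$ gives $q/(1+q)=1/(n-1)$ and hence $\rho_c \ge \big(4/[W_{d-2}(\mathbb{S}^d)(n-1)]\big)^{1/d} = \kappa_d\,(n-1)^{-1/d}$. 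The $d=2$ logarithmic case is handled by the parallel signed-equilibrium formula, which directly gives $\rho_c^{2}=4q/(1+q)=4/(n-1)$. Your closing remark about specializing Theorem~\ref{thm:main3} with $\sigma\equiv 0$ is thus correct in spirit, but the entire balayage/boundary-charge analysis in the limiting case $s=d-2$---which is where both the order $n^{-1/d}$ and the sharp constant $\kappa_d$ emerge---is missing from your outline.
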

Observe that $\kappa_d = ( 4 / W_{d-2}( \mathbb{S}^d ) )^{1/d}$ when $d \geq 3$.
The first three values of $\kappa_d$ are $\kappa_2 = 2$, $\kappa_3 = ( 3 \pi / 2 )^{1/3}$, and $\kappa_4 = 2 / 3^{1/4}$.
Curiously, $(\kappa_d)^d$ is the ratio of the volume of the unit ball in $\mathbb{R}^d$ divided by the surface area of the unit sphere in $\mathbb{R}^{d+1}$. (This constant also appears as the coefficient of the leading term in the asymptotic expansion of the $n$-point minimal Riesz $d$-energy as $n \to \infty$ (cf. \cite{KuSa1998}).)

Finally, we present a well-separation result for sequences of $(Q,s)$-Fekete sets in the hyper-singular case $s>d$. In this case the (strongly repellent) short-range interactions between points on the sphere ensure well-separation of minimizing configurations for any continuous external field on $\mathbb{S}^d$.
In fact, it is enough that $Q$ be integrable on some small subset of $\mathbb{S}^d$ of positive surface area measure. 

\begin{thm} \label{thm:main4}
Let $s > d$. Suppose there is a subset $B \subset \mathbb{S}^d$ such that $\sigma_d( B ) > 0$ and the fixed external field $Q$ is integrable over $B$ with respect to $\sigma_d$.
Then there is a constant $C$ independent of $n$ such that
\begin{equation} \label{eq:separation.estimate}
\delta( \PSET_{n,Q,s} ) \geq \frac{C}{n^{1/d}}
\end{equation}
for any $n$-point $(Q,s)$-Fekete set on $\mathbb{S}^d$.
\end{thm}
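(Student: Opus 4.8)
The plan is to argue directly from minimality of the discrete energy $E^Q_s$, in the spirit of Kuijlaars and Saff~\cite{KuSa1998}, after a preliminary truncation of $Q$ on $B$ to compensate for the fact that $Q$ need not be bounded above. Since $s>d$ the continuous problem degenerates (every measure on $\mathbb{S}^d$ has infinite $s$-energy), so the potential-theoretic machinery of Section~\ref{sec:characterization} is of no use and everything must come from the discrete problem. Fix an $n$-point $(Q,s)$-Fekete set $\PSET_n=\{\PT{x}_1,\dots,\PT{x}_n\}$ and, after relabeling, assume $\delta(\PSET_n)=|\PT{x}_1-\PT{x}_2|\FED\delta$. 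Because there exist finite-energy $n$-point configurations (produced below), $\mathcal{E}^Q_s(n)<\infty$ and hence every $Q(\PT{x}_j)$ is finite; moreover $(\PSET_n\setminus\{\PT{x}_2\})\cup\{\PT{y}\}$ is an admissible $n$-point competitor for every $\PT{y}\in\mathbb{S}^d\setminus\PSET_n$, so minimality gives
\begin{equation*}
\delta^{-s}+(n-1)Q(\PT{x}_2)\ \le\ \sum_{k\ne 2}|\PT{x}_2-\PT{x}_k|^{-s}+(n-1)Q(\PT{x}_2)\ \le\ \sum_{k\ne 2}|\PT{y}-\PT{x}_k|^{-s}+(n-1)Q(\PT{y}).
\end{equation*}

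Next I would pick $\PT{y}$ with care. Since $\int_B Q\,\dd\sigma_d<\infty$ and $Q\ge c_Q$, Markov's inequality yields a level $L>c_Q$ with $\sigma_d(B')\ge\tfrac12\sigma_d(B)>0$, where $B'\DEF\{\PT{x}\in B:Q(\PT{x})\le L\}$; in particular $B'$ contains $n$ distinct points, which together with $Q\le L$ on $B'$ gives the promised finite-energy competitor. A pigeonhole estimate on surface measure --- the $n$ spherical caps of radius $r$ about $\PT{x}_1,\dots,\PT{x}_n$ have total $\sigma_d$-measure $O(nr^d)$ and hence cannot cover $B'$ once $r<\gamma n^{-1/d}$ for a suitable $\gamma=\gamma(B,d)>0$ --- produces $\PT{y}^*\in B'$ with $|\PT{y}^*-\PT{x}_k|\ge\gamma n^{-1/d}$ for all $k$. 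Taking $\PT{y}=\PT{y}^*$ above and using $Q(\PT{y}^*)\le L$ and $Q(\PT{x}_2)\ge c_Q$ gives
\begin{equation*}
\delta^{-s}\ \le\ \sum_{k\ne 2}|\PT{y}^*-\PT{x}_k|^{-s}+(n-1)(L-c_Q).
\end{equation*}

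The core of the argument is the bound on $\sum_{k\ne 2}|\PT{y}^*-\PT{x}_k|^{-s}$, which uses two facts: the $\PT{x}_k$ are $\delta$-separated and none of them lies within $\gamma n^{-1/d}$ of $\PT{y}^*$. If $\delta\ge\gamma n^{-1/d}$ there is nothing to prove, so assume $\delta<\gamma n^{-1/d}$, and split the points into dyadic shells $\{\,\gamma n^{-1/d}2^{j}\le|\PT{y}^*-\PT{x}_k|<\gamma n^{-1/d}2^{j+1}\,\}$, $j\ge 0$. Since $\delta$ is smaller than every shell radius, the disjoint balls of radius $\delta/2$ about the points of the $j$-th shell lie in a ball of radius $O(\gamma n^{-1/d}2^{j})$ about $\PT{y}^*$, so that shell holds $O\big((\gamma n^{-1/d}2^{j}/\delta)^{d}\big)$ points, each contributing at most $(\gamma n^{-1/d}2^{j})^{-s}$; summing the geometric series in $j$ --- which converges precisely because $s>d$ and is dominated by the innermost shell --- gives $\sum_{k\ne 2}|\PT{y}^*-\PT{x}_k|^{-s}\le C_1\,\delta^{-d}n^{(s-d)/d}$ with $C_1=C_1(s,d,B)$. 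Writing $\delta=\beta n^{-1/d}$, so that $C_1\delta^{-d}n^{(s-d)/d}=C_1\beta^{-d}n^{s/d}$ and $\delta^{-s}=\beta^{-s}n^{s/d}$, and dividing the last display by $n^{s/d}$ yields
\begin{equation*}
\beta^{-s}\ \le\ C_1\beta^{-d}+(L-c_Q)\,(n-1)\,n^{-s/d}.
\end{equation*}

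Since $s/d>1$, the last term tends to $0$, so for all large $n$ we have $\beta^{-s}-C_1\beta^{-d}\le 1$; as the left-hand side tends to $+\infty$ as $\beta\to 0^+$, this forces $\beta\ge\beta_0$ for some $\beta_0=\beta_0(s,d,B)>0$, that is, $\delta(\PSET_n)\ge\beta_0 n^{-1/d}$. Combined with the trivial case $\delta\ge\gamma n^{-1/d}$ this establishes \eqref{eq:separation.estimate} for large $n$ with $C=\min\{\gamma,\beta_0\}$, and after shrinking $C$ to absorb the finitely many small $n$ (for each of which $\delta>0$), for all $n$. I expect the shell estimate to be the only delicate step: one must keep track of how $\delta$ compares with the hole radius $\gamma n^{-1/d}$ (the regime where $\delta$ is not small compared with $n^{-1/d}$ being dispatched at once), and confirm that the $s>d$ geometric sum indeed produces the gain $n^{(s-d)/d}$ that, after dividing by $n^{s/d}$, leaves an $n$-free bound.
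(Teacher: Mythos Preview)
Your argument is correct, but it differs from the paper's in a substantive way. Both proofs begin identically: fix a $(Q,s)$-Fekete set, use minimality to compare against competitors obtained by moving one point, and exploit that the $n$ caps of radius $\sim n^{-1/d}$ cannot cover a fixed set of positive measure. The divergence is in how the comparison is exploited. The paper \emph{integrates} the minimality inequality over the whole set $D=B\setminus\bigcup_k C(\PT{x}_k,\rho_n)$ with $\rho_n=(\sigma_d(B)/n)^{1/d}$; the sum $\sum_{k\neq j}\int_D|\PT{x}-\PT{x}_k|^{-s}\dd\sigma_d$ is then bounded term-by-term via the explicit evaluation of $\int_{\mathbb{S}^d\setminus C(\PT{x}_k,\rho_n)}|\PT{x}-\PT{x}_k|^{-s}\dd\sigma_d(\PT{x})$ (a hypergeometric computation, Lemma~\ref{lem:explicit.estimates}), and the $Q$-term is controlled directly by $\int_B|Q|\dd\sigma_d$ without any truncation. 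This yields at once $|\PT{x}_j-\PT{x}_k|^{-s}\le C\,n^{s/d}$ with a completely explicit $C$; this explicitness is what drives the remarks and the worked Example following the statement. You instead choose a \emph{single} competitor point $\PT{y}^*$ (after a Markov truncation of $Q$ to get $Q\le L$ on $B'$), and then bound $\sum_{k\neq 2}|\PT{y}^*-\PT{x}_k|^{-s}$ by a dyadic-shell count that uses the very separation $\delta$ you are trying to estimate, arriving at the implicit inequality $\beta^{-s}\le C_1\beta^{-d}+o(1)$. This is more elementary---no special-function integrals---but the constant is not explicit, you need the preliminary case split $\delta\gtrless\gamma n^{-1/d}$, and the Markov step is an extra ingredient that the averaging route avoids. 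One small point to tidy in your shell argument: for the outer shells where the radius is comparable to the diameter of $\mathbb{S}^d$ the packing bound $O((r/\delta)^d)$ on the number of points must be replaced by the global bound $O(\delta^{-d})$, but since the per-point contribution there is $O(1)$ and $\delta^{-d}=\beta^{-d}n=o(n^{s/d})$, this does not affect the conclusion.
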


\begin{rmk}
In case of $Q = Q_n$ varies with $n$, there may be no single fixed subset $B$ satisfying the hypotheses in Theorem~\ref{thm:main4}. However, one  can still deduce well-separation by requiring the following: there is a sequence $\{ B_n \}$ of subsets of $\mathbb{S}^d$ such that for some $\eps > 0$, $\sigma_d( B_n ) \geq \eps$ for all $n$, and for some $C^\prime > 0$ independent of $n$,
\begin{equation*}
\frac{1}{\sigma_d( B_n )} \int_{B_n} \left| Q_n( \PT{x} ) \right| \, \dd \sigma_d( \PT{x} ) - \min\{ 0, \underline{M}^{Q_n} \} \leq C^\prime \, n^{s/d-1} \qquad \text{for all $n$,}
\end{equation*}
where $\underline{M}^{Q_n}$ denotes the minimum of $Q_n$ over $\mathbb{S}^d$. These conditions are derived from the main inequality \eqref{eq:master.inequality} and the estimate \eqref{eq:master.inequality.estimate}.
\end{rmk}

\begin{rmk}
For large classes of external fields $Q$ (e.g., continuous external fields), inequality \eqref{eq:master.inequality} and the estimate \eqref{eq:master.inequality.estimate} can be made explicit which, in turn, yields an explicit constant in the separation estimate~\eqref{eq:separation.estimate}, as the following example illustrates.
\end{rmk}

\begin{eg}
Let $s > d$. Consider the external field $Q( \PT{x} ) = q \, | \PT{x} - R \PT{p} |^{-s}$, $q \neq 0$, $R > 1$, due to a point source above the North Pole~$\PT{p}$. Clearly, $Q$ is continuous and thus integrable on $B = \mathbb{S}^d$. Thus Theorem~\ref{thm:main4} assures well-separation of $n$-point $(Q,s)$-Fekete sets on $\mathbb{S}^d$. An explicit lower bound can be easily derived from \eqref{eq:master.inequality} and \eqref{eq:master.inequality.estimate}. We find
\begin{equation*}
\left| \frac{1}{\sigma_d( B )} \int_D \big( Q( \PT{x} ) - \min\{ 0, Q( \PT{x}_j ) \} \big) \dd \sigma_d( \PT{x} ) \right| \leq |q| \int_{\mathbb{S}^d} \frac{\dd \sigma_d( \PT{x} )}{\left| \PT{x} - R \PT{p} \right|^{s}} - \min\Big\{ 0, \frac{q}{( R - 1 )^s} \Big\}
\end{equation*}
and, consequently,
\begin{equation*}
\delta( \PSET_{n,Q,s} ) \geq \left( \frac{\gamma_d}{1-\frac{1}{d} \gamma_d} \right)^{-1/s} \frac{g(n)}{n^{1/d}},
\end{equation*}
where
\begin{equation*}
g( n ) \DEF \Bigg\{ \frac{1}{s-d} + \frac{\beta_{s,d}}{2} \, n^{-2/d} + 2 n^{1-s/d} \left[ | q | \, U_s^{\sigma_d}( R \PT{p} ) - \min\Big\{ 0, \frac{q}{( R - 1 )^s} \Big\} \right] \Bigg\}^{-1/s}.
\end{equation*}
Note that $g(n) \to ( s - d )^{1/s}$ as $n \to \infty$. The constants $\gamma_d$ and $\beta_{s,d}$ are given in \eqref{eq:gamma.d} and \eqref{eq:coeff.beta.s.d},  respectively, and the representation of $U_s^{\sigma_d}$ appears in \eqref{eq:SignEqPot(b)}.
\end{eg}

\section{Negatively charged external fields}
\label{sec:neg.external.field}

In the following we consider external fields $Q$ that are generated by negative sources. The required lower semi-continuity of $Q : \mathbb{S}^d \to (-\infty,\infty]$ implies that no negative singularities can be on the sphere but it may support a negative ``continuous'' charge distribution (with no discrete part relative to $\mathbb{S}^d$).
We give a detailed analysis for the Riesz external field
\begin{equation} \label{eq:neg.external.field}
Q_{\PT{b},s}( \PT{x} ) = U_s^\sigma( \PT{x} ) = q \left| \PT{x} - \PT{b} \right|^{-s}, \quad \PT{x} \in \mathbb{S}^d, \qquad \text{$\PT{b} = - R \PT{p}$ ($R > 1$), $q < 0$,}
\end{equation}
where $\sigma = q \delta( \PT{b} )$, that is due to a negative point source at $\PT{b}$ below the South Pole and which also provides the basis for more general axis-supported fields defined by superposition of point source fields. Our analysis thus extends and complements results in \cite{BrDrSa2009} where positive axis-supported external fields were considered.

Intuitively, a negative point source under the South Pole will ``pull'' charge towards the South Pole and if sufficiently strong will cause a negatively charged spherical cap around the North Pole to appear on a grounded sphere. Grounding of the sphere imposes constant weighted potential everywhere on $\mathbb{S}^d$. This naturally leads to the signed equilibrium problem on the whole sphere or on its parts, say, the spherical cap $\Sigma_{t} \DEF \{ \PT{x} \in \mathbb{S}^d : \PT{x} \cdot \PT{p} \leq t \}$ centered at the South Pole. On a positively charged isolated sphere a sufficiently strong negative field will produce a spherical cap around the North Pole that is free of charge.
We are specifically interested in the charge distribution on the remaining part $\Sigma_{t_c}$, that is the $s$-extremal measure on $\mathbb{S}^d$ associated with the external field $Q_{\PT{b},s}$ and its support $\Sigma_{t_c}$.

We will use the methods and results of \cite{BrDrSa2009}. An essential concept is the \emph{$s$-balayage} of a measure. Recall that given a measure $\nu$ and a compact set $K$ (of the sphere $\mathbb{S}^d$), the balayage measure $\hat{\nu}:=\bal_s(\nu,K)$ preserves the Riesz $s$-potential of $\nu$ onto the set $K$ and diminishes it elsewhere (on the sphere $\mathbb{S}^d$). Let $\eta_t$ denote the signed $s$-equilibrium on $\Sigma_t$ associated with the external field $Q_{\PT{b},s}$. Then it can be expressed as
\begin{equation} \label{eq:eta}
\eta_t = \left[ \Phi_s(t) / W_s( \mathbb{S}^d) \right] \nu_t - q \epsilon_t,
\end{equation}
where
\begin{equation} \label{bal}
\epsilon_t = \epsilon_{t,s} \DEF \bal_s(\delta_{\PT{b}},\Sigma_t), \qquad \nu_t = \nu_{t,s} \DEF \bal_s(\sigma_d,\Sigma_t)
\end{equation}
are the $s$-balayage measures onto $\Sigma_t$ of the positive unit point charge at $\PT{b}$ and the uniform measure $\sigma_d$ on $\mathbb{S}^d$. The function $\Phi_s(t)$, defined by
\begin{equation} \label{eq:Phi}
\Phi_s(t) \DEF W_s(\mathbb{S}^d) \left( 1 + q \left\|\epsilon_t\right\| \right) \big/ \left\|\nu_t\right\|, \qquad d-2<s<d,
\end{equation}
in terms of the $s$-energy of $\mathbb{S}^d$, given in \eqref{eq:W.s.S.d} and norms $\|\epsilon_t\| = \int_{\mathbb{S}^d} \dd \epsilon_t$ and $\|\nu_t\| = \int_{\mathbb{S}^d} \dd \nu_t$, plays an important role in what follows. Indeed,
\begin{equation*}
\int_{\Sigma_t} \dd \eta_t = \frac{\Phi_s(t)}{W_s(\mathbb{S}^d)} \left\| \nu_t \right\| - q \left\| \epsilon_t \right\| = 1
\end{equation*}
and using that $U_s^{\nu_t}( \PT{x} ) = W_s( \mathbb{S}^d )$ and $U_s^{\epsilon_t}( \PT{x} ) = | \PT{x} - \PT{a} |^{-s}$ on $\Sigma_t$ by \eqref{bal}, at every $\PT{x} \in \Sigma_t$
\begin{equation} \label{eq:weighted.potential}
U_s^{\eta_t}( \PT{x} ) + Q_{\PT{b},s}( \PT{x} ) = \frac{\Phi_s(t)}{W_s(\mathbb{S}^d)} \, U_s^{\nu_t}( \PT{x} ) - q \, U_s^{\epsilon_t}( \PT{x} ) + Q_{\PT{b},s}( \PT{x} ) = \Phi_s(t).
\end{equation}
By Definition~\ref{def:signed.equilibrium}, $\sgnEqconst_{\Sigma_t,Q_{\PT{b},s},s} = \Phi_s(t)$ and \eqref{eq:functional.identity} relates $\Phi_s(t)$ to the $\mathcal{F}_s$-functional by means of $\Phi_s(t) = \mathcal{F}_s( \Sigma_t )$, whereas Proposition~\ref{prop:F.s.functional} implies that the latter is minimized by the support of the $s$-extremal measure on $\mathbb{S}^d$ associated with the external field \eqref{eq:neg.external.field} which turns out to be a spherical cap $\Sigma_{t_c}$. We will see that the unique minimum of $\Phi_s(t)$ in the interval $[-1,1]$ will provide this critical parameter $t_c$ (see Theorem~\ref{thm:s.equilibrium.measure}). Moreover, the remark following Theorem~\ref{thm:SignEq} provides the necessary and sufficient conditions (involving $\Phi_s(t)$ and therefore $\mathcal{F}_s( \Sigma_t )$) under which the signed $s$-equilibrium measure $\eta_t$ on $\Sigma_t$ turns into the $s$-extremal measure on $\mathbb{S}^d$ associated with the
external
field \eqref{eq:neg.external.field}.

Throughout, $\Hypergeom{2}{1}{a,b}{c}{z}$ and $\HypergeomReg{2}{1}{a,b}{c}{z}$ denote the Gauss hypergeometric function and its regularized form \footnote{The regularized form is well-defined even for $c$ a negative integer.} with series expansions
\begin{equation} \label{eq:HypergeomSeries}
\Hypergeom{2}{1}{a,b}{c}{z} \DEF \sum_{n=0}^\infty \frac{\Pochhsymb{a}{n}\Pochhsymb{b}{n}}{\Pochhsymb{c}{n}} \frac{z^n}{n!}, \quad  \HypergeomReg{2}{1}{a,b}{c}{z} \DEF \sum_{n=0}^\infty \frac{\Pochhsymb{a}{n}\Pochhsymb{b}{n}}{\gammafcn(n+c)} \frac{z^n}{n!}, \qquad  |z|<1,
\end{equation}
where $\Pochhsymb{a}{0} \DEF 1$ and $\Pochhsymb{a}{n} \DEF a (a+1) \cdots (a+n-1)$ for $n\geq1$ is the Pochhammer symbol. We also recall that the incomplete Beta
function and the Beta function are defined as
\begin{equation} \label{eq:betafnc}
\betafcn(x;\alpha,\beta) \DEF \int_{0}^x v^{\alpha-1} \left( 1 - v \right)^{\beta-1} \dd v, \qquad \betafcn(\alpha,\beta) \DEF \betafcn(1; \alpha,\beta),
\end{equation}
whereas the regularized incomplete Beta function is given by
\begin{equation}
\mathrm{I}(x;a,b) \DEF \betafcn(x;a,b) \big/ \betafcn(a,b). \label{regbetafnc}
\end{equation}

First, we give the representation of the signed equilibrium on the whole sphere $\mathbb{S}^d$, which is well-known from elementary physics (cf. \cite[p.~61]{Ja1998}) in the classical Coulomb case, and provide a necessary and sufficient condition when it also is the $s$-extremal measure on $\mathbb{S}^d$. 

\begin{prop} \label{prop:SignEq} 
Let $0 < s < d$ and $R>1$. The signed $s$-equilibrium $\eta_{\PT{b}} = \eta_{\mathbb{S}^d, Q_{\PT{b},s},s}$ on $\mathbb{S}^d$ associated with the Riesz external field \eqref{eq:neg.external.field}, where in fact $q \in \mathbb{R} \setminus \{0\}$, is given by
\begin{equation} \label{eq:signedeqdens}
\dd \eta_{\PT{b}}(\PT{x}) = \eta_{\PT{b}}^\prime(\PT{x}) \dd \sigma_d(\PT{x}), \quad \eta_{\PT{b}}^\prime(\PT{x}) \DEF 1 + \frac{q U_s^{\sigma_d}(\PT{b})}{W_s(\mathbb{S}^d)} - \frac{q\left(R^2-1\right)^{d-s}}{W_s(\mathbb{S}^d) \left| \PT{x} - \PT{b} \right|^{2d-s}}.
\end{equation}
Furthermore,
\begin{equation*}
U_s^{\eta_{\PT{b}}}(\PT{z}) + Q_{\PT{b},s}(\PT{z}) = \mathcal{F}_s( \mathbb{S}^d ) = \Phi_s( 1 ) = W_s( \mathbb{S}^d ) + q U_s^{\sigma_d}( \PT{b} ), \qquad \PT{z} \in \mathbb{S}^d,
\end{equation*}
where $U_s^{\sigma_d}(\PT{b}) = \int_{\mathbb{S}^d} k_s (\PT{b},\PT{y})\, \dd \sigma_d(\PT{y})$ has the following representation:
\begin{equation} \label{eq:SignEqPot(b)}
U_s^{\sigma_d} (\PT{b}) = \left( R + 1 \right)^{-s} \Hypergeom{2}{1}{s/2,d/2}{d}{4R \big/ \left(R+1\right)^2}.
\end{equation}

Moreover, if $q < 0$, then $\supp(\mu_{Q_{\PT{b},s}}) = \mathbb{S}^d$ if and only if
\begin{equation} \label{eq:Gonchar.condition}
\frac{W_s(\mathbb{S}^d)}{q} \leq \frac{\left(R-1\right)^{d-s}}{\left(R+1\right)^{d}} - U_s^{\sigma_d}(\PT{b}).
\end{equation}
In such a case $\mu_{Q_{\PT{b},s}} = \eta_{\PT{b}}$.
\end{prop}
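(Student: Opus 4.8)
The plan is to realize $\eta_{\PT b}$ as an explicit combination of balayage measures, to verify that it meets Definition~\ref{def:signed.equilibrium} (so that by uniqueness it \emph{is} the signed $s$-equilibrium), and then to extract the positivity/support condition by locating the minimum of its density. The one nonelementary input is the reproducing (Poisson-type) identity for the Riesz kernel on the sphere: for $|\PT b|=R>1$ the $s$-balayage of the unit point mass at $\PT b$ onto $\mathbb{S}^d$ is the absolutely continuous measure
\[
\epsilon:=\bal_s(\delta_{\PT b},\mathbb{S}^d),\qquad \dd\epsilon(\PT x)=\frac{(R^2-1)^{d-s}}{W_s(\mathbb{S}^d)\,|\PT x-\PT b|^{2d-s}}\,\dd\sigma_d(\PT x),
\]
which satisfies $U_s^{\epsilon}(\PT x)=|\PT x-\PT b|^{-s}$ for $\PT x\in\mathbb{S}^d$; this is classical (cf.\ \cite{La1972}) and is the $t=1$ specialization of the cap computation behind Theorem~\ref{thm:SignEq} (cf.\ \eqref{bal} and \cite{BrDrSa2009}). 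Since also $U_s^{\sigma_d}\equiv W_s(\mathbb{S}^d)$ on $\mathbb{S}^d$, for \emph{every} constant $\alpha$ the signed measure $\alpha\sigma_d-q\epsilon$ has weighted $s$-potential $\alpha\,U_s^{\sigma_d}(\PT x)-q\,U_s^{\epsilon}(\PT x)+q|\PT x-\PT b|^{-s}=\alpha\,W_s(\mathbb{S}^d)$ at each $\PT x\in\mathbb{S}^d$, i.e.\ constant there.

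Its total charge is $\alpha-q\|\epsilon\|$, and the mass of $\epsilon$ follows from the symmetry of the kernel: since $1=U_s^{\sigma_d}/W_s(\mathbb{S}^d)$ on $\supp(\epsilon)\subset\mathbb{S}^d$,
\[
\|\epsilon\|=\frac{1}{W_s(\mathbb{S}^d)}\int U_s^{\sigma_d}\,\dd\epsilon=\frac{1}{W_s(\mathbb{S}^d)}\int U_s^{\epsilon}\,\dd\sigma_d=\frac{1}{W_s(\mathbb{S}^d)}\int|\PT y-\PT b|^{-s}\,\dd\sigma_d(\PT y)=\frac{U_s^{\sigma_d}(\PT b)}{W_s(\mathbb{S}^d)},
\]
using Fubini and $U_s^{\epsilon}=U_s^{\delta_{\PT b}}$ $\sigma_d$-a.e.\ on $\mathbb{S}^d$. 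Thus the choice $\alpha:=1+q\,U_s^{\sigma_d}(\PT b)/W_s(\mathbb{S}^d)$ makes the total charge equal to $1$; by the uniqueness of the signed equilibrium \cite[Lemma~23]{BrDrSa2009} the resulting measure is $\eta_{\PT b}$, and substituting $\dd\epsilon$ gives \eqref{eq:signedeqdens} with constant weighted potential $\alpha W_s(\mathbb{S}^d)=W_s(\mathbb{S}^d)+q\,U_s^{\sigma_d}(\PT b)$, which is $\Phi_s(1)$ by \eqref{eq:Phi} and $\mathcal{F}_s(\mathbb{S}^d)$ by \eqref{eq:functional.identity}. This part goes through for any $q\in\mathbb{R}\setminus\{0\}$. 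Finally, \eqref{eq:SignEqPot(b)} is obtained by writing $|\PT x-\PT b|^2=1+R^2+2R(\PT x\cdot\PT p)$, reducing $U_s^{\sigma_d}(\PT b)$ to $c_d\int_{-1}^{1}(1+R^2+2Ru)^{-s/2}(1-u^2)^{d/2-1}\,\dd u$ with $c_d=\Gamma((d+1)/2)/(\sqrt{\pi}\,\Gamma(d/2))$, and substituting $u=1-2v$; this is Euler's integral for $\Hypergeom{2}{1}{s/2,d/2}{d}{4R/(R+1)^2}$ (with $4R/(R+1)^2<1$ since $R\neq1$), the prefactor collapsing to $1$ via the Legendre duplication formula.

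For the last statement assume $q<0$. As $|\PT x-\PT b|$ is strictly increasing in $\PT x\cdot\PT p$ and $-q(R^2-1)^{d-s}>0$, the density $\eta_{\PT b}'$ in \eqref{eq:signedeqdens} is strictly decreasing in $\PT x\cdot\PT p$ and hence minimized at the North Pole $\PT p$, where $|\PT x-\PT b|=R+1$ and $(R^2-1)^{d-s}/(R+1)^{2d-s}=(R-1)^{d-s}/(R+1)^{d}$. Therefore $\eta_{\PT b}\geq0$ iff $\eta_{\PT b}'(\PT p)\geq0$, and multiplying this inequality by $W_s(\mathbb{S}^d)/q<0$ (which reverses it) yields precisely \eqref{eq:Gonchar.condition}. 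If \eqref{eq:Gonchar.condition} holds, then $\eta_{\PT b}\in\mathcal{M}(\mathbb{S}^d)$ has finite $s$-energy (bounded density) and constant weighted potential everywhere on $\mathbb{S}^d\supseteq\supp(\eta_{\PT b})$, so Proposition~\ref{prop:1}(d) gives $\mu_{Q_{\PT b,s}}=\eta_{\PT b}$; since $\eta_{\PT b}'$ is continuous, not identically zero, and strictly monotone in the polar angle, its zero set is $\sigma_d$-null and $\supp(\mu_{Q_{\PT b,s}})=\mathbb{S}^d$. Conversely, if $\supp(\mu_{Q_{\PT b,s}})=\mathbb{S}^d$, then \eqref{VarEq1}--\eqref{VarEq2} force $U_s^{\mu_{Q_{\PT b,s}}}+Q_{\PT b,s}=F_{Q_{\PT b,s},s}$ q.e.\ on $\mathbb{S}^d$, so $\tau:=\mu_{Q_{\PT b,s}}-\eta_{\PT b}$ is a finite-energy signed measure with $\tau(\mathbb{S}^d)=0$ and $U_s^{\tau}$ constant q.e.\ (hence $\sigma_d$-a.e.\ and $|\tau|$-a.e.) on $\mathbb{S}^d$; integrating $U_s^{\tau}$ against $\sigma_d$ and using $U_s^{\sigma_d}\equiv W_s(\mathbb{S}^d)$ forces that constant to vanish, so $\mathcal{I}_s(\tau)=\int U_s^{\tau}\,\dd\tau=0$, and by strict positive-definiteness of the Riesz $s$-energy ($0<s<d$) we get $\tau=0$, i.e.\ $\eta_{\PT b}=\mu_{Q_{\PT b,s}}\geq0$ and \eqref{eq:Gonchar.condition} holds.

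The one genuinely non-bookkeeping ingredient is the reproducing identity / explicit point-balayage density displayed above; granting it, everything else rests on symmetry of the Riesz kernel, a routine hypergeometric integral, and strict positive-definiteness for $0<s<d$. I therefore expect the main work to be citing (or, for self-containedness, establishing) that balayage formula, together with careful tracking of normalizing constants in the evaluation of $U_s^{\sigma_d}(\PT b)$.
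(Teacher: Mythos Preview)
Your proof is correct and follows essentially the same approach as the paper: both construct $\eta_{\PT b}$ as $\alpha\sigma_d - q\,\bal_s(\delta_{\PT b},\mathbb{S}^d)$ with the explicit Poisson-type balayage density (the paper cites \cite[Proof of Theorem~2]{BrDrSa2009} for this), compute $\|\epsilon_{\PT b}\|=U_s^{\sigma_d}(\PT b)/W_s(\mathbb{S}^d)$, and read off \eqref{eq:Gonchar.condition} from the minimum of the strictly decreasing density at the North Pole. You supply more detail than the paper does---an explicit Fubini computation of $\|\epsilon\|$, the Euler-integral derivation of \eqref{eq:SignEqPot(b)}, and a careful converse via positive-definiteness where the paper simply appeals to ``uniqueness of either measure and variational inequalities''---but the skeleton is the same.
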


\begin{proof}
Indeed, it can be readily verified that $\eta_{\PT{b}}$ given by \eqref{eq:signedeqdens} is the signed equilibrium measure on the whole sphere on observing that $\epsilon_{\PT{b}} \DEF \bal_s( \delta_{\PT{b}}, \mathbb{S}^d )$ with
\begin{equation*}
\dd \epsilon_{\PT{b}}( \PT{x} ) = \dd \epsilon_1( \PT{x} ) = \frac{\left(R^2-1\right)^{d-s}}{W_s(\mathbb{S}^d) \left| \PT{x} - \PT{b} \right|^{2d-s}} \dd \sigma_d( \PT{x} ), \qquad \left\| \epsilon_{\PT{b}} \right\| = \frac{1}{W_s( \mathbb{S}^d )} \, U_s^{\sigma_d}( \PT{b} ),
\end{equation*}
by using a suitably defined Kelvin transformation; for details see \cite[Proof of Theorem~2]{BrDrSa2009} which can be easily extended to hold for the external field \eqref{eq:neg.external.field}.

The unique signed equilibrium $\eta_{\PT{b}}$ on $\mathbb{S}^d$ associated with $Q_{\PT{b},s}$ coincides with the unique positive $s$-equilibrium measure on $\mathbb{S}^d$ if and only if $\eta_{\PT{b}}$ is a measure. This is a consequence of the uniqueness of either measure and variational inequalities (Proposition~\ref{prop:1}(d)). As the strictly decreasing density function $\eta_{\PT{b}}^\prime$ assumes its minimum value at the North Pole~$\PT{p}$,
\begin{equation*}
\eta_{\PT{b}}^\prime( \PT{p} ) = 1 + \frac{q U_s^{\sigma_d} (\PT{b})}{W_s(\mathbb{S}^d)} - \frac{q}{W_s(\mathbb{S}^d)} \, \frac{\left(R-1\right)^{d-s}}{\left(R+1\right)^{d}},
\end{equation*}
we obtain the necessary and sufficient criterion in \eqref{eq:Gonchar.condition}.
\end{proof}

In light of \eqref{eq:Gonchar.condition} it is natural to ask if there is a critical distance $R_q > 1$ such that the support of $\mu_{Q_{\PT{b},s}}$ is all of $\mathbb{S}^d$ for $R \geq R_q$ but is a proper subset of $\mathbb{S}^d$ for $R < R_q$. For external fields with positive charge $q$ such a critical distance always exists; cf. \cite[Remark~4 and Example~5]{BrDrSa2009} and \cite{BrDrSa2012}. However, for negative charge $q$ such a critical distance does not always exist. We discuss this phenomenon in more detail in \cite{BrDrSaXXXXpreparation}.

How are the Riesz-$s$ external fields, generated by a positive point charge $q_+$ above the North Pole ($\PT{a} = R_+ \PT{p}$) and a negative point charge $q_-$ below the South Pole ($\PT{b} = - R_- \PT{p}$), related? Let $\eta_{\PT{a}}^\prime$ and $\eta_{\PT{b}}^\prime$ be the density functions of the respective signed equilibrium on $\mathbb{S}^d$. Then it can be readily verified that for same center distances $| \PT{a} | = | \PT{b} | = R > 1$ and total charge $q_-+q_+=0$ the densities complement each other; that is,
\begin{equation*}
\eta_{\PT{a}}^\prime( \PT{x} ) + \eta_{\PT{b}}^\prime( \PT{y} ) = 2 \qquad \text{for all $\PT{x}, \PT{y} \in \mathbb{S}^d$ with $\PT{p} \cdot \PT{x} + \PT{p} \cdot \PT{y} = 0$.}
\end{equation*}
On the other hand, when it is assumed that both $\eta_{\PT{a}}^\prime$ and $\eta_{\PT{b}}^\prime$ attain the same value at the North Pole, then again by \cite[Theorem~2]{BrDrSa2009} and Proposition~\ref{prop:SignEq},
\begin{equation} \label{eq:balance.relation}
1 + \frac{q_- U_s^{\sigma_d}(\PT{b})}{W_s(\mathbb{S}^d)} - \frac{q_- \left(R_-^2-1\right)^{d-s}}{W_s(\mathbb{S}^d) \left( R_- + 1 \right)^{2d-s}} = 1 + \frac{q_+ U_s^{\sigma_d}(\PT{a})}{W_s(\mathbb{S}^d)} - \frac{q_+ \left(R_+^2-1\right)^{d-s}}{W_s(\mathbb{S}^d) \left( R_+ - 1 \right)^{2d-s}}.
\end{equation}
Here, in the case $| \PT{a} | = | \PT{b} | = R > 1$ (i.e. $R_- = R_+ = R$), Equation~\eqref{eq:balance.relation} is equivalent with
\begin{equation} \label{eq:rel.q+.q-}
\left( R^2 - 1 \right)^{s/2} U_s^{\sigma_d}( \PT{a} )
= \left( \frac{R+1}{R-1} \right)^{d-s/2} \frac{q_+}{q_+ - q_-} + \left( \frac{R-1}{R+1} \right)^{d-s/2} \frac{-q_-}{q_+ - q_-}
\end{equation}
and in the case $q_- + q_+ = 0$ (i.e. $q_+ = - q_- = q > 0$), Equation~\eqref{eq:balance.relation} becomes
\begin{equation} \label{eq:rel.R-.R+}
- U_s^{\sigma_d}(\PT{b}) + \frac{\left(R_- - 1\right)^{d-s}}{\left( R_- + 1 \right)^{d}} = U_s^{\sigma_d}(\PT{a}) - \frac{\left(R_+ + 1\right)^{d-s}}{\left( R_+ - 1 \right)^{d}}.
\end{equation}
Observe that the last relation between $R_-$ and $R_+$ does not depend on the charge~$q$. Figure~\ref{fig:signed.equilibria.sphere} illustrates theses cases.

\begin{figure}[ht]
\includegraphics[scale=.825]{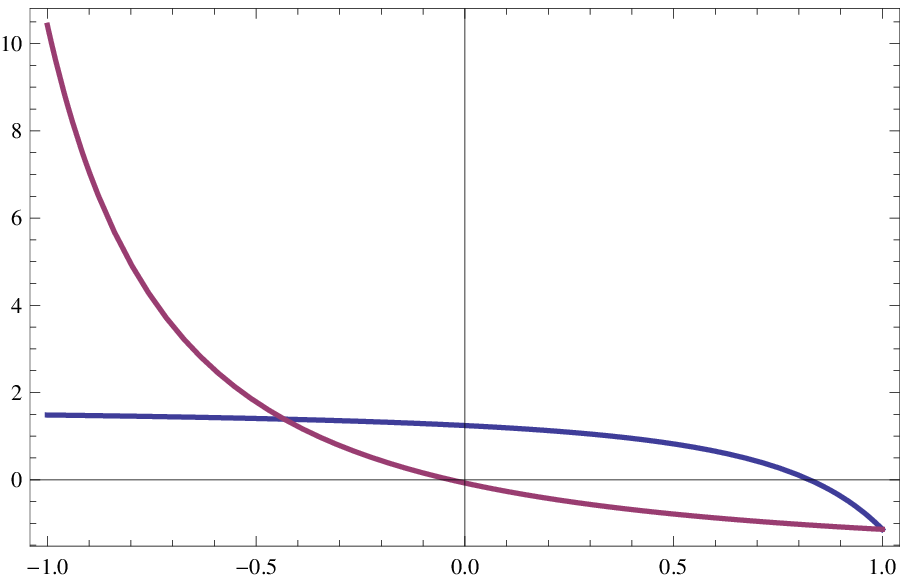} \phantom{\includegraphics[scale=.825]{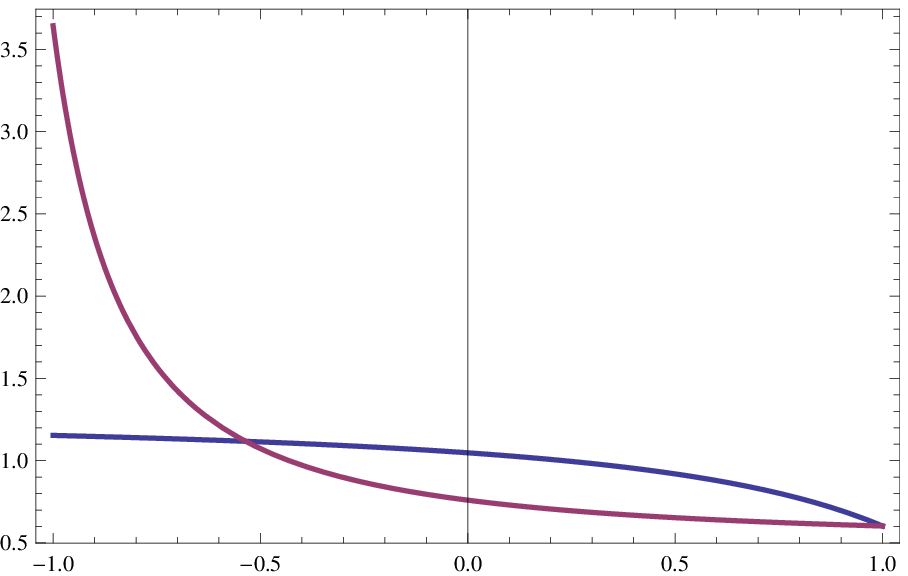}} \\[1.5mm]
\includegraphics[scale=.825]{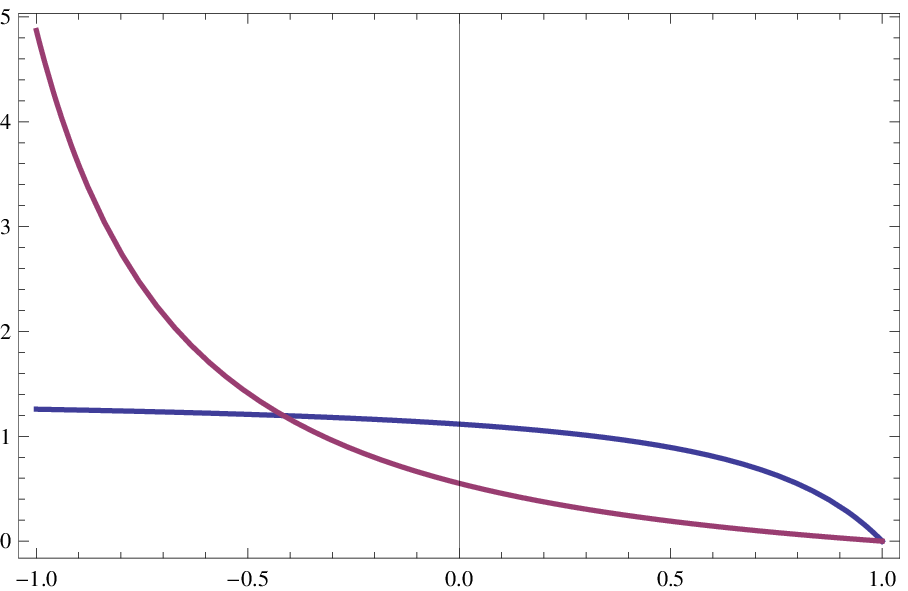} \includegraphics[scale=.825]{SignedEquSphereQ0s1.eps} \\[1.5mm]
\includegraphics[scale=.825]{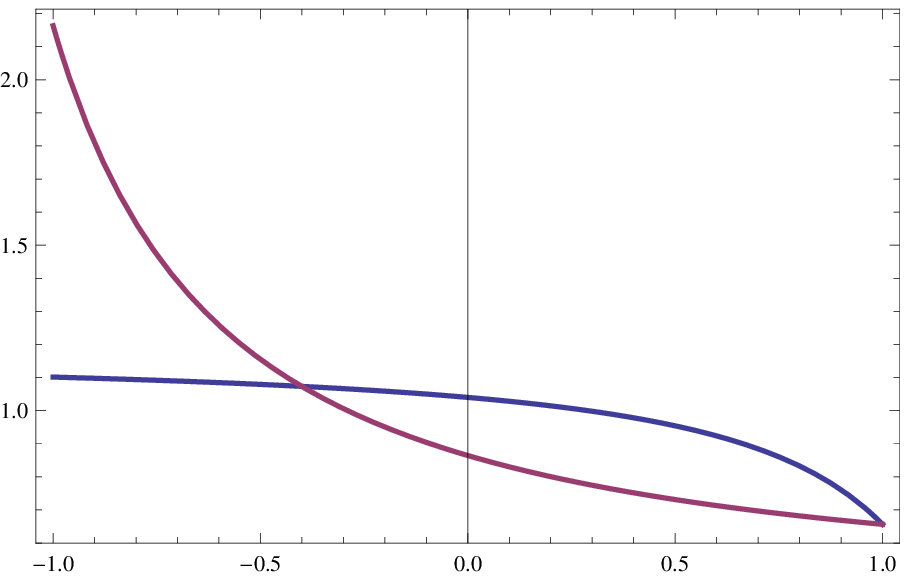} \includegraphics[scale=.825]{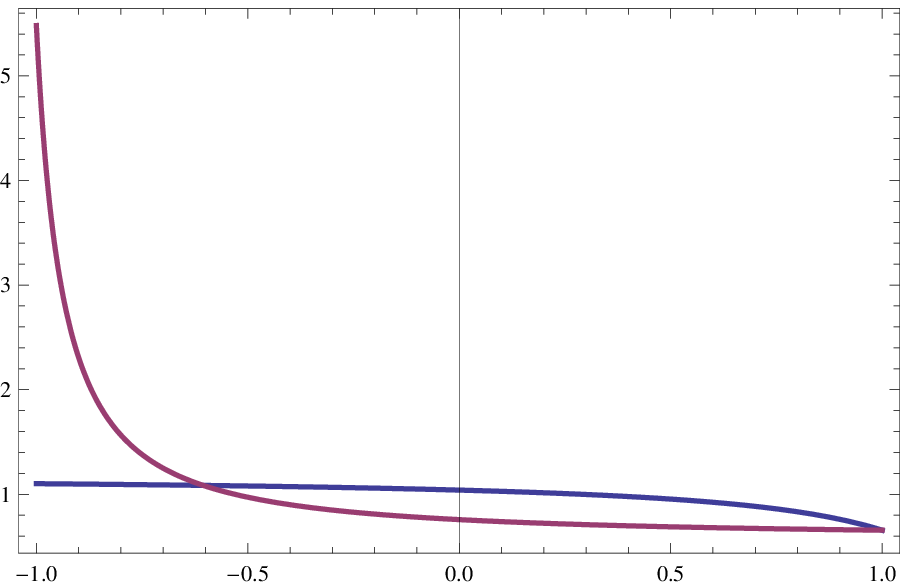}
\caption{\label{fig:signed.equilibria.sphere} Comparison of density functions of signed equilibria on $\mathbb{S}^2$ associated with Riesz external fields generated by a positive point charge above the North Pole (Blue curves in colored version, i.e. smaller value at~$-1$) and a negative point charge below the South Pole (Red curves in colored version, i.e. larger value at~$-1$) having the same density function value at $1$ (North Pole). Left column: fixed center distance $R = 1 + \phi$ ($\phi$ Golden ratio), $q_+ = 1$ and $q_-$ from \eqref{eq:rel.q+.q-}. Right column: total charge $q_- + q_+ = 0$ with $q_+ = -q_- = q = 1$, $R_+ = 1 + \phi$ and $R_-$ from \eqref{eq:rel.R-.R+}. From top to bottom: $s= 1/2$, $1$, $3/2$. (No solution for $s = 1/2$ in right column.)}
\end{figure}

Next, we investigate the signed equilibrium on spherical caps $\Sigma_t$. Ultimately, our goal is to use this signed equilibrium to obtain the $s$-extremal measure on $\mathbb{S}^d$ associated with the external field in \eqref{eq:neg.external.field} when its support is not all of $\mathbb{S}^d$. We remark that the proofs of the results and remarks in this section can be obtained by inspecting the proofs of the related results in \cite{BrDrSa2009} (details will be presented in a later paper \cite{BrDrSaXXXXpreparation}).

\begin{thm} \label{thm:SignEq} Let $d-2<s<d$. The signed $s$-equilibrium $\eta_t$
on the spherical cap $\Sigma_t \subset \mathbb{S}^d$, $-1 < t < 1$, associated with $Q_{\PT{b},s}$ in \eqref{eq:neg.external.field} is
given by \eqref{eq:eta}. It is absolutely continuous in the sense that for $\PT{x} = ( \sqrt{1-u^2} \, \overline{\PT{x}}, u) \in \Sigma_t$,
\begin{equation} \label{eq:eta.t}
\dd \eta_{t}(\PT{x}) = \eta_{t}^{\prime}(u) \frac{\omega_{d-1}}{\omega_{d}} \left( 1 - u^2 \right)^{d/2-1} \dd u \dd\sigma_{d-1}(\overline{\PT{x}}),
\end{equation}
where (with $R=|\PT{b}|$ and $r = \sqrt{R^2 + 2 R t + 1}$)
\begin{equation}
\begin{split} \label{eta.t.prime.1st.result}
\eta_{t}^{\prime}(u) &= \frac{1}{W_s(\mathbb{S}^d)} \frac{\gammafcn(d/2)}{\gammafcn(d-s/2)}
\left( \frac{1-t}{1-u} \right)^{d/2} \left( \frac{t-u}{1-t} \right)^{(s-d)/2} \\
&\phantom{=\times}\times \Bigg\{ \Phi_s (t)
\HypergeomReg{2}{1}{1,d/2}{1-(d-s)/2}{\frac{t-u}{1-u}}  \\
&\phantom{=\times\pm}-  \frac{q\left( R - 1 \right)^{d-s}}{r^{d}}
\HypergeomReg{2}{1}{1,d/2}{1-(d-s)/2}{\frac{\left(R+1\right)^{2}}{r^{2}}
\, \frac{t-u}{1-u}}  \Bigg\}.
\end{split}
\end{equation}
The density $\eta_{t}^{\prime}$ is expressed in terms of regularized Gauss hypergeometric functions.

Furthermore, if $\PT{z} = ( \sqrt{1-\xi^2}\; \overline{\PT{z}},
\xi)\in \mathbb{S}^d$, the weighted $s$-potential is given by
\begin{align}
U_s^{\eta_t}(\PT{z})+Q_{\PT{b},s}(\PT{z}) &= \Phi_s(t), \qquad \PT{z} \in \Sigma_t, \label{eq:weighted.inside} \\
\begin{split}
U_s^{\eta_t}(\PT{z})+Q_{\PT{b},s}(\PT{z}) &= \Phi_s(t) + \frac{q}{\rho^s} \, \mathrm{I}\Big(\frac{(R-1)^2}{r^2} \frac{\xi-t}{1+\xi};
\frac{d-s}{2}, \frac{s}{2} \Big) \\
&\phantom{=\pm}- \Phi_s(t) \, \mathrm{I}\Big(\frac{\xi-t}{1+\xi};
\frac{d-s}{2}, \frac{s}{2}\Big), \qquad \PT{z} \in \mathbb{S}^d \setminus \Sigma_t, \label{eq:weighted.outside}
\end{split}
\end{align}
where $\rho=\sqrt{R^2+2R\xi+1}$ and $\mathrm{I}(x;a,b)$ is the regularized incomplete Beta function. 
\end{thm}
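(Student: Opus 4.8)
My plan is to take the representation $\eta_t=[\Phi_s(t)/W_s(\mathbb{S}^d)]\,\nu_t-q\,\epsilon_t$ from \eqref{eq:eta} as the starting point: it was already checked above that this signed measure has total mass $1$ (by the definition \eqref{eq:Phi} of $\Phi_s(t)$) and weighted $s$-potential equal to the constant $\Phi_s(t)$ on $\Sigma_t$ (equation \eqref{eq:weighted.potential}), so by uniqueness of the signed equilibrium \cite[Lemma~23]{BrDrSa2009} it \emph{is} $\eta_t$, and \eqref{eq:weighted.inside} is nothing but \eqref{eq:weighted.potential}. What genuinely requires proof is therefore (i) the closed-form density \eqref{eq:eta.t}--\eqref{eta.t.prime.1st.result}, i.e.\ explicit formulas for the two balayage measures $\nu_t=\bal_s(\sigma_d,\Sigma_t)$ and $\epsilon_t=\bal_s(\delta_{\PT{b}},\Sigma_t)$; and (ii) the weighted potential \eqref{eq:weighted.outside} at points $\PT{z}\notin\Sigma_t$. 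Both are obtained by transcribing the corresponding arguments for positively charged axis-supported fields in \cite{BrDrSa2009}; the only change is geometric, so the task is to verify that the placement of the source $\PT{b}=-R\PT{p}$ below the South Pole (rather than above the North Pole) enters solely through the distance $r=\sqrt{R^2+2Rt+1}$ from $\PT{b}$ to the bounding circle $\{u=t\}$ of the cap $\Sigma_t$.

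For (i) the mechanism is stereographic projection from the North Pole $\PT{p}$, which maps $\mathbb{S}^d\setminus\{\PT{p}\}$ onto $\mathbb{R}^d$ carrying $\Sigma_t$ to a closed ball centered at the origin; under the associated Kelvin transform the Riesz-$s$ kernel picks up an explicit conformal factor, and $s$-balayage commutes with this transformation. Two building blocks suffice. First, since $\sigma_d$ is the $s$-equilibrium measure of $\mathbb{S}^d$, its potential is already the constant $W_s(\mathbb{S}^d)$ on $\Sigma_t$; hence $\nu_t$ equals $W_s(\mathbb{S}^d)/W_s(\Sigma_t)$ times the $s$-equilibrium measure $\mu_{\Sigma_t}$ of the cap, whose density with respect to $\sigma_d$ is classical (obtained by the same stereographic reduction to the Riesz equilibrium measure of a ball, cf.\ \cite{BrDrSa2009}) and produces the $\Phi_s(t)$-term of \eqref{eta.t.prime.1st.result}. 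Second, the $s$-balayage onto $\Sigma_t$ of a point mass on the polar axis was likewise computed in \cite{BrDrSa2009}; applied to $\delta_{\PT{b}}$ it yields $\epsilon_t$, contributing the $q$-term of \eqref{eta.t.prime.1st.result}, with the hypergeometric argument rescaled by $(R+1)^2/r^2$. Forming $[\Phi_s(t)/W_s(\mathbb{S}^d)]\,\nu_t-q\,\epsilon_t$ then gives \eqref{eq:eta.t}--\eqref{eta.t.prime.1st.result}; the prefactor $\gammafcn(d/2)/\gammafcn(d-s/2)$ and the appearance of regularized rather than ordinary ${}_{2}F_{1}$'s come out of the constants in the ball formulas after the change of variable $u\mapsto$ radius.

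For (ii), fix $\PT{z}=(\sqrt{1-\xi^2}\,\overline{\PT{z}},\xi)$ with $\xi>t$ and write
\[
U_s^{\eta_t}(\PT{z})+Q_{\PT{b},s}(\PT{z})=\frac{\Phi_s(t)}{W_s(\mathbb{S}^d)}\,U_s^{\nu_t}(\PT{z})-q\,U_s^{\epsilon_t}(\PT{z})+q\,|\PT{z}-\PT{b}|^{-s}.
\]
Balayage onto $\Sigma_t$ preserves the potential on $\Sigma_t$ and can only lower it off $\Sigma_t$, so $U_s^{\nu_t}(\PT{z})=W_s(\mathbb{S}^d)-\Delta_1(\PT{z})$ and $U_s^{\epsilon_t}(\PT{z})=|\PT{z}-\PT{b}|^{-s}-\Delta_2(\PT{z})$ with defects $\Delta_1,\Delta_2\ge 0$; the $q\,|\PT{z}-\PT{b}|^{-s}$ term cancels and there remains $\Phi_s(t)-\tfrac{\Phi_s(t)}{W_s(\mathbb{S}^d)}\Delta_1(\PT{z})+q\,\Delta_2(\PT{z})$. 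The defect incurred by balayage of a point mass onto a polar cap, evaluated outside the cap, has the explicit form $(\text{distance})^{-s}\,\mathrm{I}\big(\,\cdot\,;\tfrac{d-s}{2},\tfrac{s}{2}\big)$ with a regularized incomplete Beta function --- once more by the stereographic/Kelvin reduction to a ball, where this defect integral is elementary. Applying it to $\delta_{\PT{b}}$ gives $\Delta_2(\PT{z})=\rho^{-s}\,\mathrm{I}\big(\tfrac{(R-1)^2}{r^2}\tfrac{\xi-t}{1+\xi};\tfrac{d-s}{2},\tfrac{s}{2}\big)$ with $\rho=\sqrt{R^2+2R\xi+1}$, and applying it to $\sigma_d$ (via the identification of $\nu_t$ above, or directly by integrating point-mass defects against $\sigma_d$) gives $\Delta_1(\PT{z})=W_s(\mathbb{S}^d)\,\mathrm{I}\big(\tfrac{\xi-t}{1+\xi};\tfrac{d-s}{2},\tfrac{s}{2}\big)$. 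Substituting yields \eqref{eq:weighted.outside}; note that at $\xi=t$ both incomplete Beta values vanish, so the potential is continuous across $\partial\Sigma_t$ with boundary value $\Phi_s(t)$, in agreement with \eqref{eq:weighted.inside}.

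The conceptual part --- stereographic projection plus Kelvin transform turns every balayage problem on a polar cap into the classical one on a ball, and balayage commutes with the transform --- is routine, and is exactly the engine of \cite{BrDrSa2009}. The real obstacle, and where errors are easy to make, is the bookkeeping: carrying the conformal factors through the transformation, getting the exact rescalings $(R\pm1)^2/r^2$ of the hypergeometric arguments and the distances $r$ and $\rho$ correct, and matching the normalizing constants $\gammafcn(d/2)/\gammafcn(d-s/2)$ and $W_s(\mathbb{S}^d)$. This is precisely why the full computation is deferred to \cite{BrDrSaXXXXpreparation}.
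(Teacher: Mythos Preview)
Your proposal is correct and matches the paper's approach: the paper itself does not give a self-contained proof but states that the result ``can be obtained by inspecting the proofs of the related results in \cite{BrDrSa2009}'', with details deferred to \cite{BrDrSaXXXXpreparation}. Your sketch --- the balayage decomposition \eqref{eq:eta}, the stereographic/Kelvin reduction of $\nu_t$ and $\epsilon_t$ to the classical ball computations, and the incomplete-Beta defect formulas for the potential off $\Sigma_t$ --- is precisely the engine of \cite{BrDrSa2009} adapted to a source below the South Pole, and your observation that the change in geometry enters through the distances $r,\rho$ and the factors $(R\pm1)$ is the correct bookkeeping insight.
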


\begin{rmk}
There is a simple relation between the positive and negative point charge problem with regard to their signed equilibria when these charges are on opposite sides of an axis. Namely, if $Q^+( \PT{x} ) \DEF q_+ / | \PT{x} - \PT{a} |^s$ and $Q^-( \PT{y} ) \DEF q_- / | \PT{y} - \PT{b} |^s$ are the Riesz external fields generated by a positive point charge at $\PT{a}$ above the North Pole and a negative point charge at $\PT{b}$ below the South Pole, then the weighted $s$-potentials of the respective signed equilibria $\eta_{\PT{a}}$ and $\eta_{\PT{b}}$ on the spherical cap $\Sigma_t$ satisfy
\begin{equation*}
U_s^{\eta_{\PT{a}}}( \PT{x} ) + Q^+( \PT{x} ) = \sgnEqconst_{\Sigma_t, Q^+, s}, \qquad U_s^{\eta_{\PT{b}}}( \PT{y} ) + Q^-( \PT{y} ) = \sgnEqconst_{\Sigma_t, Q^-, s}
\end{equation*}
everywhere on $\mathbb{S}^d$.
Furthermore, if $q_-+q_+ = 0$, then the following principle holds:
\begin{equation*}
U_s^{\eta_{\PT{a}}}( \PT{x} ) + U_s^{\eta_{\PT{b}}}( \PT{y} ) = \sgnEqconst_{\Sigma_t, Q^+, s} + \sgnEqconst_{\Sigma_t, Q^-, s}
\end{equation*}
for all $\PT{x}, \PT{y} \in \mathbb{S}^d$ such that $| \PT{x} - \PT{a} | = | \PT{y} - \PT{b} |$.
\end{rmk}

The next remark, leading up Theorem~\ref{thm:s.equilibrium.measure}, emphasizes the special role of $\Phi_s( t )$.

\begin{rmk}
It can be shown that the signed equilibrium $\eta_t$ on $\Sigma_t$ associated with $Q_{\PT{b},s}$ is a positive measure with support $\Sigma_t$ if and only if
\begin{equation} \label{eq:weighted.neccessary.density}
\Phi_s( t ) \geq q \frac{\left( R - 1 \right)^{d-s}}{\left( R^2 + 2 R t + 1 \right)^{d/2}},
\end{equation}
whereas the weighted $s$-potential of the signed equilibrium $\eta_t$ on $\Sigma_t$ associated with $Q_{\PT{b},s}$ exceeds the value $\Phi_s(t)$ assumed on $\Sigma_t$ strictly \emph{outside} of $\Sigma_t$ (but on $\mathbb{S}^d$) if and only if
\begin{equation} \label{eq:weighted.neccessary}
W_s( \mathbb{S}^d ) \, \frac{1 + q \left\| \epsilon_t \right\|}{\left\| \nu_t \right\|} = \Phi_s( t ) \leq q \frac{\left( R - 1 \right)^{d-s}}{\left( R^2 + 2 R t + 1 \right)^{d/2}}.
\end{equation}
As the right-hand side above is negative, one also has the weaker restriction $\| \epsilon_t \| > -1 / q$.
\end{rmk}

For $\eta_t$ to coincide with the $s$-extremal measure on $\mathbb{S}^d$ associated with $Q_{\PT{b},s}$ with support $\Sigma_t$ both \eqref{eq:weighted.neccessary} and \eqref{eq:weighted.neccessary.density} have to hold (cf. Proposition~\ref{prop:1}(d)). The difficult part of the next statement is to verify that the arising equation has a unique solution, which can be done as in the proof of \cite[Theorem~13]{BrDrSa2009}.

\begin{thm} \label{thm:s.equilibrium.measure}
Let $d - 2 < s < d$. For the external field \eqref{eq:neg.external.field} the function $\Phi_s (t)$ given in \eqref{eq:Phi} has precisely one global minimum $t_c\in (-1,1]$. This minimum is either the unique solution $t_c\in (-1,1)$ of the equation
\begin{equation*}
\Phi_s (t) = q \left( R - 1 \right)^{d-s} \big/ \left( R^2 + 2 R t + 1 \right)^{d/2},
\end{equation*}
or $t_c=1$ when such a solution does not exist. In addition, $\Phi_s(t)$ is greater than the right-hand side above if $t \in (-1,t_c)$ and is less than if $t \in (t_c,1)$.  Moreover, $t_c = \max \{ t : \eta_t \geq 0 \}$. The extremal measure $\mu_{Q_{\PT{b},s}}$ on $\mathbb{S}^d$ is given by $\eta_{t_c}$ (see \eqref{eq:eta.t}), and $\supp( \mu_{Q_{\PT{b},s}} ) = \Sigma_{t_c}$. 
\end{thm}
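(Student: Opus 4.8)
The plan is to reduce the entire statement to the behaviour of the single real function $t\mapsto\Phi_s(t)$ on $(-1,1]$. Put $h(t)\DEF q\,(R-1)^{d-s}/(R^2+2Rt+1)^{d/2}$ for the right-hand side of the equation in the theorem and set $g\DEF\Phi_s-h$; since $q<0$ and $R>1$, $h$ is negative and strictly increasing on $[-1,1]$. By \eqref{eq:weighted.potential}, \eqref{eq:functional.identity} and Proposition~\ref{prop:SignEq} one has $\Phi_s(t)=\mathcal{F}_s(\Sigma_t)=W_s(\Sigma_t)+\int Q_{\PT{b},s}\,d\mu_{\Sigma_t}$ for all $t\in(-1,1]$; in particular $\Phi_s$ is continuous, and as $t\to-1^+$ the caps $\Sigma_t$ shrink to the South Pole, so $W_s(\Sigma_t)\to\infty$ while $\int Q_{\PT{b},s}\,d\mu_{\Sigma_t}\to q(R-1)^{-s}$, whence $\Phi_s(t)\to+\infty$ and $g(t)\to+\infty$.

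\emph{The crux: unimodality of $\Phi_s$.} I would prove that $\Phi_s$ is strictly decreasing on an initial interval $(-1,t_c)$ and strictly increasing on $(t_c,1)$ for some $t_c\in(-1,1]$, i.e.\ that $\Phi_s'$ changes sign from $-$ to $+$ at most once. The mechanism is a Mhaskar--Saff-type identity, obtained either by differentiating $\mathcal{F}_s(\Sigma_t)$ along the one-parameter family of caps, or — as carried out in \cite[Theorem~13]{BrDrSa2009} — by differentiating the closed form \eqref{eq:Phi} after inserting the explicit expressions for $\|\epsilon_t\|=\|\bal_s(\delta_{\PT{b}},\Sigma_t)\|$ and $\|\nu_t\|=\|\bal_s(\sigma_d,\Sigma_t)\|$ in terms of regularized incomplete Beta functions. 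Either way one obtains
\[
\Phi_s'(t)=c(t)\bigl(h(t)-\Phi_s(t)\bigr)=-c(t)\,g(t),\qquad c(t)>0,\quad t\in(-1,1).
\]
Granting this, $g'+c\,g=\Phi_s'-h'+c\,g=-h'<0$ on $(-1,1)$, so the product of $g$ with a positive integrating factor is strictly decreasing; hence $g$ vanishes at most once in $(-1,1)$, passing from positive to negative there. Through $\Phi_s'=-c\,g$ this gives the claimed monotonicity of $\Phi_s$, with a unique global minimum at the zero $t_c$ of $g$ when such a zero exists in $(-1,1)$, and at $t_c=1$ otherwise — the only no-zero possibility, since $g(t)\to+\infty$ as $t\to-1^+$ rules out $g$ being negative throughout $(-1,1)$. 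In the interior case $\Phi_s(t_c)=h(t_c)$ is exactly the equation in the statement, and $g>0$ on $(-1,t_c)$ together with $g<0$ on $(t_c,1)$ gives the asserted inequalities $\Phi_s(t)>h(t)$ for $t\in(-1,t_c)$ and $\Phi_s(t)<h(t)$ for $t\in(t_c,1)$.

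\emph{Identifying $\mu_{Q_{\PT{b},s}}$ with $\eta_{t_c}$.} If $t_c\in(-1,1)$, then $g(t_c)=0$, i.e.\ $\Phi_s(t_c)=h(t_c)$; this is precisely condition \eqref{eq:weighted.neccessary.density}, so by the remark preceding the theorem $\eta_{t_c}$ is a positive measure with $\supp(\eta_{t_c})=\Sigma_{t_c}$. It has total mass $1$, finite $s$-energy (its density \eqref{eta.t.prime.1st.result} has only an integrable boundary singularity because $s>d-2$), weighted $s$-potential $\equiv\Phi_s(t_c)$ on $\Sigma_{t_c}$ by \eqref{eq:weighted.inside}, and — substituting $\Phi_s(t_c)=h(t_c)$ into \eqref{eq:weighted.outside}, the boundary case of \eqref{eq:weighted.neccessary} — weighted $s$-potential $\ge\Phi_s(t_c)$ on $\mathbb{S}^d\setminus\Sigma_{t_c}$. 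Hence the variational inequalities of Proposition~\ref{prop:1}(d) hold with constant $\Phi_s(t_c)$ (the lower one everywhere on $\mathbb{S}^d$, as $Q$ is continuous), and therefore $\eta_{t_c}=\mu_{Q_{\PT{b},s}}$, $F_{Q_{\PT{b},s},s}=\Phi_s(t_c)$, and $\supp(\mu_{Q_{\PT{b},s}})=\Sigma_{t_c}$. If $g$ has no interior zero, then $g>0$ on $(-1,1)$ and hence $g(1)\ge0$, which after rearranging is exactly condition \eqref{eq:Gonchar.condition}; by Proposition~\ref{prop:SignEq} this gives $\mu_{Q_{\PT{b},s}}=\eta_{\PT{b}}=\eta_1$ with support $\mathbb{S}^d=\Sigma_1$, i.e.\ $t_c=1$. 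Finally, since $\eta_t\ge0\iff g(t)\ge0$ by \eqref{eq:weighted.neccessary.density} and $\{t\in(-1,1]:g(t)\ge0\}=(-1,t_c]$ by the crux step, we obtain $t_c=\max\{t:\eta_t\ge0\}$.

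The hard part is the crux step: the factorized identity $\Phi_s'(t)=c(t)\bigl(h(t)-\Phi_s(t)\bigr)$ with $c(t)>0$. Establishing it requires the explicit balayage-mass formulas for $\epsilon_t$ and $\nu_t$ and a careful differentiation of regularized incomplete Beta functions in the variable $r=\sqrt{R^2+2Rt+1}$, and it is here that $d-2<s<d$ is essential — it keeps the Beta-function parameters and the balayage masses within their admissible ranges and makes the boundary singularity of $\eta_t'$ integrable. Once this identity and the monotonicity it implies are in hand, the rest is bookkeeping with the structural results of Section~\ref{sec:characterization} and with Proposition~\ref{prop:SignEq}.
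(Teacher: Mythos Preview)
Your proposal is correct and follows essentially the same approach the paper indicates: the paper does not give a self-contained proof but states that ``the difficult part \dots\ is to verify that the arising equation has a unique solution, which can be done as in the proof of \cite[Theorem~13]{BrDrSa2009}'', and you have reconstructed exactly that route --- the factorized derivative identity $\Phi_s'(t)=c(t)\bigl(h(t)-\Phi_s(t)\bigr)$ with $c(t)>0$, followed by the variational characterization via Proposition~\ref{prop:1}(d) and the remark preceding the theorem. Your integrating-factor argument for the at-most-one sign change of $g=\Phi_s-h$ is a clean and slightly more explicit packaging of the uniqueness step than the paper's reference suggests, but the substance is the same.
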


In the limiting case $s = d - 2$ with $s > 0$ it can be shown that the $s$-balayage measures
\begin{equation} \label{barBal}
\overline{\epsilon}_t \DEF \epsilon_{t,d-2} = \bal_{d-2}(\delta_{\PT{b}},\Sigma_t), \qquad  \overline{\nu}_t \DEF \nu_{t,d-2} = \bal_{d-2}(\sigma,\Sigma_t)
\end{equation}
exist and both have a component that is uniformly distributed on the boundary of $\Sigma_t$.
Each of these measures is the weak-star limit as $s \to (d-2)^+$ of the respective measure in \eqref{bal}.
An inspection of the proofs in \cite{BrDrSa2009} yields that the signed $s$-equilibrium $\overline{\eta}_t$ on the spherical cap $\Sigma_t$ associated with $\overline{Q}_{\PT{b},d-2}(\PT{x}) = q \, | \PT{x} - \PT{b} |^{2-d}$ ($q < 0$ and $\PT{b} = ( \PT{0}, -R )$ with $R > 1$) is given by
\begin{equation*}
\overline{\eta}_t = \left[ \overline{\Phi}_{d-2}(t) / W_{d-2}(\mathbb{S}^d) \right] \overline{\nu}_t - q \overline{\epsilon}_t, \qquad \overline{\Phi}_{d-2}(t) \DEF  W_{d-2}(\mathbb{S}^d) \left( 1 + q \left\|\overline{\epsilon}_t\right\| \right) / \left\|\overline{\nu}_t\right\|.
\end{equation*}
More explicitly, it can be decomposed into a continuous and a discrete part by means of
\begin{equation} \label{etabar}
\dd \overline{\eta}_t(\PT{x}) = \overline{\eta}_t^\prime(u) \, \dd \sigma_{d}\big|_{\Sigma_t}(\PT{x}) + \overline{q}_t \, \delta_t (u) \dd \sigma_{d-1}(\overline{\PT{x}}), \qquad \PT{x} = ( \sqrt{1 - u^2} \, \overline{\PT{x}}, u ) \in \mathbb{S}^d,
\end{equation}
where the density with respect to $\sigma_d$ restricted to $\Sigma_t$ has the form
\begin{equation*}
\overline{\eta}_t^\prime(u) = \frac{\overline{\Phi}_{d-2}(t)}{W_{d-2}(\mathbb{S}^d)} - \frac{q}{W_{d-2}(\mathbb{S}^d)} \, \frac{\left( R^2 - 1 \right)^2}{\left( R^2 + 2 R u + 1 \right)^{d/2+1}}
\end{equation*}
and the boundary charge uniformly distributed over the boundary of $\Sigma_t$ is
\begin{equation*}
\overline{q}_t = \frac{1-t}{2} \left( 1 - t^2 \right)^{d/2-1} \left[ \overline{\Phi}_{d-2}(t) - \frac{q \left( R - 1 \right)^2}{\left( R^2 + 2 R t + 1 \right)^{d/2}} \right].
\end{equation*}
The vanishing of this boundary charge characterizes the critical distance $t_c$ for which $\overline{\eta}_{t_c}$ becomes the $(d-2)$-extremal measure on $\mathbb{S}^d$ associated with $\overline{Q}_{\PT{b},d-2}$ (details will be presented in a later paper \cite{BrDrSaXXXXpreparation}).
A similar result holds for the logarithmic case $s = \log$ on $\mathbb{S}^2$.

\section{Examples of external field problems for small $N$}
\label{sec:Examples}

\subsection{The three point problem with a convex external field}
Let $\PT{a}$ be a fixed point in $\mathbb{R}^3$ with $R=| \PT{a} |>1$. It will be regarded as the source of an external field $Q(\PT{x})=f(|\PT{x}-\PT{a}|^2)$, where $f$ is a strictly convex and decreasing function. Then as we show below the optimal configuration (minimizing \eqref{eq:discrete.Riesz.energy}) on $\mathbb{S}^2$ for every $s>0$ is an equilateral triangle perpendicular to the main axis passing through $\PT{a}$ and the center of the sphere; cf. Figure~\ref{fig3}.
The intercept $t_0$ of the plane supporting this unique (up to rotation about the main axis) triangle with the main axis varies with~$f$. From these facts it is easy to see that
\begin{equation*}
\mathcal{E}_s^Q(3) = \frac{6}{\left[ 3 \left( 1 - t_0^2 \right) \right]^{s/2}} + 12 f( 1 - 2 R t_0 + R^2 ),
\end{equation*}
where the negative intercept $t_0$ is the unique minimum of $\mathcal{E}_s^Q(3)$ satisfying the relation
\begin{equation*}
\frac{3 s \left( - t_0 \right)}{\left[ 3 \left( 1 - t_0^2 \right) \right]^{s/2+1}} = 4 R \left[ - f^\prime( 1 - 2 R t_0 + R^2 ) \right].
\end{equation*}

\begin{rmk}
It is interesting to note that for the case $Q(\PT{x}) = \frac{1}{2} | \PT{x} - \PT{p} |^{-s}$ we deduce the well-known fact that the tetrahedron has minimal Riesz-$s$ energy for four points on $\mathbb{S}^2$ interacting via the Riesz-$s$ potential; see the first frame of Figure~\ref{fig3}.
\end{rmk}

\begin{figure}[h]
\includegraphics[scale=.4]{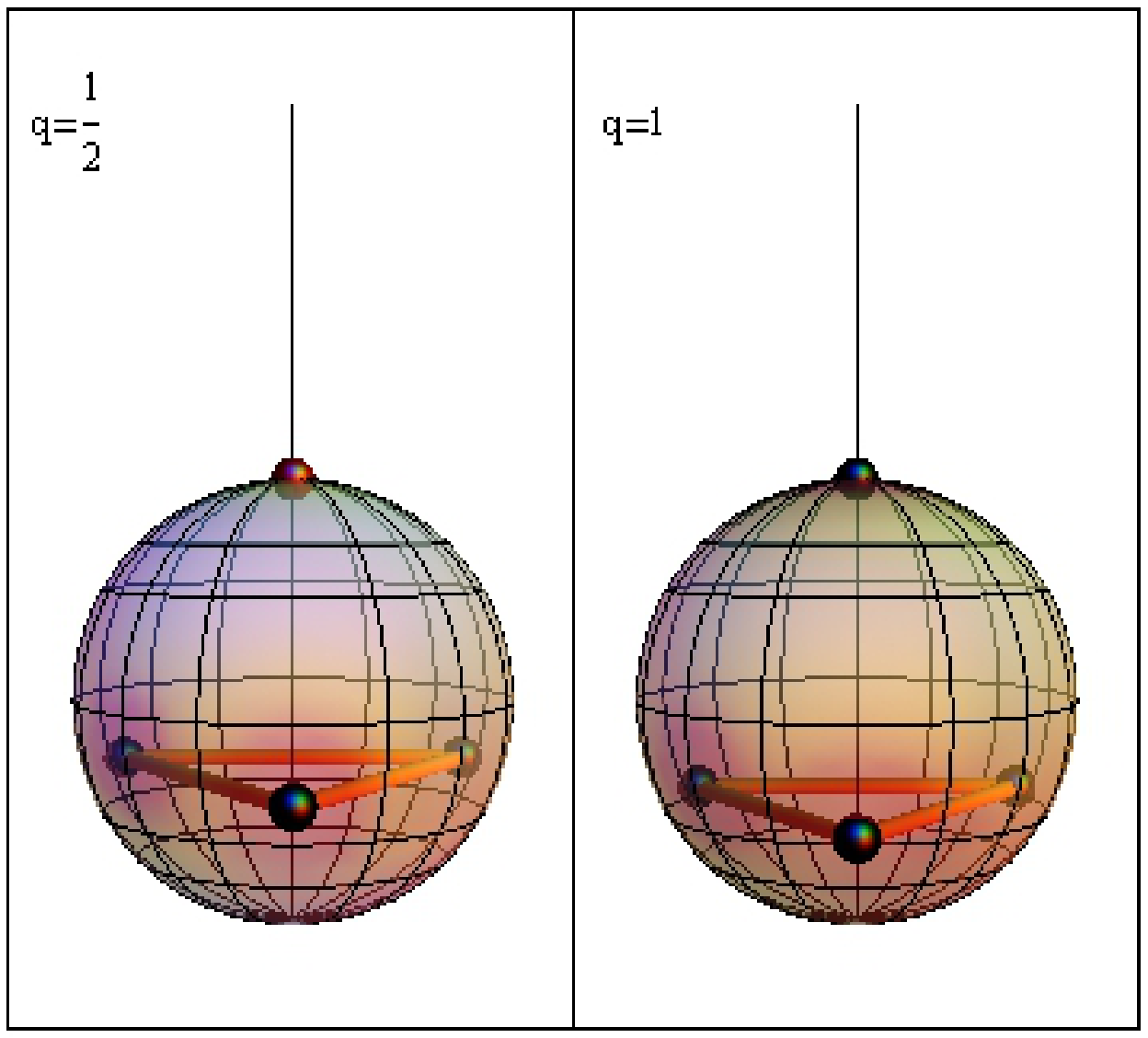}
\includegraphics[scale=.4]{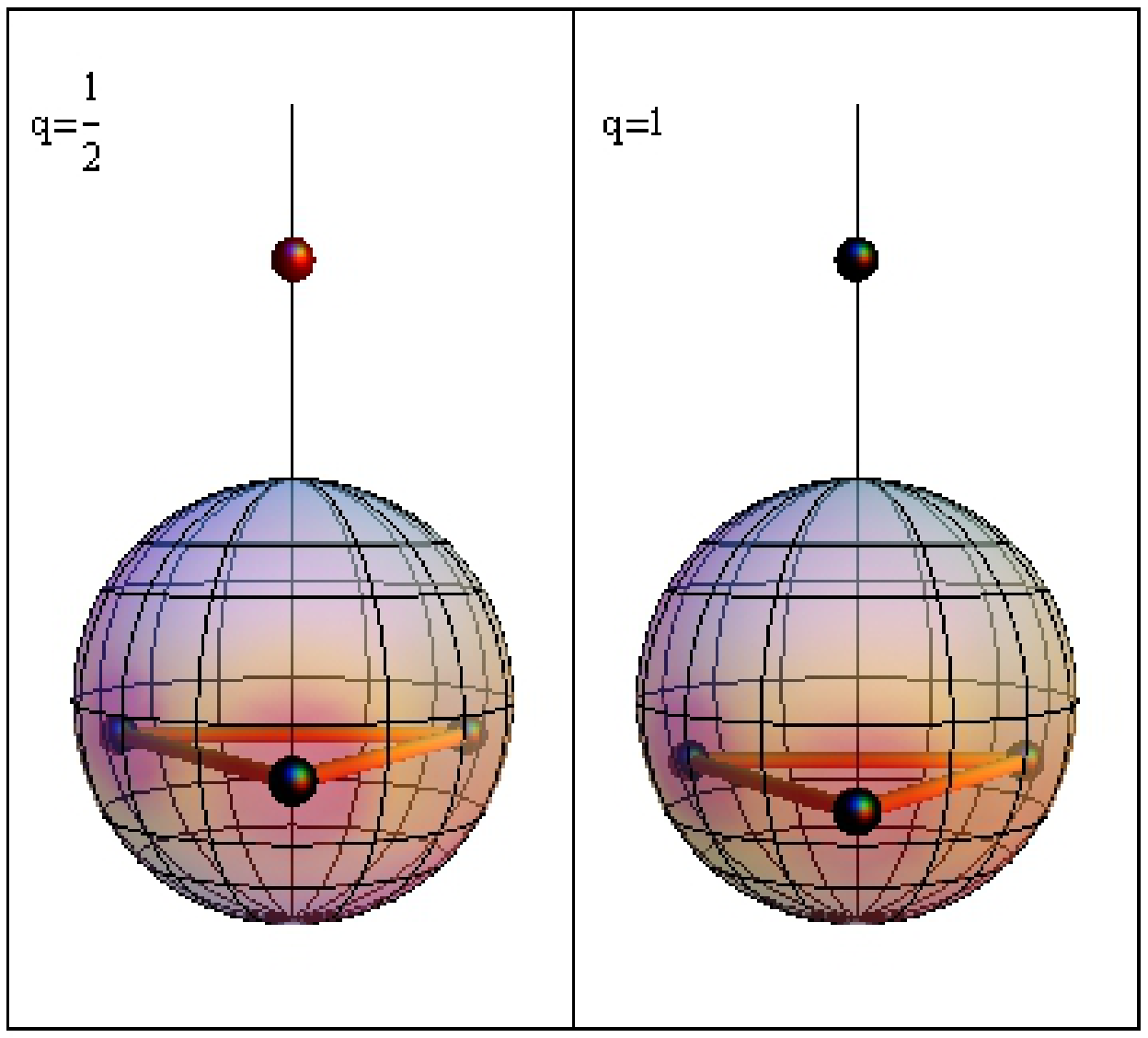}
\includegraphics[scale=.4]{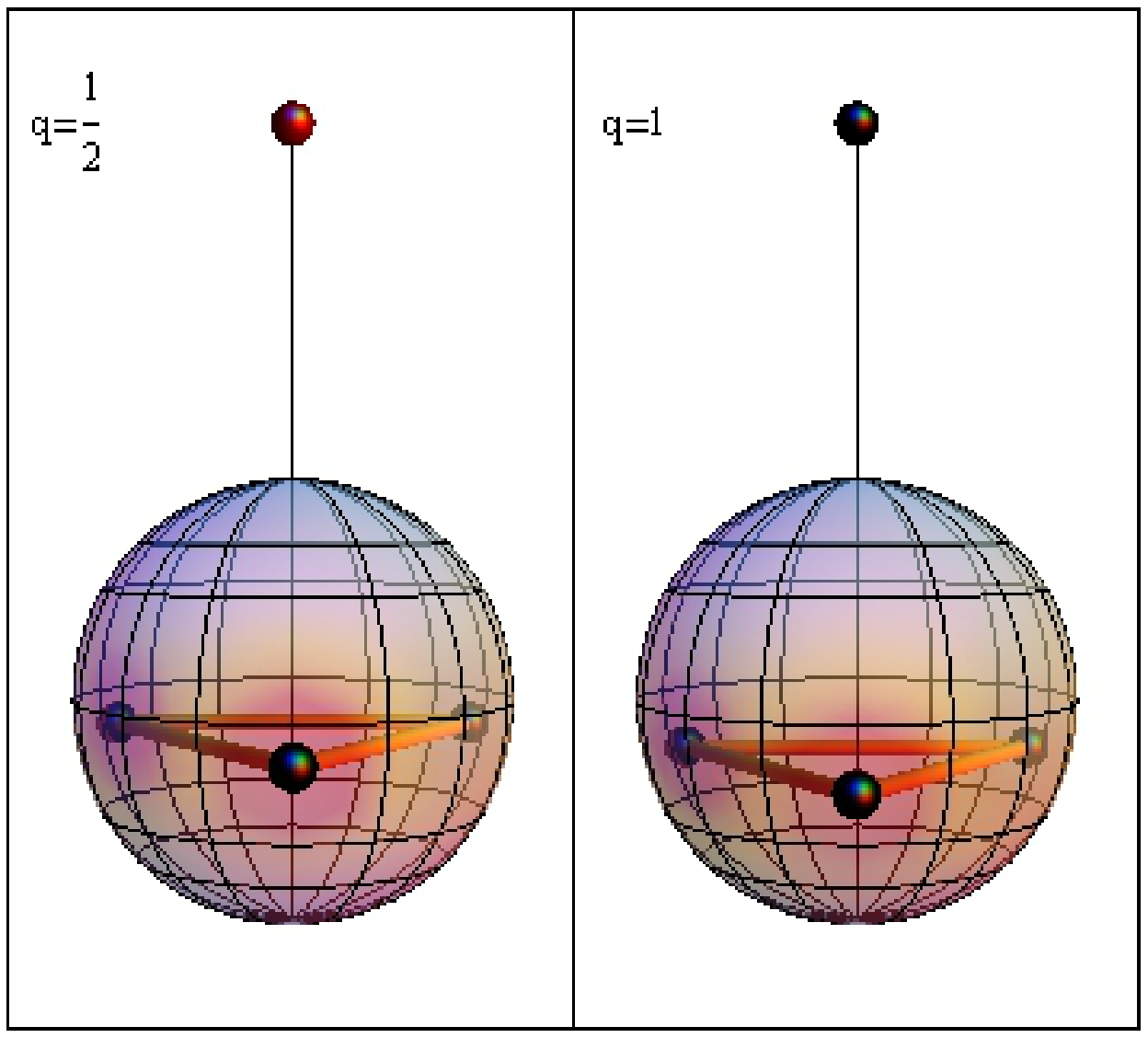}
\caption{\label{fig3} Typical $3$-point $(Q,s)$-Fekete sets for weaker (${q = 1/2}$, left) and stronger (${q = 1}$, right) Riesz external fields $Q( \PT{x} ) = q | \PT{x} - R \PT{p} |^{-s}$ in the Coulomb case $s = 1$ for selected values of center distances $R = 1$, $2$, and $1 + \phi$ ($\phi$ is the Golden ratio); compare with Fig.~\ref{fig2}.}
\end{figure}

We now establish that the aforementioned configuration is indeed optimal. For this purpose let $\{ \PT{x}_1,\PT{x}_2,\PT{x}_3\} $ be an optimal
configuration. Without loss of generality, we assume that the segment connecting $\PT{x}_1$ and $\PT{x}_2$ lies in the $yz$-plane and is parallel to the
$y$-axis, say, with parameterization $\PT{x}_1=(0,\sqrt{1-h^2},h)$ and $\PT{x}_2=(0,-\sqrt{1-h^2},h)$, $-1 < h < 1$. Furthermore, we can write
\begin{equation*}
\PT{x}_3=(\sqrt{1-t^2}\cos \alpha,\sqrt{1-t^2}\sin \alpha,t) \quad \text{and} \quad \PT{a}=(R \sin \theta \cos \beta, R \sin \theta \sin \beta, R \cos \theta).
\end{equation*}
Since $|\PT{x}_3-\PT{x_1}|^2+|\PT{x}_3-\PT{x_2}|^2 = 2(2-2ht)$, the
convexity of the function $u \mapsto u^{-s/2}$ implies that
\begin{equation*}
\left(|\PT{x}_3-\PT{x_1}|^2\right)^{-s/2}+\left(|\PT{x}_3-\PT{x_2}|^2\right)^{-s/2} \geq 2 (2-2ht)^{-s/2},
\end{equation*}
where equality holds only if $\alpha=0$ or $\pi$.
Similarly, since
\begin{equation*}
\left|\PT{a}-\PT{x_1}\right|^2+\left|\PT{a}-\PT{x_2}\right|^2 = 2\left(R^2+1-2hR\cos \theta\right),
\end{equation*}
the convexity of $f$ implies that
\begin{equation*}
Q(\PT{x_1})+ Q(\PT{x_2})=f(|\PT{a}-\PT{x_1}|^2)+f(|\PT{a}-\PT{x_2}|^2) \geq 2f(R^2+1-2hR\cos \theta),
\end{equation*}
where equality holds only if $\beta=0$ or $\pi$.
Moreover, from the fact that $f$ is strictly decreasing we get
\begin{align*}
Q(\PT{x}_3) = f(|\PT{a}-\PT{x_3}|^2)
&= f(R^2+1-2R\sqrt{1-t^2} \sin \theta \cos (\alpha-\beta) +2Rt\cos\theta) \\
&\geq f(R^2+1+2R\sqrt{1-t^2}\sin \theta +2Rt\cos \theta),
\end{align*}
and equality holds only if $\alpha-\beta = \pi$ or $-\pi$.
Consequently, the optimal configuration must satisfy
$|\PT{x}_3-\PT{x}_1|=|\PT{x}_3-\PT{x}_2|$ and
$|\PT{a}-\PT{x}_1|=|\PT{a}-\PT{x}_2|$. Since the choice of
$\PT{x}_1$ and $\PT{x}_2$ was arbitrary, the same is true for any
pair, which implies that the configuration is an equilateral
triangle with equidistant points from $\PT{a}$. Finally, it is easy to see
that there is a unique solution for the intercept $t_0$ of the
equilateral triangle and the axes.

\subsection{The four point problem with Riesz external field}
\label{sec:4.point.problem}

Let four unit charges be restricted to move on the unit sphere
$\mathbb{S}^2$ and the fifth point (placed on the polar axis above
the North Pole) act as the source of an external field
$Q_{\PT{a},s}( \PT{x} ) = q k_s( \PT{x}, \PT{a} )$, where $\PT{a} =
R \PT{p}$, $R \geq 1$, and $\PT{p}$ the North Pole. A $(
Q_{\PT{a},s}, s )$-Fekete set minimizes the discrete weighted energy
associated with $Q_{\PT{a},s}$,
\begin{equation*}
E_s^{Q_{\PT{a},s}}( \{ \PT{x}_1, \PT{x}_2, \PT{x}_3, \PT{x}_4 \} ) \DEF
\mathop{\sum_{j=1}^4 \sum_{k=1}^4}_{j \neq k} \Big[ k_s( \PT{x}_j,
\PT{x}_k ) + Q_{\PT{a},s}(\PT{x}_j) + Q_{\PT{a},s}(\PT{x}_k) \Big],
\end{equation*}
among all four point configurations $\{ \PT{x}_1, \PT{x}_2, \PT{x}_3, \PT{x}_4 \}$ on $\mathbb{S}^2$. This optimization
problem is highly non-linear and currently eludes explicit solution. Heuristic considerations based on the symmetries of the problem and backed by numerical experiments suggest three basic types of optimal configurations (see Figure~\ref{fig2}): {\bf (A)} a triangular pyramid $X_{\{1,3\}}$ with one point at the South Pole and three points
forming an equilateral triangle parallel to the equator (one degree of freedom) with discrete weighted energy
\begin{equation*}
E_s^{Q_{\PT{a},s}}( X_{\{1,3\}} ) = f_{\{1,3\}}( t ) \DEF 6 \left( \frac{2^{-s/2}}{\left( 1 + t \right)^{s/2}} + \frac{3^{-s/2}}{\left( 1 - t^2 \right)^{s/2}} + q \left[ \frac{1}{\left( 1 + R \right)^s} + \frac{3}{\left( 1 - 2 R t + R^2 \right)^{s/2}} \right] \right),
\end{equation*}
where $t$ denotes the intercept of the triangle's plane with the polar axis;
{\bf (B)} the set $X_{\{2,2\}}$ consisting of two pairs
of opposite points in planes parallel to the equator rotated by $90^\circ$ (two degrees of freedom) with discrete weighted energy
\begin{equation*}
\begin{split}
E_s^{Q_{\PT{a},s}}( X_{\{2,2\}} ) = f_{\{2,2\}}( t, \tau )
&\DEF \frac{2^{1-s}}{\left( 1 - t^2 \right)^{s/2}} + \frac{2^{1-s}}{\left( 1 - \tau^2 \right)^{s/2}} + \frac{2^{3-s/2}}{\left( 1 - t \tau \right)^{-s/2}} \\
&\phantom{=}+ 12 q \left( \frac{1}{\left( 1 - 2 R t + R^2 \right)^{s/2}} + \frac{1}{\left( 1 - 2 R \tau + R^2 \right)^{s/2}} \right),
\end{split}
\end{equation*}
where $t$ and $\tau$ denote the intercepts of these two planes with the polar axis;
and {\bf (C)} the four points forming a square $X_{\{0,4\}}$ parallel to
the equator (one degree of freedom) with discrete weighted energy
\begin{equation*}
E_s^{Q_{\PT{a},s}}( X_{\{0,4\}} ) = f_{\{0,4\}}( t )
\DEF \frac{2^{2-s}\left( 1 + 2^{1+s/2} \right)}{\left( 1 - t^2 \right)^{s/2}} + \frac{24 q}{\left( 1 - 2 R t + R^2 \right)^{s/2}},
\end{equation*}
where $t$ is the intercept of the square's plane with the polar axis.

Using elementary calculus one can show that $f_{\{0,4\}}( t )$ is a strictly convex function on $(-1,1)$ with $f_{\{0,4\}}^\prime( t ) \to \pm \infty$ as $t \to \pm 1$ and thus has a unique minimum at a $t \in (-1,1)$ satisfying
\begin{equation*}
2^{2-s} \left( 1 + 2^{1+s/2} \right) s \, t \left( 1 - t^2 \right)^{-s/2-1} + 24 q s R \left( 1 - 2 R t + R^2 \right)^{-s/2-1} = 0.
\end{equation*}
Similarly, the function $f_{\{1,3\}}( t )$ is strictly convex on $(-1,1)$ with $f_{\{1,3\}}^\prime( t ) \to \pm \infty$ as $t \to \pm 1$ and thus has a unique minimum at a $t \in (-1,1)$ satisfying
\begin{equation*}
6 s \left( -2^{-s/2-1} \left( 1 + t \right)^{-s/2-1} + 3 q R \left( 1 - 2 R t + R^2 \right)^{-s/2-1} + 3^{-s/2} t \left( 1 - t^2 \right)^{-s/2-1}  \right) = 0.
\end{equation*}

The set $X_{\{0,4\}}$ is a special case of $X_{\{2,2\}}$ when the two planes merge into one. Figure~\ref{fig1} shows a comparison of the discrete weighted energy for these configurations for the Coulomb case $s = 1$ and for $q = 1/3$ and $q = 1$ as the distance of the external source varies. (As we have no explicit formulas, numerics provide the optimal values of the free parameters, that is, the positions of the planes supporting the points of $X_{\{1,3\}}$, $X_{\{2,2\}}$ and $X_{\{0,4\}}$ that yield the smallest respective weighted energy.)
\begin{figure}[ht]
\includegraphics[scale=.875]{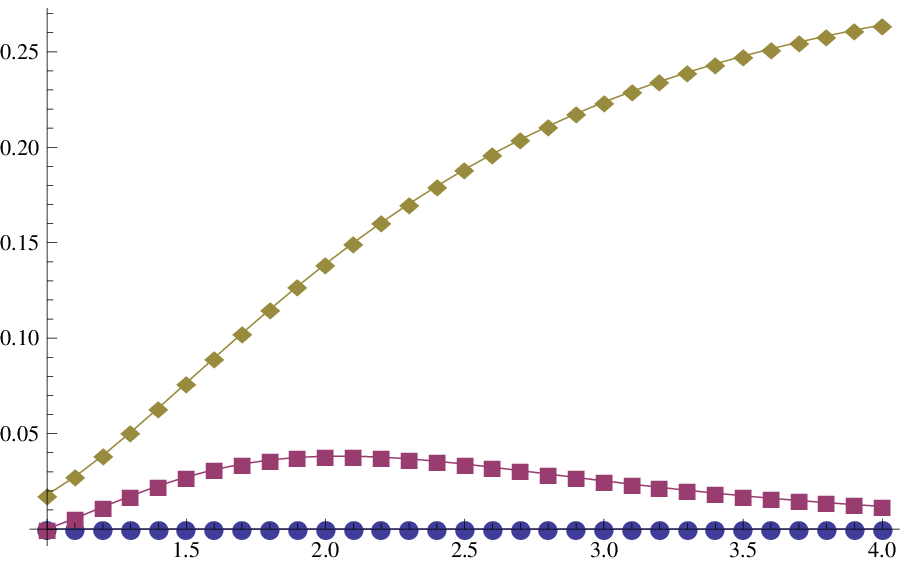}
\includegraphics[scale=.875]{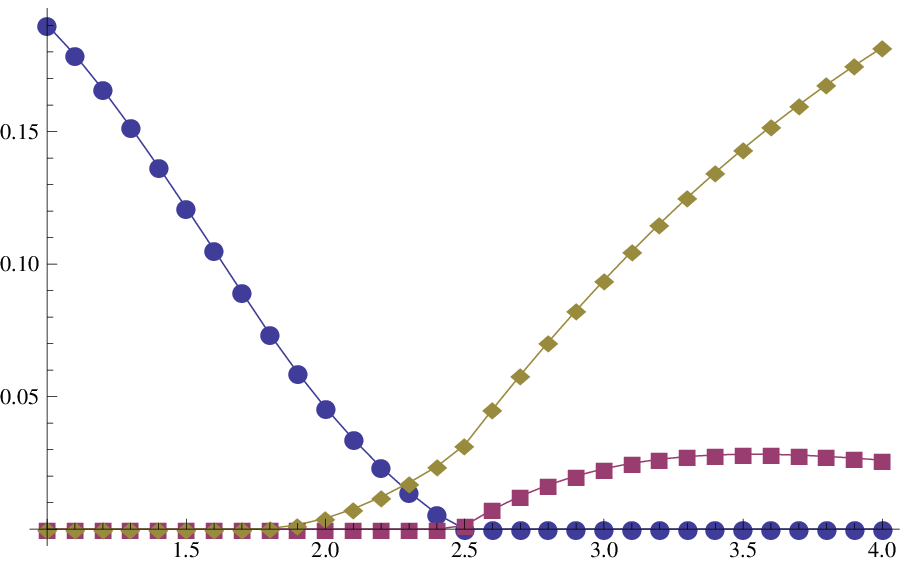}
\caption{\label{fig1} Deviation of the minimal discrete weighted energy of $X_{\{1,3\}}$ ($\bullet$), $X_{\{2,2\}}$ ($\blacksquare$) and $X_{\{0,4\}}$ ($\blacklozenge$) from the putative global minimum versus distance from center $R$ (sampled at $R_k = 1 + k / 10$) for $q = 1/3$ (left) and $q = 1$ (right) and Coulomb case $s = 1$.}
\end{figure}
In the case when $q = 1/3$ (weak field), the (numerically) smallest
discrete weighted energy appears to occur for the $X_{\{1,3\}}$-configuration for all $R \geq 1$, whereas for $q = 1$
(strong field) the optimal configurations change type from $X_{\{1,3\}}$ to $X_{\{2,2\}}$ to $X_{\{0,4\}}$ as the distance $R$ decreases and passes certain critical distances; cf. Figure~\ref{fig2} for typical examples.

\begin{rmk}
Similar to the case of the three point problem, the choice $Q_{\PT{p},s}( \PT{x} ) = \frac{1}{3} | \PT{x} - \PT{p} |^{-s}$ for the four point problem corresponds to the equilibrium configuration for Riesz-$s$ energy with no external field for \emph{five} points on $\mathbb{S}^2$. For $s = 1$ this is illustrated in the first frame of Figure~\ref{fig2}. Recently, Schwartz~\cite{Sch2013} gave a computer-assisted proof that the triangular bi-pyramid $X_{\{1,3\}} \cup \{ \PT{p} \}$ is indeed the minimizing configuration for $s = 1$ and $s = 2$. For other choices of $q$ in $Q_{\PT{p},s}( \PT{x} ) = q | \PT{x} - \PT{p} |^{-s}$ different configurations, such as the square based pyramid, may arise; see the second frame of Figure~\ref{fig2}.
\end{rmk}

\begin{figure}[ht]
\includegraphics[scale=0.4]{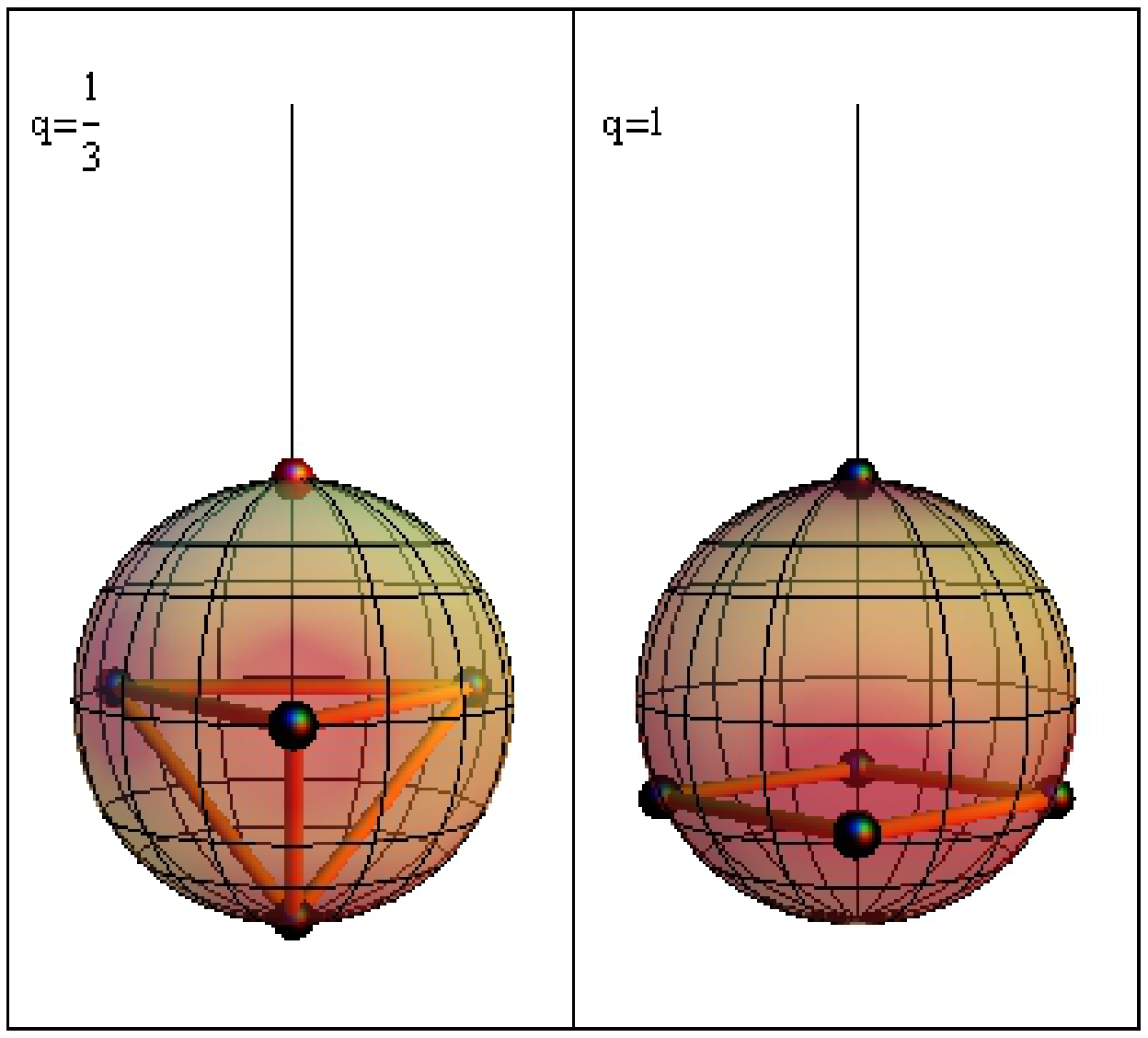}
\includegraphics[scale=0.4]{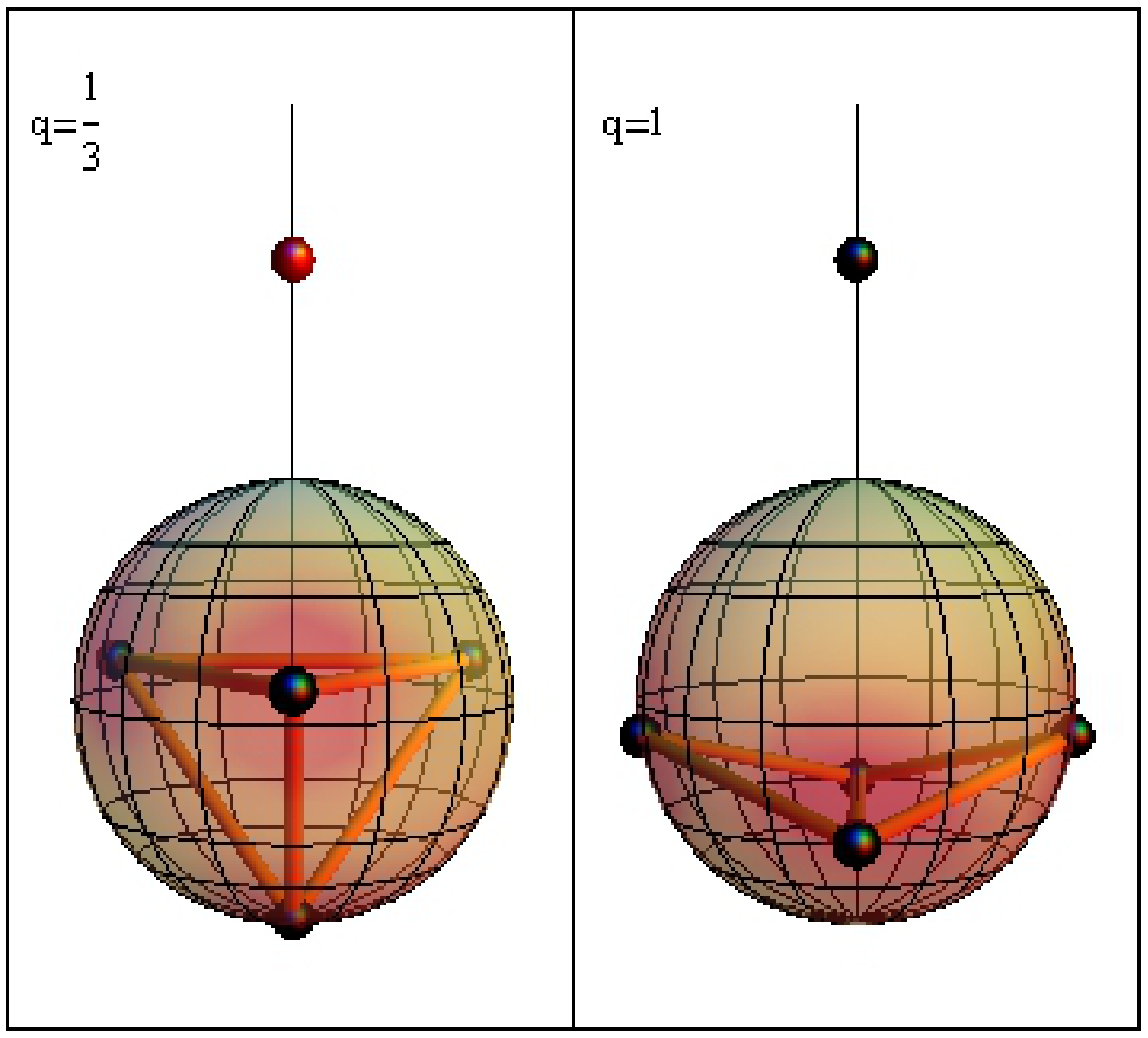}
\includegraphics[scale=0.4]{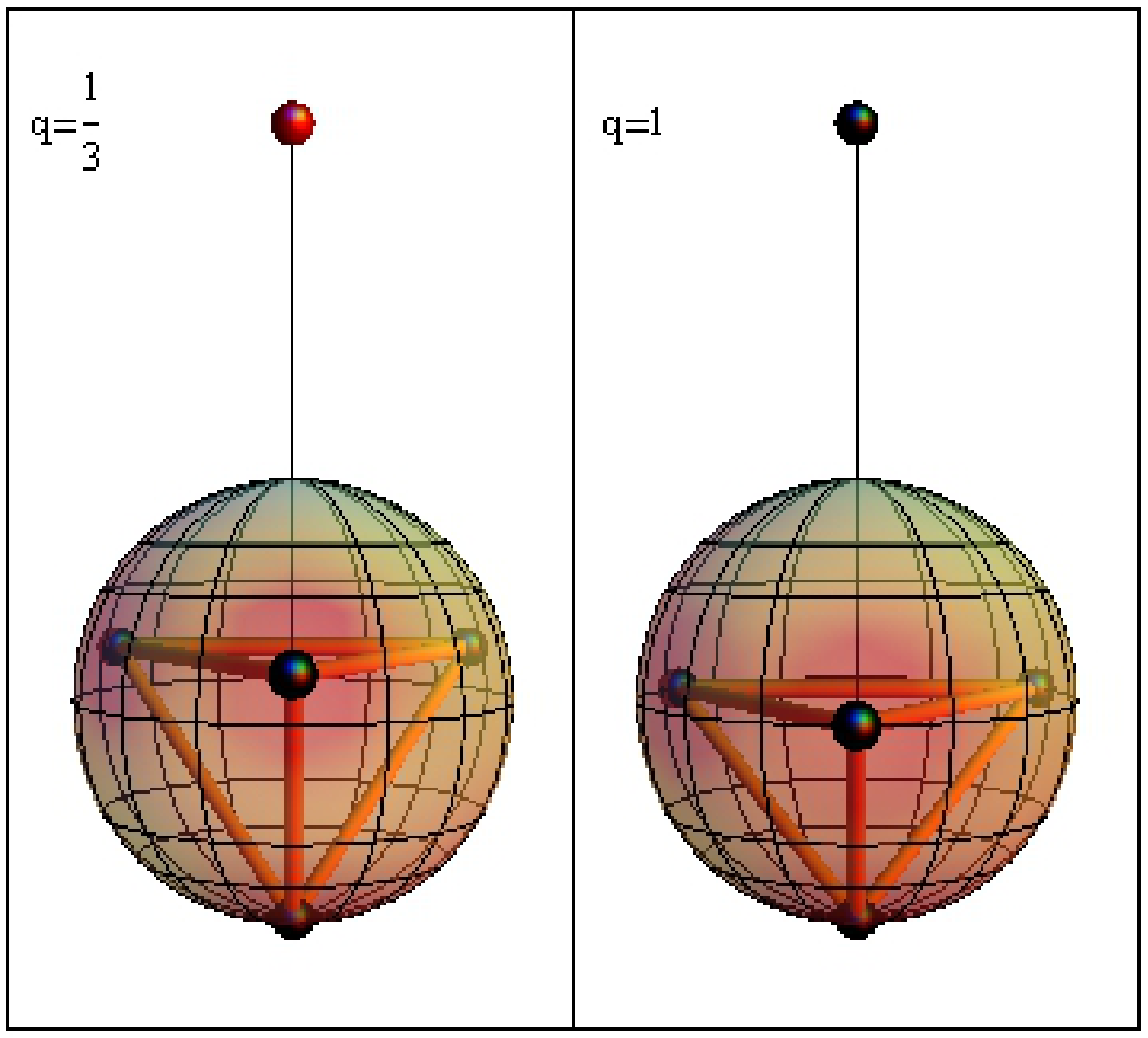}
\caption{\label{fig2} Typical {\bf putative} $4$-point $(Q_{\PT{a},s},s)$-Fekete sets for weaker (${q = 1/3}$, left) and stronger (${q = 1}$, right) Riesz external fields $Q_{\PT{a},s}( \PT{x} ) = q | \PT{x} - \PT{a} |^{-s}$, $\PT{a} = R \PT{p}$, in the Coulomb case $s = 1$ for three choices of center distances $R = 1$, $2$, and $1 + \phi$ ($\phi$ is the Golden ratio); compare with Fig.~\ref{fig3}.}
\end{figure}

\section{Proofs}
\label{sec:proofs}

\subsection{Proofs of Section~\ref{sec:characterization}}
\label{subsec:proofs.characterization}

The proof of Theorem~\ref{thm:main} relies on the restricted version of the maximum principle for measures supported on $\mathbb{S}^d$ (see Theorem~\ref{thm:restr.max.principle}). The latter can be shown using the principle of domination and the maximum principle for Riesz potentials.
For convenience we state them here.

\begin{prop}[Principle of Domination, {\cite[Thm.~1.29]{La1972}}] \label{prop:Thm1.29}
Let $p-2 \leq s < p$. Suppose $\mu$ is a positive measure in $\mathbb{R}^p$ whose potential $U_s^\mu$ is finite $\mu$-almost everywhere, and that $f(\PT{x})$ is a $(p-s)$-superharmonic function. Then if the inequality $U_s^\mu(\PT{x})\leq f(\PT{x}) $ holds $\mu$-almost everywhere, it holds everywhere in $\mathbb{R}^p$.
\end{prop}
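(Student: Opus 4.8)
The plan is to read the statement through the Riesz \emph{order} $\alpha \DEF p - s$, so that the hypothesis $p-2 \le s < p$ becomes $0 < \alpha \le 2$, precisely the range in which the kernel $k_s(\PT{x},\PT{y}) = |\PT{x}-\PT{y}|^{\alpha - p}$ is $\alpha$-superharmonic and the maximum principle is available. Two structural facts drive the argument: the potential $U_s^\mu$ of a positive measure is $\alpha$-superharmonic on all of $\mathbb{R}^p$ with Riesz measure $\mu$, hence $\alpha$-harmonic on the open set where $\mu$ places no mass; and an $\alpha$-subharmonic function obeys a maximum principle. Before the main step I would make two routine reductions. By monotone convergence I may replace $\mu$ with the increasing truncations $\mu_n \DEF \mu\big|_{K_n}$, where $K_n \DEF \{\PT{x} \in \supp(\mu) : |\PT{x}| \le n,\ U_s^\mu(\PT{x}) \le n\}$: each $\mu_n$ has compact support, finite energy and bounded (after a mild regularization, continuous) potential, $U_s^{\mu_n} \uparrow U_s^\mu$, and the hypothesis descends to $\mu_n$-a.e.; proving the statement for each $\mu_n$ and passing to the limit suffices. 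I would also normalize $f \ge 0$ (the principle is applied to superharmonic functions bounded below, typically Riesz potentials plus a constant, for which an additive constant gives $\liminf_{|\PT{x}|\to\infty} f \ge 0$).

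The core is a contradiction argument. Set $v \DEF U_s^\mu - f$ and let $\Omega \DEF \{\PT{x} : v(\PT{x}) > 0\}$, which is open once $U_s^\mu$ is continuous after the reduction. The hypothesis $U_s^\mu \le f$ $\mu$-a.e. says exactly $\mu(\Omega) = 0$, so the Riesz measure of $U_s^\mu$ charges no point of $\Omega$; consequently $U_s^\mu$ is $\alpha$-harmonic on $\Omega$, and since $-f$ is $\alpha$-subharmonic there, $v$ is $\alpha$-subharmonic on $\Omega$. On $\mathbb{R}^p \setminus \Omega$ we have $v \le 0$ by definition of $\Omega$, while $U_s^\mu(\PT{x}) \to 0$ as $|\PT{x}| \to \infty$ (compact support, finite mass) together with $f \ge 0$ gives $\limsup_{|\PT{x}|\to\infty} v(\PT{x}) \le 0$. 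The maximum principle for $\alpha$-subharmonic functions then forces $v \le 0$ throughout $\Omega$, i.e.\ $\Omega = \emptyset$, which is the desired conclusion that $U_s^\mu \le f$ everywhere.

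The step carrying the real content is the maximum principle for $\alpha$-subharmonic functions in the genuinely nonlocal range $0 < \alpha < 2$ (that is, $p-2 < s < p$). Here boundary values do not suffice: the $\alpha$-mean value inequality represents $v(\PT{x})$, for $\PT{x}$ in a ball $B \subset \Omega$, through the $\alpha$-Poisson kernel $P_B^\alpha(\PT{x},\cdot)$, which integrates $v$ over the \emph{entire} complement $\mathbb{R}^p \setminus B$ rather than over $\partial B$ alone. I would run the argument by exhausting $\Omega$ by balls (or by approaching a point where $v$ nears its supremum) and using $v \le 0$ on $\Omega^c$ together with the decay at infinity to dominate the nonlocal tail contribution, concluding $v \le 0$. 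The endpoint $\alpha = 2$ ($s = p-2$) is the classical local case, where $\alpha$-subharmonic reduces to ordinary subharmonic and the usual maximum principle applies with no nonlocal estimate.

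The remaining delicate points are measure-theoretic bookkeeping rather than conceptual obstacles: that $\{U_s^\mu > f\}$ is open and $\mu$-null, that $U_s^\mu$ is genuinely $\alpha$-harmonic there (characterized by vanishing of its Riesz measure on the set), and the semicontinuity needed to localize. These are handled by the truncation/regularization reductions above together with the continuity principle for Riesz potentials. Thus the whole proof rests on one nontrivial ingredient, the nonlocal maximum principle, with everything else being standard reductions.
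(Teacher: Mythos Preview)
The paper does not prove this proposition at all; it is quoted from Landkof's monograph as a standard tool and used as a black box in the proofs of Theorems~\ref{thm:restr.max.principle} and~\ref{thm:signsupp.2}. There is therefore nothing in the paper itself to compare your argument against.

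That said, your sketch is essentially the classical argument one finds in Landkof: translate to the Riesz order $\alpha = p - s \in (0,2]$, reduce by truncation to measures with compact support and continuous potential, and apply the (nonlocal) maximum principle to the $\alpha$-subharmonic function $U_s^\mu - f$ on the open set where it is positive, using that $\mu$ places no mass there so that $U_s^\mu$ is $\alpha$-harmonic on that set. The one place worth tightening is the normalization ``$f \ge 0$'': a general $\alpha$-superharmonic function on $\mathbb{R}^p$ need not be bounded below, so a bare additive shift is not available; the standard route is instead to invoke the Riesz decomposition of $f$ as a potential plus a nonnegative $\alpha$-harmonic remainder, which supplies the needed control at infinity. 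Apart from that, your outline is sound and in line with the cited source.
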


\begin{prop}[Maximum Principle, {\cite[Thm.~1.10]{La1972}}]
Let $p-2 \leq s < p$. Suppose $\mu$ is a positive measure in $\mathbb{R}^p$ and that $U_s^\mu(\PT{x}) \leq M$ holds $\mu$-almost everywhere for some real $M$. Then this inequality holds throughout all of $\mathbb{R}^p$.
\end{prop}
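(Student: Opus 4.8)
The plan is to obtain the statement as an immediate corollary of the Principle of Domination recorded just above, taking the constant function $M$ as the $(p-s)$-superharmonic majorant. Setting $\alpha \DEF p - s$, the hypothesis $p-2 \leq s < p$ means $0 < \alpha \leq 2$, which is precisely the range in which $\alpha$-superharmonicity is defined and in which the Principle of Domination applies.

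First I would check the finiteness hypothesis required to invoke domination. In the relevant regime $s > 0$ the Riesz kernel $k_s(\PT{x},\PT{y}) = |\PT{x}-\PT{y}|^{-s}$ is nonnegative, so $U_s^\mu(\PT{x}) \geq 0$ for every $\PT{x} \in \mathbb{R}^p$. Combined with the assumed bound $U_s^\mu(\PT{x}) \leq M$ holding $\mu$-almost everywhere, this gives $0 \leq U_s^\mu(\PT{x}) \leq M < \infty$ $\mu$-a.e., i.e.\ $U_s^\mu$ is finite $\mu$-almost everywhere, as the domination principle demands.

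Next I would verify that the constant function $f(\PT{x}) \equiv M$ is an admissible majorant. A globally constant function is $\alpha$-harmonic — its fractional Laplacian $(-\Delta)^{\alpha/2} M$ vanishes identically, equivalently it satisfies the $\alpha$-harmonic mean-value identity with equality — and hence is $\alpha = (p-s)$-superharmonic. Since the assumption reads $U_s^\mu(\PT{x}) \leq M = f(\PT{x})$ $\mu$-a.e., the hypotheses of the Principle of Domination are met, and that principle yields $U_s^\mu(\PT{x}) \leq f(\PT{x}) = M$ for every $\PT{x} \in \mathbb{R}^p$, which is exactly the claim.

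I do not anticipate a substantive obstacle here: the whole content reduces to the verification that the constant $M$ qualifies as a $(p-s)$-superharmonic function and that $U_s^\mu$ is finite $\mu$-a.e., both of which are routine. I would only flag that this derivation uses the (strictly stronger) domination principle, so it is legitimate precisely because the latter is established independently in the cited reference. A genuinely self-contained proof that avoids domination would instead have to recover the sharp constant $1$ — valid exactly in the range $p-2 \leq s < p$ — by essentially the same potential-theoretic ingredients (lower semicontinuity of $U_s^\mu$ together with the restricted mean-value property of $\alpha$-superharmonic potentials), so nothing is gained by that route and the one-line reduction above is the cleanest argument.
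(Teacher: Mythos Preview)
Your derivation is correct: the constant $M$ is $(p-s)$-superharmonic and the $\mu$-a.e.\ finiteness of $U_s^\mu$ follows immediately from $0 \leq U_s^\mu \leq M$, so the Principle of Domination applies and gives the conclusion.

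Note, however, that the paper does not supply its own proof of this proposition; it is simply quoted from Landkof~\cite[Thm.~1.10]{La1972} alongside the Principle of Domination~\cite[Thm.~1.29]{La1972} as background. Your reduction to domination is the standard one-line argument and is perfectly acceptable, with the caveat you already flag: it presupposes that the domination principle is established independently (as it is in Landkof) rather than, as sometimes happens in expositions, the other way around.
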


Before we provide the proof of Theorem~\ref{thm:restr.max.principle} and then show Theorem~\ref{thm:main}, we recall some basic facts about the Kelvin transformation (spherical inversion) of points and measures. Inversion in a sphere is a basic technique in electrostatics (method of electrical images, cf. \cite{Ja1998}) and, more generally, in potential theory (cf. \cite{Ke1967} and \cite{La1972}). The Kelvin transformation (of a function) is linear, preserves harmonicity, and preserves positivity. Let us denote by $\kelvin_{\PT{a}}$ the Kelvin transformation (stereographic projection) with center $\PT{a} \in \mathbb{S}^d$ and radius $\sqrt{2}$; that is, for any point $\PT{x} \in \mathbb{R}^{d+1}$ the image $\PT{x}^* \DEF
\kelvin_{\PT{a}}(\PT{x})$ lies on a ray emanating from $\PT{a}$, and passing through $\PT{x}$ such that
\begin{equation} \label{KelTr}
\left| \PT{x} - \PT{a} \right| \cdot \left| \PT{x}^* - \PT{a} \right| = 2.
\end{equation}
The transformation of the distance is given by the formula
\begin{equation} \label{KelTrDist}
\left| \PT{x}^{*} - \PT{y}^{*} \right| = 2 \frac{\left| \PT{x} -
\PT{y} \right|}{\left| \PT{x} - \PT{a} \right| \left| \PT{y} -
\PT{a} \right|}, \qquad \PT{x},\PT{y}\in\mathbb{S}^{d}.
\end{equation}
The image of $\mathbb{S}^d$ under the Kelvin transformation is a hyperspace orthogonal to the radius-vector $\PT{a}$, which we can identify with $\mathbb{R}^d$ in a natural way. For $A \subset \mathbb{S}^d$, let $A^*$ denote the image of $A$ under the Kelvin transformation $\kelvin_{\PT{a}}$. Clearly, $A^*$ is then contained in $\mathbb{R}^d$.

Next, we recall the definition of the Kelvin transform of measures. Given a measure $\nu$ with no point mass at $\PT{a}$, its {\em $s$-Kelvin transformation} $\nu^* = \kelvinMEAS_{\PT{a},s}(\nu)$ is a measure defined by
\begin{equation}
\dd \nu^* (\PT{x}^* ) \DEF \frac{2^{s/2}}{\left| \PT{x} - \PT{a}
\right|^s} \dd \nu(\PT{x}). \label{KelTr1}
\end{equation}
Clearly, \eqref{KelTr} and \eqref{KelTr1} imply the duality $(\nu^*)^* =\nu$. We also note that
\begin{equation} \label{KelTrPot}
U^{\nu^*}_s(\PT{x}^*)=\frac{2^{s/2}}{|\PT{x}^*-\PT{a}|^s} \, U^\nu_s(\PT{x}), \qquad \PT{x} \in \mathbb{S}^d.
\end{equation}
In the particular case when $d=2$ and $s=\log$ we have
\begin{equation} \label{KelTr1_log}
\dd \nu^* (\PT{x}^* ) \DEF \dd \nu(\PT{x}), \qquad U^{\nu^*}_{\log}(\PT{x}^*)=U^\nu_{\log}(\PT{x})-U^\nu_{\log}(\PT{a}) +\log|\PT{x}-\PT{a}| \qquad \PT{x} \in \mathbb{S}^2.
\end{equation}

\begin{proof}[Proof of Theorem~\ref{thm:restr.max.principle}]
Suppose that $d-2 < s < d$ or if $s=d-2$, then $d\geq 3$. Any point~$\PT{a}$ in the level set $\{ \PT{x} \in \mathbb{S}^d : U_s^\mu(\PT{x}) \leq M \}$, which has positive $\mu$-measure, can serve as a center of inversion for the Kelvin transformation with radius $\sqrt{2}$ given in \eqref{KelTr}. Select one such point as center. Define the measure $\nu \DEF [ M / W_s(\mathbb{S}^d) ] \sigma_d$. Then $U_s^\nu(\PT{x}) = M$ for all $\PT{x}\in\mathbb{S}^d$. Neither $\mu$ nor $\nu$ has a point mass at $\PT{a}$, hence \eqref{KelTrPot} gives
\begin{equation*}
U_s^{\mu^*}(\PT{x}^*) =\frac{2^{s/2}}{|\PT{x}^*-\PT{a}|^s} \, U^\mu_s (\PT{x}) \leq \frac{M2^{s/2}}{|\PT{x}^*-\PT{a}|^s} = U_s^{\nu^*} (\PT{x}^*)  \qquad \text{$\mu$-a.e. (and hence $\mu^*$-a.e.)}.
\end{equation*}
Since the right-hand side is $(d-s)$-superharmonic in $\mathbb{R}^d$, by Proposition~\ref{prop:Thm1.29} the inequality
extends to all $\PT{x}^*\in\mathbb{R}^d = (\mathbb{S}^d)^*$. The inverse Kelvin transformation yields $U_s^\mu(\PT{x}) \leq U_s^\nu(\PT{x})$ for all $\PT{x} \in \mathbb{S}^d \setminus \{ \PT{a} \}$. So, $U_s^\mu(\PT{x}) \leq M$ for all $\PT{x} \in \mathbb{S}^d \setminus \{ \PT{a} \}$. Using a different center of inversion, one obtains $U_s^\mu(\PT{a}) \leq M$.

Suppose now that $d = 2$ and $s = \log$. Using \eqref{KelTr1_log} we write
\begin{equation*}
\begin{split}
U_{\log}^{\mu^*}(\PT{x}^*) + \log|\PT{x}^*-\PT{a}| + U_{\log}^{\mu}(\PT{a})
&= U^{\mu_{\log}} (\PT{x}) \\
&\leq M+ U_{\log}^{\sigma_2} (\PT{x})-W_{\log}(\mathbb{S}^2) \\
&= U_{\log}^{\sigma_2^*}(\PT{x}^*)+ \log|\PT{x}^*-\PT{a}| + c_2  \qquad \text{$\mu^*$-a.e.}
\end{split}
\end{equation*}
Since $\mu^*$ has finite logarithmic energy, the principle of domination for logarithmic potentials \cite[Theorem~II.3.2]{SaTo1997} implies that this inequality holds everywhere in the complex plane. The inverse Kelvin transformation then gives the desired inequality on the sphere except at the center of inversion. This restriction can be removed by moving the inversion center.
\end{proof}

\begin{proof}[Proof of Theorem~\ref{thm:main}]
Suppose to the contrary that there is a measure $\lambda \in
\mathcal{M}(\mathbb{S}^d)$ such that \eqref{eq:essinf} fails;
that is, there is a constant $L_1>F_{Q,s}$ such that
\begin{equation*}
U_s^\lambda(\PT{x}) + Q(\PT{x}) \geq L_1 \qquad \text{q.e. on $S_{Q,s}$.}
\end{equation*}
Applying \eqref{VarEq2} we obtain that
\begin{equation}
U_s^\lambda(\PT{x}) \geq U_s^{\mu_{Q,s}}(\PT{x}) + L_1 - F_{Q,s} \qquad
\text{q.e. on $S_{Q,s}$.} \label{eq:aux1}
\end{equation}
From Proposition~\ref{prop:1} we have that $\mu_{Q,s}$ has finite
$s$-energy, therefore its support $S_{Q,s}$ will have positive
$s$-capacity. Hence, the $s$-extremal measure associated with $S_{Q,s}$ is
well-defined. Thus, we may integrate both sides of inequality
\eqref{eq:aux1} with respect to $\mu_{S_{Q,s}}$ and, using Fubini's
theorem, we derive
\begin{equation}
\begin{split}
\int U_s^{\mu_{S_{Q,s}}} \dd\lambda = \int U_s^\lambda \dd\mu_{S_{Q,s}}
&\geq \int U_s^{\mu_{Q,s}} \dd \mu_{S_{Q,s}} + L_1 - F_{Q,s} \\
&= \int U_s^{\mu_{S_{Q,s}}} \dd \mu_{Q,s} + L_1 - F_{Q,s}.
\end{split}
\end{equation}
Recall that the Gauss variational inequalities for the
$s$-extremal measure $\mu_{S_{Q,s}}$ state that (see
\cite{La1972})
\begin{equation*}
U_s^{\mu_{S_{Q,s}}}(\PT{x}) \leq W_s(S_{Q,s}) \quad \text{on $\supp
(\mu_{S_{Q,s}})$}, \qquad U_s^{\mu_{S_{Q,s}}}(\PT{x}) \geq
W_s(S_{Q,s}) \quad \text{q.e. on $S_{Q,s}$.}
\end{equation*}
By the maximum principle the first inequality can be extended to all
of $\mathbb{R}^{d+1}$ if $d-1 \leq s < d$ and still remains true on
$\mathbb{S}^d$ when $d-2 \leq s < d-1$ by the sphere maximum
principle (see Theorem~\ref{thm:restr.max.principle}). Thus, we get
the contradiction
\begin{equation*}
W_s(S_{Q,s}) \geq W_s(S_{Q,s}) + L_1 - F_{Q,s} > W_s(S_{Q,s}).
\end{equation*}
This establishes relation \eqref{eq:essinf} for $d-2\leq s<d$.

We derive \eqref{eq:sup} similarly, utilizing \eqref{VarEq1}
instead. Assume there is a measure $\lambda \in
\mathcal{M}(\mathbb{S}^d)$ and a constant $L_2 < F_{Q,s}$, such that
\begin{equation*}
U_s^\lambda(\PT{x}) + Q(\PT{x}) \leq L_2, \qquad \PT{x} \in \supp
(\lambda).
\end{equation*}
Integration with respect to $\lambda$ yields that $\lambda$ has
finite $s$-energy (recall that $F_{Q,s}$ is finite and $Q(\PT{x})
\geq c_Q$). This implies that ${\rm cap}_s(\supp (\lambda)) > 0$,
and that the measure $\mu_{\supp(\lambda)}$ is well-defined. From
\eqref{VarEq1} we get
\begin{equation*}
U_s^{\mu_{Q,s}}(\PT{x}) \geq U_s^\lambda(\PT{x}) + F_{Q,s} - L_2 \qquad
\text{q.e. on $\supp(\lambda)$.}
\end{equation*}
Now, we integrate both sides of this inequality with respect to the
$s$-extremal measure $\mu_{\supp(\lambda)}$ and using Fubini's
theorem we arrive at
\begin{equation*}
\begin{split}
\int U_s^{\mu_{\supp(\lambda)}} \dd \mu_{Q,s} = \int U_s^{\mu_{Q,s}} \dd
\mu_{\supp(\lambda)}
&\geq \int U_s^\lambda \dd \mu_{\supp(\lambda)} + F_{Q,s} - L_2 \\
&= \int U_s^{\mu_{\supp(\lambda)}} \dd \lambda + F_{Q,s} - L_2.
\end{split}
\end{equation*}
Applying again the Gauss variational inequalities now to
$U_s^{\mu_{\supp(\lambda)}}$ and using the sphere maximum principle
we obtain a similar contradiction.
\end{proof}

\begin{proof}[Proof of Theorem~\ref{thm:signsupp.2}]
Let $d - 2 \leq s < d$ with $s > 0$. For brevity set $\eta = \eta_{\Sigma_\rho, Q, s}$ and $\sgnEqconst_\eta = \sgnEqconst_{\Sigma_\rho, Q, s}$, where $\Sigma_\rho$ is the spherical cap $\{ \PT{x} \in \mathbb{S}^d : | \PT{x} - \PT{p} | \geq \rho \}$. Then
\begin{equation*}
U_s^{\eta^+}(\PT{x}) - U_s^{\eta^-}(\PT{x}) + Q(\PT{x}) = \sgnEqconst_\eta \qquad \text{everywhere on $\Sigma_\rho$.}
\end{equation*}
Using \eqref{VarEq1} and \eqref{estlspt} we get
\begin{align}
U_s^{\mu_Q+\eta^-}(\PT{x})+\sgnEqconst_\eta &\geq U_s^{\eta^+}(\PT{x})+F_{Q,s} \quad \text{q.e. on $\mathbb{S}^d \cap \Sigma_\rho$,} \label{VarEq5.b} \\
U_s^{\mu_Q+\eta^-}(\PT{x})+\sgnEqconst_\eta &\leq U_s^{\eta^+}(\PT{x})+F_{Q,s} \quad \text{everywhere on $\widetilde{S}_{Q,s} \cap \Sigma_\rho$.} \label{VarEq6.b}
\end{align}
Now we can rewrite \eqref{VarEq5.b} as
\begin{equation*}
U_s^{\mu_{Q,s}+\eta^-}(\PT{x}) + (\sgnEqconst_\eta - F_{Q,s}) \CAP_s(\mathbb{S}^d) U_s^{\bal_s( \sigma_d, \Sigma_\rho )}(\PT{x}) \geq U_s^{\eta^+}(\PT{x}) \qquad \text{q.e. on $\mathbb{S}^d \cap \Sigma_\rho$}
\end{equation*}
which also holds $\eta^+$-a.e. on $\mathbb{S}^d \cap \Sigma_\rho$.
Setting $\nu \DEF \mu_{Q,s} + \eta^- + (\sgnEqconst_\eta - F_{Q,s}) \CAP_s(\mathbb{S}^d) \bal_s( \sigma_d, \Sigma_\rho )$, we obtain that (see also \eqref{VarEq6.b})
\begin{equation*}
U^\nu_s (\PT{x}) \geq U_s^{\eta^+}(\PT{x}) \quad \text{$\eta^+$-a.e. on $\mathbb{S}^d \cap \Sigma_\rho$,} \qquad U^\nu_s (\PT{x}) \leq U_s^{\eta^+}(\PT{x}) \quad \text{everywhere on $\widetilde{S}_{Q,s} \cap \Sigma_\rho$.}
\end{equation*}
Selecting a center of inversion $\PT{a} \in \mathbb{S}^{d} \setminus \supp( \eta^+ )$, the potentials of the Kelvin transformations $\nu^*$ and $(\eta^+)^*$ satisfy
\begin{align*}
U^{\nu^*}_s (\PT{x}^*) &\geq U_s^{(\eta^+)^*}(\PT{x}^*) \qquad \text{$(\eta^+)^*$-a.e. on $(\mathbb{S}^d \cap \Sigma_\rho)^*$,} \\
U^{\nu^*}_s(\PT{x}^*)  &\leq U_s^{(\eta^+)^*}(\PT{x}^*) \qquad \text{everywhere on $(\widetilde{S}_{Q,s} \cap \Sigma_\rho)^*$.}
\end{align*}
Note that by the principle of domination (see Proposition \ref{prop:Thm1.29}) the first inequality holds on all of $\mathbb{R}^d$.
Now, we can apply a de La Vall\'{e}e Poussin-type theorem (see
\cite[Section~3]{Fu1992}, \cite[Theorem~2.5]{Ja2000}) to conclude that
\begin{equation*}
\nu^* \big|_{(\widetilde{S}_{Q,s} \cap \Sigma_\rho)^*} \leq (\eta^+)^* \big|_{(\widetilde{S}_{Q,s} \cap \Sigma_\rho)^*}.
\end{equation*}
The inverse Kelvin transformation yields
\begin{equation*}
\left( \mu_{Q,s} + \eta^- + (\sgnEqconst_\eta - F_{Q,s}) \CAP_s(\mathbb{S}^d) \bal_s( \sigma_d, \Sigma_\rho ) \right)\big|_{\widetilde{S}_{Q,s} \cap \Sigma_\rho} \leq \eta^+\big|_{\widetilde{S}_{Q,s} \cap \Sigma_\rho}.
\end{equation*}
This implies that
\begin{equation*}
\mu_{Q,s} \big|_{\Sigma_\rho} \leq \eta^+ \big|_{S_{Q,s}}, \qquad S_{Q,s} \cap \Sigma_\rho \subset \supp(\eta^+),
\end{equation*}
and that if $F_{Q,s} < \sgnEqconst_\eta$, then $\widetilde{S}_{Q,s} \cap \Sigma_\rho \subset \supp(\eta^+)$. The theorem for the Riesz case $d - 2 \leq s < d$ with $s > 0$ is proved.

When $d=2$ and $s=\log$ we utilize \eqref{KelTr1_log} and modify the argument above using the regular principle of domination in the complex plane and the original de La Vall\'{e}e Poussin theorem \cite[Theorem~IV.4.5]{SaTo1997}.
\end{proof}

Next we establish the proof of Theorem \ref{thm:main2}, which allows
for the important conclusion that the $(Q,s)$-Fekete sets
$\PSET_{n,Q,s}$ (that is, the supports of the minimizers of the
discrete weighted $s$-energy associated with $Q$) are contained in
the extended support $\widetilde{S}_{Q,s}$ of the continuous
minimizer $\mu_{Q,s}$.

\begin{proof}[Proof of Theorem~\ref{thm:main2}]
From Theorem \ref{thm:main} we conclude that $M \leq F_{Q,s}$. Using \eqref{VarEq2}, inequality \eqref{th7.1} yields that
\begin{equation*}
U^{\mu_{\PSET_n}}_s( \PT{x} ) + F_{Q,s} - M \geq U_s ^{\mu_{Q,s}}(
\PT{x} ) \qquad \text{q.e. on $S_{Q,s}$.}
\end{equation*}
Let $\lambda$ be a multiple of the $s$-equilibrium measure on $\mathbb{S}^d$, so that $U_s^\lambda(\PT{x}) = F_{Q,s} - M$ for all $\PT{x} \in \mathbb{S}^d$. Then the last inequality becomes
\begin{equation} \label{eq:inequality.A}
U_s^{\mu_{\PSET_n}+\lambda}(\PT{x}) \geq U_s^{\mu_{Q,s}}(\PT{x}) \qquad \text{q.e. on $S_{Q,s}$.}
\end{equation}

If $S_{Q,s} = \mathbb{S}^d$, then we can extend this inequality to any $\PT{z} \not\in \PSET_n$ by the lower semi-continuity of $U_s^{\mu_{Q,s}}$ and the continuity of $U_s^{\mu_{\PSET_n}+\lambda}$ at such $\PT{z}$. As $U_s^{\mu_{\PSET_n}+\lambda}(\PT{x}) = + \infty$ and $U_s^{\mu_{Q,s}}(\PT{x}) < \infty$ for $\PT{x} \in \PSET_n$, \eqref{th7.2} holds for all $\PT{x}\in \mathbb{S}^d$.

If $S_{Q,s}$ is a proper subset of $\mathbb{S}^d$, then we can find a point $\PT{a} \not\in \PSET_n \cap S_{Q,s}$. Since $\mu_{Q,s}$ has finite
$s$-energy, inequality~\eqref{eq:inequality.A} holds $\mu_{Q,s}$-almost everywhere. Using
Kelvin transform centered at $\PT{a}$ with radius $\sqrt{2}$, we derive that
\begin{equation*}
U_s^{(\mu_{\PSET_n}+\lambda)^*}(\PT{x}^*) \geq
U_s^{\mu^*_{Q,s}}(\PT{x}^*) \qquad \text{$\mu_{Q,s}^*$-a.e.}
\end{equation*}
The principle of domination (see Proposition \ref{prop:Thm1.29}) enables us to extend this inequality to all $\PT{x}^* \in \mathbb{R}^d$. Inverse Kelvin
transformation then implies \eqref{th7.2} for all $\PT{x}\in \mathbb{S}^d$ except at the center $\PT{a}$ and this restriction can be lifted by moving the center of inversion.

Finally, we note that (\ref{th7.3}) is an immediate consequence of
\eqref{th7.2} and \eqref{VarEq1}.
\end{proof}

\begin{proof}[Proof of Corollary~\ref{cor}]
Definition~\ref{def:discrete.external.field.problem} implies that any point $\PT{x}_k \in \PSET_{n,Q,s}$ is a global minimum for the weighted potential (cf. \eqref{DiscrPot}) $h_{\PSET_{n-1}}(\PT{x})$, $\PT{x} \in \mathbb{S}^d$, where $\PSET_{n-1} = \PSET_{n,Q,s} \setminus \{\PT{x}_k\}$. Hence, inequality~\eqref{th7.1} holds with $M = h_{\PSET_{n-1}}(\PT{x}_k) = U_s^{\mu_{\PSET_{n-1}}}(\PT{x}_k) + Q(\PT{x}_k)$. Thus \eqref{th7.2} holds in particular for $\PT{x} = \PT{x}_k$ which reduces to $U_s^{\mu_{Q,s}}(\PT{x}_k) + Q(\PT{x}_k) \leq F_{Q,s}$. Therefore, $\PT{x}_k \in \widetilde{S}_{Q,s}$.
\end{proof}

\subsection{Proofs of Section~\ref{sec:separation}}
Here we prove our separation results. The Proof of Theorem~\ref{thm:main3} has three parts. The first considers the case $d - 2 < s < d$. The second establishes explicit bounds in the limiting case $s = d - 2$ in the field-free setting and thus shows Proposition~\ref{prop:well-separation.min.Riesz.(d-2).energy}. The third considers the limiting case $s = d - 2 > 0$ in the presence of an external field.

\begin{proof}[Proof of Theorem~\ref{thm:main3}, Part~1]
Let $d - 2 < s < d$. We shall modify the approach from \cite{DrSa2007}. Let $\PSET_{n,Q,s} = \{ \PT{x}_1, \dots, \PT{x}_n\}$ be an $n$-point $(Q,s)$-Fekete set on $\mathbb{S}^d$. Each point in $\PSET_{n,Q,s}$ defines an external field by means of 
\begin{equation*}
\widetilde{Q}_k(\PT{x}) \DEF Q(\PT{x}) + \frac{1}{(n-2)|\PT{x}-\PT{x}_k|^s}, \quad \PT{x} \in \mathbb{S}^d, \qquad k = 1, \dots, n.
\end{equation*}
Observe that $\PSET_{n,Q,s} \setminus \{\PT{x}_k\}$ is an $(n-1)$-point $(\widetilde{Q}_k,s)$-Fekete set on~$\mathbb{S}^d$.
Since all the $m$-point $(\widetilde{Q}_k,s)$-Fekete sets $\PSET_{m,\widetilde{Q}_k,s}$ are contained in the extended support $\widetilde{S}_{\widetilde{Q}_k,s}$ of the $s$-extremal measure $\mu_{\widetilde{Q}_k,s}$ on $\mathbb{S}^d$ associated with $\widetilde{Q}_k$ by Corollary~\ref{cor}, we will be done if we show that there is a spherical cap with radius $K_{Q,s} / n^{1/d}$ such that when centered at $\PT{x}_k$ its intersection with $\widetilde{S}_{\widetilde{Q}_k,s}$ is empty. Note that the constant $K_{Q,s}$ should only depend on $Q$ and $s$.

We shall use the fact that whenever the compact subset $K \subset \mathbb{S}^d$ contains the support $S_{\widetilde{Q}_k,s}$ of the $s$-extremal measure $\mu_{\widetilde{Q}_k,s}$ on $\mathbb{S}^d$, then $\widetilde{S}_{\widetilde{Q}_k,s} \subset \widetilde{S}_{K,\widetilde{Q}_k,s} \subset \supp( \eta_{K,\widetilde{Q}_k,s}^+ )$ (cf. Theorem~\ref{thm:signsupp.2} and remark following it).
From Proposition~\ref{prop:1}(b) it follows that $S_{\widetilde{Q}_k,s}$ is contained in a set where $\widetilde{Q}_k(\PT{x}) \leq M_k$ for some constant $M_k > 0$. This implies that there is a spherical cap $\Sigma_{r_k} \DEF \Sigma_{r_k}( \PT{x}_k ) \DEF \{ \PT{x} \in \mathbb{S}^d : |\PT{x} - \PT{x}_k| \geq r_k \}$, $r_k > 0$,
such that $S_{\widetilde{Q}_k,s} \subset \Sigma_{r_k}$.
Hence it suffices to consider the continuous Riesz external field problem on $\Sigma_{r_k}$ (instead of~$\mathbb{S}^d$) and study signed $s$-equilibria $\eta_{\Sigma_\rho,\widetilde{Q}_k,s}$ on spherical caps $\Sigma_{\rho}$ satisfying $S_{\widetilde{Q}_k,s} \subset \Sigma_\rho$.

The signed $s$-equilibrium $\eta_{\Sigma_\rho,\widetilde{Q}_k,s}$ on a spherical cap $\Sigma_\rho$ exists for any $2>\rho>0$. Indeed,
\begin{equation} \label{signedeqQ1}
\eta_{\Sigma_\rho, \widetilde{Q}_k, s} = c_k \bal_s( \sigma_d, \Sigma_\rho ) - \bal_s( \sigma, \Sigma_\rho ) - 1/(n-2)  \, \bal_s( \delta_{\PT{x}_k}, \Sigma_\rho),
\end{equation}
where $\delta_{\PT{x}_k}$ is the Dirac-delta measure with unit charge placed at $\PT{x}_k$ and $c_k$ is a normalizing constant such that $\| \eta_{\Sigma_\rho,\widetilde{Q}_k, s} \| = 1$.
Using similar analysis as in \cite[Proof of Theorem~1.5]{DrSa2007} we shall derive that this signed measure will be negative on a band containing the rim of the spherical cap $\Sigma_\rho$ for any $0< \rho < K_{Q,s} /n^{1/d}$, and in particular for $\rho=r_k$. Hence, for any such $\rho$ the support of the positive part $\eta_{\Sigma_\rho,\widetilde{Q}_k,s}^+$ is contained in a smaller spherical cap $\Sigma_{\tilde{\rho}}$ with $\rho < \tilde{\rho}$.
Since $\widetilde{S}_{\widetilde{Q}_k,s}$ is contained in any such $\supp( \eta_{\Sigma_\rho, \widetilde{Q}_k,s}^+ )$, we conclude that $\widetilde{S}_{\widetilde{Q}_k,s}\subset \Sigma_{K_{Q,s} / n^{1/d}}$, i.e.
\begin{equation*}
\dist( \PT{x}_k, \widetilde{S}_{\widetilde{Q}_k,s}) \geq K_{Q,s} / n^{1/d}.
\end{equation*}

To finish the proof we have to show the negativity of the signed equilibrium measures $\eta_{\Sigma_\rho, \widetilde{Q}_k, s}$ near the rim for all such $0< \rho < K_{Q,s} / n^{1/d}$. The middle term in \eqref{signedeqQ1}, $\bal_s( \sigma, \Sigma_\rho )$, can be written as
\begin{equation*}
\bal_s( \sigma, \Sigma_\rho ) = \bal_s( \sigma^+, \Sigma_\rho) - \bal_s( \sigma^-, \Sigma_\rho ) \FED \sigma^+_\rho - \sigma^-_\rho.
\end{equation*}
With the notations $\nu_\rho \DEF \bal_s( \sigma_d, \Sigma_\rho )$ and $\epsilon_\rho \DEF  \bal_s( \delta_{\PT{x}_k}, \Sigma_\rho)$ we rewrite \eqref{signedeqQ1} as follows:
\begin{align} \label{signedeqQ2}
\eta_{\Sigma_\rho, \widetilde{Q}_k, s} &= \frac{1+\|\sigma^+_\rho \|-\|\sigma^-_\rho \|+\|\epsilon_\rho\|/(n-2)}{\|\nu_\rho\|} \nu_\rho - \sigma^+_\rho  + \sigma^-_\rho - 1/(n-2)  \, \epsilon_\rho\nonumber \\
&\leq \frac{1+\|\sigma^+_\rho \|-\|\sigma^-_\rho \|+\|\epsilon_\rho\|/(n-2)}{\|\nu_\rho\|} \nu_\rho  + \sigma^-_\rho - 1/(n-2)\, \epsilon_\rho.
\end{align}
Note that inequality between two signed measures may be understood in terms of their densities or that the difference of the two signed measures is a positive measure.

Since balayage may be done in steps, we have
\begin{equation*}
\sigma_\rho^- = \bal_s( \sigma_0^-, \Sigma_\rho ), \quad  \text{where} \quad \sigma_0^- \DEF \bal_s( \sigma^- , \mathbb{S}^d ).
\end{equation*}
The superposition representation of the balayage measure yields
\begin{equation*}
\dd \sigma_0^- (\PT{x}) = \left( \int \epsilon_\PT{y}^\prime (\PT{x})\, d\sigma^- (\PT{y}) \right) \dd \sigma_d (\PT{x}), \qquad \text{where $\epsilon_\PT{y}=\bal_s( \delta_{\PT{y}}, \mathbb{S}^d)$, $\PT{y} \in \mathbb{R}^{d+1}$ with $| \PT{y} | > 1$.}
\end{equation*}
The formula for the point mass balayage on the unit sphere (see \cite[Theorem 2]{BrDrSa2009}) yields
\begin{align*}
\int \epsilon_\PT{y}^\prime (\PT{x})\, d\sigma^- (\PT{y})
&= \int \frac{\left(| \PT{y} |^2-1 \right)^{d-s}}{W_s(\mathbb{S}^d)|\PT{y}-\PT{x}|^{2d-s}}\, d\sigma^-(\PT{y}) \\
&\leq \int \frac{\left(| \PT{y} |^2-1 \right)^{d-s}}{W_s(\mathbb{S}^d)||\PT{y}|-1|^{2d-s}}\, d\sigma^-(\PT{y}) \leq \frac{(r+1)^{d-s}}{W_s(\mathbb{S}^d)(r-1)^d} \left\|\sigma^- \right\|,
\end{align*}
where $r > 1$ is such that $\supp( \sigma^- ) \subset \{ \PT{x} \in \mathbb{R}^{d+1} : |\PT{x}| \geq r \}$.
As balayage preserves positivity and is linear, application of balayage to the inequality
\begin{equation*}
\sigma_0^- (\PT{x}) \leq \frac{(r+1)^{d-s}}{W_s(\mathbb{S}^d)(r-1)^d} \left\|\sigma^- \right\| \sigma_d (\PT{x})
\end{equation*}
yields
\begin{equation} \label{signedeqQ4}
\sigma_\rho^- \leq \bal_s\Big( \frac{(r+1)^{d-s}}{W_s(\mathbb{S}^d)(r-1)^d} \left\|\sigma^- \right\| \sigma_d (\PT{x}), \Sigma_\rho \Big)  = \frac{(r+1)^{d-s} \|\sigma^-\|}{W_s(\mathbb{S}^d)(r-1)^d} \, \nu_\rho.
\end{equation}

Combining \eqref{signedeqQ2} and \eqref{signedeqQ4}, and using that $\|\nu_\rho\|\leq 1$, we arrive at
\begin{equation*} 
\eta_{\Sigma_\rho, \widetilde{Q}_k, s} \leq \frac{1+\|\sigma^+_\rho \|+\left(\displaystyle{\frac{(r+1)^{d-s}}{W_s(\mathbb{S}^d)(r-1)^d}}-1\right)\|\sigma^-_\rho
\|+\|\epsilon_\rho\|/(n-2)}{\|\nu_\rho\|} \nu_\rho  - 1/(n-2)\, \epsilon_\rho.
\end{equation*}
As balayage reduces the norm of a measure (i.e., $\|\sigma^+_\rho \| \leq \|\sigma^+ \|$ and $\|\sigma^-_\rho \| \leq \|\sigma^- \|$), we obtain
\begin{equation*} 
\eta_{\Sigma_\rho, \widetilde{Q}_k, s} \leq \frac{1+\|\sigma^+ \|+\left(\displaystyle{\frac{(r+1)^{d-s}}{W_s(\mathbb{S}^d)(r-1)^d}}-1\right)\|\sigma^-
\|+\|\epsilon_\rho\|/(n-2)}{\|\nu_\rho\|} \nu_\rho  - 1/(n-2)\, \epsilon_\rho.
\end{equation*}
The dominating signed measure at the right-hand side above has total charge
\begin{equation*}
c_{\sigma} \DEF 1+\|\sigma^+ \|+ \left( \frac{(r+1)^{d-s}}{W_s(\mathbb{S}^d)(r-1)^d} - 1 \right) \left\| \sigma^- \right\|
\end{equation*}
(which does not depend on $k$ and $\rho$) and can be rewritten as
\begin{equation}
c_{\sigma} \left( \frac{1 + \| \epsilon_\rho \| / [ c_{\sigma} \, ( n - 2 ) ]}{\| \nu_\rho \|} \, \nu_\rho - \frac{1}{c_{\sigma} \left( n-2 \right)} \, \epsilon_\rho \right),
\end{equation}
where the parenthetical expression is the signed $s$-equilibrium on $\Sigma_\rho$ associated with the Riesz-$s$ external field generated by a point charge of size $q \DEF 1 / [ c_{\sigma} \, ( n - 2 ) ]$ at $\PT{x}_k$. (There is no other dependence on $k$ than that $\PT{x}_k$ determines the axis of symmetry of $\Sigma_\rho$.) It follows from \cite[Theorem~13 with $R = 1$]{BrDrSa2009} (also cf. analysis in \cite[Theorem~1.5]{DrSa2007}) that this signed measure has a negative part for each $0 < \rho < \rho_0$, where the critical $\rho_0$ solves the equation\footnote{Note that here the parameter is the distance of the source (on the sphere) to the boundary of $\Sigma_\rho$. In \cite{BrDrSa2009} the ``altitude'' of the boundary is used.}
\begin{equation*}
q \, \frac{2^{d-s}}{\rho^d} = \Psi_s( \rho) \qquad \text{with} \qquad \Psi_s( \rho ) \DEF W_s( \mathbb{S}^d ) \frac{1 + q \, \| \epsilon_\rho \|}{\| \nu_\rho \|},
\end{equation*}
and $\rho_0$ is also the unique minimizer of $\Psi_s$.
%
%
Hence
\begin{equation*}
q \, \frac{2^{d-s}}{\rho_0^d} = \Psi_s( \rho_0 ) \leq \Psi_s( 0 ) = W_s( \mathbb{S}^d ) \left( 1 + q \right)
\end{equation*}
and it follows that
\begin{equation*}
\left[ \frac{q}{q+1} \, \frac{2^{d-s}}{W_s( \mathbb{S}^d )} \right]^{1/d} \leq \rho_0,
\end{equation*}
where (substituting for $q$)
\begin{equation*}
\frac{q}{q+1} = \frac{1}{c_{\sigma} ( n - 2 ) + 1} \geq \frac{1}{c_{\sigma} \, n} \, \left( 1 - \frac{2 c_{\sigma} - 1}{n} \right)^{-1} > \frac{1}{c_{\sigma} \, n}.
\end{equation*}
The last inequality holds for all $n \geq 2$ such that $0 \leq 2 c_{\sigma} - 1 < n$ and the constant $K_{Q,s}$ takes the form given in \eqref{SepRes2} which reduces to the constant given in \cite[Theorem~1.5]{BrDrSa2009} in the field-free setting. Specialization for $s = d - 1$ gives \eqref{SepRes2b}. This completes the proof of the theorem for $d - 2 < s < d$.
\end{proof}

Next, we derive explicit separation estimates for the limiting case $s = d - 2$ in the field-free setting.

\begin{proof}[Proof of Proposition~\ref{prop:well-separation.min.Riesz.(d-2).energy}]
Let $d \geq 3$ and $s = d - 2$. We proceed similar as in the first part of the Proof of Theorem~\ref{thm:main3}. Let $\PSET_{n,d-2} = \{ \PT{x}_1, \dots, \PT{x}_n \}$ be a Riesz $(d-2)$-energy minimizing $n$-point configuration on $\mathbb{S}^d$. Each point in $\PSET_{n,d-2}$ in turn can be identified with the North Pole $\PT{p}$ and thus defines a Riesz external field
\begin{equation} \label{eq:Q.limiting.case}
Q(\PT{x}) \DEF \frac{q}{|\PT{x}-\PT{p}|^{d-2}}, \qquad \PT{x} \in \mathbb{S}^d,
\end{equation}
where $q = 1 / (n - 2)$, so that  $\PSET_{n,d-2} \setminus \{\PT{p} \}$ is an $(n-1)$-point $(Q,d-2)$-Fekete set on~$\mathbb{S}^d$.
Every $m$-point $(Q,d-2)$-Fekete set on $\mathbb{S}^d$ is contained in the (rotational symmetric) extended support $\widetilde{S}_{Q,d-2}$ of the $(d-2)$-extremal measure $\mu_{Q,d-2}$ on $\mathbb{S}^d$ by Corollary~\ref{cor}. It suffices to show that $\widetilde{S}_{Q,d-2} \subset \{ \PT{x} \in \mathbb{S}^d : | \PT{x} - \PT{p} | \geq K_d / n^{1/d} \}$ for some constant $K_d$ independent of $n$.

Observe, that the set $\widetilde{S}_{Q,d-2}$ is contained in the extended support of the $(d-2)$-extremal measure on a compact subset $E \subset \mathbb{S}^d$ whenever $E$ contains the support $S_{Q,d-2}$ of $\mu_{Q,d-2}$.
Hence, by Theorem~\ref{thm:signsupp.2} and following remark, the set $\widetilde{S}_{Q,d-2}$ is a subset of the support of the positive part of the signed $(d-2)$-equilibrium measure on $E$ associated with $Q$ whenever $S_{Q,d-2} \subset E$.
In fact, Proposition~\ref{prop:1}(b) yields that $S_{Q,d-2} \subset E_M \DEF \{ \PT{x} \in \mathbb{S}^d : Q(\PT{x}) \leq M_n \}$ for some positive $M_n$ depending on $n$. 
(Note that $S_{Q,d-2}$ is a spherical cap by Proposition~\ref{prop:ConnThm}.) 
We deduce that $\widetilde{S}_{Q,s} \subset \supp( \overline{\eta}_\rho^+ )$ of the positive part $\overline{\eta}_\rho^+$ of the signed equilibrium measure $\overline{\eta}_\rho$ on the spherical cap $\Sigma_{\rho} = \{ \PT{x} \in \mathbb{S}^d : | \PT{x} - \PT{p} | \geq \rho \}$ for each $\rho \in (0,\rho_c)$, where $\overline{\eta}_{\rho_c}$ coincides with $\mu_{Q,s}$. The radius $\rho_c$ and its estimate in terms of $n$ gives the desired lower bound for the separation of points in $\PSET_{n,d-2}$.

It can be shown that the $(d-2)$-balayage measures onto the spherical cap $\Sigma_{\rho}$ of the uniform measure $\sigma_d$ and a unit point charge at $\PT{p}$,
\begin{equation} \label{eq:barBal}
\overline{\nu}_\rho \DEF \nu_{\rho,d-2} = \bal_{d-2}(\sigma_d,\Sigma_\rho), \qquad \overline{\epsilon}_\rho \DEF \epsilon_{\rho,d-2} = \bal_{d-2}(\delta_{\PT{p}},\Sigma_\rho),
\end{equation}
exist and both have a component that is uniformly distributed on the boundary of $\Sigma_\rho$.
From \cite[Theorem~15]{BrDrSa2009}, and letting $R \to 1^+$, we obtain that
\begin{equation*}
\overline{\eta}_\rho = \frac{\Psi_{d-2}( \rho )}{W_{d-2}( \mathbb{S}^d )} \, \overline{\nu}_\rho - q \, \overline{\epsilon}_{\rho,d-2}
\end{equation*}
with
\begin{equation*}
\dd \overline{\eta}_\rho( \PT{x} ) = \frac{\Psi_{d-2}( \rho )}{W_{d-2}( \mathbb{S}^d )} \, \dd \sigma_d \big|_{\Sigma_\rho}( \PT{x} ) + \left( 1 - \frac{\rho^2}{4} \right)^{d/2-1} \left( \frac{\Psi_{d-2}( \rho )}{4} \, \rho^d - q \right) \dd \delta_{1-\rho^2/2}(u) \, \dd \sigma_{d-1}(\overline{\PT{x}}),
\end{equation*}
where
\begin{equation*}
\Psi_{d-2}( \rho ) \DEF W_{d-2}( \mathbb{S}^d ) \, \frac{1 + q \, \| \overline{\epsilon}_\rho \|}{\| \overline{\nu}_\rho \|}.
\end{equation*}
Using the formulas in \cite{BrDrSa2009} it is easy to show that $\Psi_{d-2}( \rho )$ has a unique minimum in $(0,2)$ at the critical radius $\rho_c$ (cf. \cite[Theorem~15]{BrDrSa2009}). This $\rho_c$ is the solution of the equation
\begin{equation*}
\frac{\Psi_{d-2}( \rho )}{4} \, \rho^d - q = 0.
\end{equation*}
Therefore,
\begin{equation*}
q \, \frac{4}{\rho_c^d} = \Psi_{d-2}( \rho_c ) \leq \Psi_{d-2}( 0 ) = W_{d-2}( \mathbb{S}^d ) \left( 1 + q \right)
\end{equation*}
and the relations~\eqref{eq:(d-2).separation.estimate} and \eqref{eq:(d-2).separation.estimate.constant} follow after substitution of $q = 1 / (n-2)$.

Let $d = 2$ and $s = \log$. In the logarithmic case the external field in \eqref{eq:Q.limiting.case} is replaced with
\begin{equation*}
Q( \PT{x} ) \DEF q \log \frac{1}{| \PT{x} - \PT{p} |}, \qquad \PT{x} \in \mathbb{S}^d,
\end{equation*}
where $q = 1 / ( n - 2 )$. Again, $\PSET_{n,\log} \setminus \{ \PT{p} \}$ is an $(n-1)$-point $(Q,\log)$-Fekete set on $\mathbb{S}^2$. Similarly as before we are led to the investigation of the signed logarithmic equilibrium problem on spherical caps $\Sigma_\rho$. Theorem~17 in \cite{BrDrSa2009} gives (as $R \to 1^+$) that
\begin{equation*}
\dd \overline{\eta}_{\rho,\log}( \PT{x} ) = \left( 1 + q \right) \dd \sigma_2 \big|_{\Sigma_{\rho}}( \PT{x} ) + \left[ \left( 1 + q \right) \frac{\rho^2}{4} - q \right] \dd \delta_{1-\rho^2/2}(u) \dd \sigma_{d-1}( \overline{\PT{x}} ), \quad ( \sqrt{1 - u^2} \, \overline{\PT{x}}, u ) \in \mathbb{S}^d.
\end{equation*}
Clearly, the signed logarithmic equilibrium $\overline{\eta}_{\rho,\log}$ has a negative boundary charge (uniformly distributed over the boundary of $\Sigma_\rho$) for every $\rho \in (0,\rho_c)$, where at the critical distance $\rho = \rho_c$ this boundary charge vanishes; that is,
\begin{equation*}
\rho_c^2 = \frac{4q}{q+1} = \frac{4}{n-1}.
\end{equation*}
The relation for the separation follows.
\end{proof}

In the third and last part of the Proof of Theorem~\ref{thm:main3} we establish separation bounds for the limiting case $s = d - 2$ and $d \geq 3$ given an external field.

\begin{proof}[Proof of Theorem~\ref{thm:main3}, Part~3]
Let $s = d - 2$ and $d \geq 3$. We proceed as in the first part of the Proof of Theorem~\ref{thm:main3}. Let $\PSET_{n,Q,d-2} = \{ \PT{x}_1, \dots, \PT{x}_n\}$ be an $n$-point $(Q,d-2)$-Fekete set on $\mathbb{S}^d$. Each point in $\PSET_{n,Q,d-2}$ defines an external field by means of 
\begin{equation*}
\widetilde{Q}_k(\PT{x}) \DEF Q(\PT{x}) + \frac{1}{(n-2)|\PT{x}-\PT{x}_k|^{d-2}}, \quad \PT{x} \in \mathbb{S}^d, \qquad k = 1, \dots, n.
\end{equation*}
Again Corollary~\ref{cor} and Proposition~\ref{prop:1} guarantee that there is a spherical cap $\Sigma_{r_k}$ such that $\PSET_{m,\widetilde{Q}_k,d-2} \subset \widetilde{S}_{\widetilde{Q}_k,s} \subset \Sigma_{r_k}$ for every $m$-point $(\widetilde{Q}_k,d-2)$-Fekete set on $\mathbb{S}^d$. Hence, we may study the continuous Riesz $(d-2)$-external field problem on $\Sigma_{r_k}$ and use signed $(d-2)$-equilibria on spherical caps $\Sigma_\rho$ in our analysis. 

The signed $(d-2)$-equilibrium $\eta_{\Sigma_\rho,\widetilde{Q}_k,d-2}$ on a spherical cap $\Sigma_\rho$ exists for any $2>\rho>0$, since (with the notations $\overline{\nu}_\rho \DEF \bal_{d-2}( \sigma_d, \Sigma_\rho )$ and $\overline{\epsilon}_\rho \DEF \bal_{d-2}( \delta_{\PT{x}_k}, \Sigma_\rho)$)
\begin{equation} \label{eq:signedeqQ1a}
\overline{\eta}_{\Sigma_\rho, \widetilde{Q}_k, d-2} = c_k \overline{\nu}_\rho - \bal_{d-2}( \sigma, \Sigma_\rho ) - 1/(n-2)  \, \overline{\epsilon}_\rho,
\end{equation}
where $c_k$ is a normalizing constant such that $\| \eta_{\Sigma_\rho,\widetilde{Q}_k, d-2} \| = 1$. In the limiting case $s = d - 2$, balayage introduces a boundary charge uniformly distributed over the boundary of $\Sigma_\rho$. Indeed, from \cite[Lemmas~33 and 36]{BrDrSa2009}, and letting $R \to 1^+$, we have\footnote{We use the correspondence $\rho^2 = 2( 1 - t )$ in \cite[Lemmas~33 and 36]{BrDrSa2009}.}
\begin{align}
\dd \overline{\nu}_\rho( \PT{x} ) &= \dd \sigma_d \big|_{\Sigma_\rho}( \PT{x} ) + \frac{W_{d-2}( \mathbb{S}^d )}{4} \rho^d \left( 1 - \frac{\rho^2}{4}  \right)^{d/2-1} \dd \delta_{1-\rho^2/2}( u ) \, \dd \sigma_{d-1}( \overline{\PT{x}} ), \\
\dd \overline{\epsilon}_\rho( \PT{x} ) &= \left( 1 - \frac{\rho^2}{4}  \right)^{d/2-1} \dd \delta_{1-\rho^2/2}( u ) \, \dd \sigma_{d-1}( \overline{\PT{x}} ).
\end{align}
Furthermore,
\begin{equation*}
\bal_{d-2}( \sigma, \Sigma_\rho ) = \sigma_\rho^+ - \sigma_\rho^-, \qquad \text{where $\sigma_\rho^+ \DEF \bal_{d-2}( \sigma^+, \Sigma_\rho )$ and $\sigma_\rho^- \DEF \bal_{d-2}( \sigma^-, \Sigma_\rho )$.}
\end{equation*}
Note that $\sigma_\rho^+$ (since it is subtracted in \eqref{eq:signedeqQ1a}) can only have a non-positive contribution to the boundary charge on $\Sigma_\rho$. This leaves $\sigma_\rho^-$. Iterating balayage, we have 
\begin{equation*}
\sigma_\rho^- = \bal_{d-2}( \sigma_0^-, \mathbb{S}^d ) 
\end{equation*}
where $\sigma_0^- \DEF \bal_{d-2}( \sigma^-, \mathbb{S}^d )$. From \eqref{signedeqQ4} we get
\begin{equation}
\sigma_\rho^- \leq \bal_{d-2}\Big( \frac{(r+1)^{2}}{W_{d-2}(\mathbb{S}^d)(r-1)^d} \left\| \sigma^- \right\| \sigma_d(\PT{x}), \Sigma_\rho \Big)  = \frac{(r+1)^{2} \|\sigma^-\|}{W_{d-2}(\mathbb{S}^d)(r-1)^d} \, \overline{\nu}_\rho.
\end{equation}
Hence, using that the $(d-2)$-balayage also does not increase the norm (i.e., $\| \sigma_\rho^+ \| \leq \| \sigma^+ \|$ and $\| \sigma_\rho^- \| \leq \| \sigma^- \|$), we arrive at
\begin{equation*}
\overline{\eta}_{\Sigma_\rho, \widetilde{Q}_k, d-2} \leq \frac{1+\|\sigma^+ \|+\left(\displaystyle{\frac{(r+1)^{2}}{W_{d-2}(\mathbb{S}^d)(r-1)^d}}-1\right)\|\sigma^- \|+\|\overline{\epsilon}_\rho\|/(n-2)}{\|\overline{\nu}_\rho\|} \overline{\nu}_\rho  - 1/(n-2)\, \overline{\epsilon}_\rho.
\end{equation*}
The right-hand side has total charge
\begin{equation*}
\overline{c}_\sigma \DEF 1+\|\sigma^+ \|+\left(\displaystyle{\frac{(r+1)^{2}}{W_{d-2}(\mathbb{S}^d)(r-1)^d}}-1\right)\|\sigma^- \|
\end{equation*}
and can be rewritten as $\overline{c}_\sigma \overline{\eta}_{\rho,0}$, where $\overline{\eta}_{\rho,0}$ is the signed $(d-2)$-equilibrium on $\Sigma_\rho$ for the Riesz-$(d-2)$ external field due to a point charge of size $q = 1 / [ \overline{c}_\sigma ( n - 2 ) ]$ placed at $\PT{x}_k$. Now we may proceed as in the last part of the Proof of Proposition~\ref{prop:well-separation.min.Riesz.(d-2).energy} to conclude that the critical distance $\rho_c$ satisfies
\begin{equation*}
q \, \frac{4}{\rho_c^d} \leq W_{d-2}( \mathbb{S}^d ) \left( 1 + q \right);
\end{equation*}
that is
\begin{equation*}
\rho_c \geq \left[ \frac{q}{q+1} \, \frac{4}{W_{d-2}( \mathbb{S}^d )} \right]^{1/d}, \qquad \frac{q}{q+1} = \frac{1}{\overline{c}_\sigma ( n - 2 ) + 1} \geq \frac{1}{\overline{c}_\sigma \, n}.
\end{equation*}
The result for $s = d - 2$ and $d \geq 3$ follows. In the logarithmic case $s = \log$ and $d = 2$ we follow the same argument utilizing \cite[Lemmas~39 and 41]{BrDrSa2009}. Note that the separation estimate in the logarithmic case is the limit as $s \to 0$ of the separation results for ${0 < s < 2}$.
\end{proof}

In the proof of Theorem~\ref{thm:main4} we will use the following facts about \emph{spherical caps}
\begin{equation*}
C( \PT{x}, r ) \DEF \left\{ \PT{y} \in \mathbb{S}^d : \left| \PT{y} - \PT{x} \right| \leq r \right\} = \left\{ \PT{y} \in \mathbb{S}^d :  \PT{y} \cdot \PT{x} \geq 1 - r^2 / 2 \right\}
\end{equation*}
centered at $\PT{x} \in \mathbb{S}^d$ with radius $r \in (0,2)$ provided in \cite{KuSa1998}; namely
\begin{equation} \label{eq:cap.area.expansion}
\sigma_d( C( \PT{x}, r ) ) = \frac{\omega_{d-1}}{\omega_d} \int_{1-r^2/2}^1 \left( 1 - t^2 \right)^{d/2-1} \dd t = \frac{1}{d} \frac{\omega_{d-1}}{\omega_d} \, r^d + o( r^{d+2} ) \qquad \text{as $r \to 0$}
\end{equation}
and the estimate ($d \geq 2$)
\begin{equation} \label{eq:cap.area.estimate}
\sigma_d( C( \PT{x}, r ) ) \leq \frac{1}{d} \frac{\omega_{d-1}}{\omega_d} \, r^d.
\end{equation}
Here, $\omega_d$ denotes the surface area of $\mathbb{S}^d$ and
\begin{equation} \label{eq:gamma.d}
\gamma_d \DEF \frac{\omega_{d-1}}{\omega_d} = \frac{\gammafcn( (d+1)/2 )}{\sqrt{\pi} \, \gammafcn( d / 2 )}.
\end{equation}
Reference \cite{KuSa1998} also gives
\begin{equation} \label{eq:mod.potential.estimate}
\int_{\mathbb{S}^d \setminus C( \PT{x}, r )} \frac{1}{\left| \PT{x} - \PT{y} \right|^s} \dd \sigma_d( \PT{y} ) = \frac{\gamma_d}{s-d} \, r^{d-s} + \mathcal{R}_{s,d}(r),  
\end{equation}
where $\mathcal{R}_{s,d}(r) = o( r^{d-s} )$ as $r \to 0$. A finer analysis of the integral at the right-hand side gives the following explicit estimates.
\begin{lem} \label{lem:explicit.estimates}
The remainder term in \eqref{eq:mod.potential.estimate} satisfies $\mathcal{R}_{s,d}(r) \leq \frac{\gamma_d}{2} \beta_{s,d} \, r^{2+d-s}$, where
\begin{equation} \label{eq:coeff.beta.s.d}
\beta_{s,d}
=
\begin{cases}
0 & \text{for $d < s \leq 2d$,} \\[1.05em]
\dfrac{s/2 - d}{\left( s - d \right) \left( s - d - 2 \right)} & \text{for $2d < s \leq 2d + 2$,} \\[1.05em]
\dfrac{s/2 - d}{d \left(d+2\right)} & \text{for $s > 2d + 2$.}
\end{cases}
\end{equation}
\end{lem}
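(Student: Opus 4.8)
The plan is to turn the surface integral into an explicit one–dimensional integral, peel off the singular term $\frac{\gamma_d}{s-d}r^{d-s}$ exactly, and then bound what is left by a one–line monotonicity estimate.

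First I would reduce to a one–dimensional integral. For an integrand depending only on $t\DEF\PT{x}\cdot\PT{y}$ the slicing formula reads $\int_{\mathbb{S}^d}f(\PT{x}\cdot\PT{y})\dd\sigma_d(\PT{y})=\gamma_d\int_{-1}^{1}f(t)(1-t^2)^{d/2-1}\dd t$, and $|\PT{x}-\PT{y}|^2=2-2t$, while $\mathbb{S}^d\setminus C(\PT{x},r)=\{\PT{y}:\PT{x}\cdot\PT{y}<1-r^2/2\}$. Substituting $u=2-2t=|\PT{x}-\PT{y}|^2$, which lets $u$ range over $(r^2,4)$ and satisfies $1-t^2=u(4-u)/4$, hence $(1-t^2)^{d/2-1}=u^{d/2-1}(1-u/4)^{d/2-1}$, gives
\[
\int_{\mathbb{S}^d\setminus C(\PT{x},r)}\frac{\dd\sigma_d(\PT{y})}{|\PT{x}-\PT{y}|^{s}}=\frac{\gamma_d}{2}\int_{r^2}^{4}u^{d/2-1-s/2}\Bigl(1-\tfrac{u}{4}\Bigr)^{d/2-1}\dd u .
\]
Since $s>d$, the exponent $d/2-1-s/2$ is strictly less than $-1$, so $\frac{\gamma_d}{2}\int_{r^2}^{\infty}u^{d/2-1-s/2}\dd u=\frac{\gamma_d}{s-d}r^{d-s}$ is exactly the main term; subtracting it yields
\[
\mathcal{R}_{s,d}(r)=\frac{\gamma_d}{2}\int_{r^2}^{4}u^{d/2-1-s/2}\Bigl[\bigl(1-\tfrac{u}{4}\bigr)^{d/2-1}-1\Bigr]\dd u-\frac{\gamma_d}{2}\int_{4}^{\infty}u^{d/2-1-s/2}\dd u .
\]

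Next I would bound the two pieces. Because $d\ge 2$, the exponent $d/2-1$ is nonnegative and $1-u/4\in[0,1]$ for $u\in[0,4]$, so $(1-u/4)^{d/2-1}\le 1$ there; hence the first integrand is $\le 0$ and the first integral is $\le 0$, while the second integral is manifestly positive. Therefore $\mathcal{R}_{s,d}(r)\le-\frac{\gamma_d}{s-d}2^{d-s}<0$ for all $r$ in the relevant range $(0,2)$ and every $s>d$, $d\ge2$. For $d<s\le 2d$ this is already the asserted inequality, since there $\beta_{s,d}=0$; for $s>2d$ one reads off from \eqref{eq:coeff.beta.s.d} that $\beta_{s,d}>0$ (the numerator $s/2-d$ and every factor of the denominator are positive in that range), so the right-hand side $\frac{\gamma_d}{2}\beta_{s,d}r^{2+d-s}$ is strictly positive and the inequality holds a fortiori.

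The upper bound itself therefore presents essentially no obstacle; the only point requiring attention is recognizing that the non-integrable singularity of $|\PT{x}-\PT{y}|^{-s}$ at $\PT{y}=\PT{x}$ — the reason $s>d$ is imposed — contributes precisely $\frac{\gamma_d}{s-d}r^{d-s}$, i.e.\ that the split $\frac{\gamma_d}{2}\int_{r^2}^{4}=\frac{\gamma_d}{2}\int_{r^2}^{\infty}-\frac{\gamma_d}{2}\int_{4}^{\infty}$ is the right one. The piecewise formula for $\beta_{s,d}$ in \eqref{eq:coeff.beta.s.d} records the finer, expansion-based estimate of \cite{KuSa1998}, which exhibits the true rate $r^{2+d-s}$ with an explicit constant: there one writes $(1-u/4)^{d/2-1}=1-(d/2-1)\tfrac{u}{4}+g(u)$ with $g(u)=O(u^{2})$ near $0$, and estimates $\int_{r^2}^{4}u^{d/2-s/2}\dd u$ (which begins to diverge as $r\to0$ once $s>d+2$) together with the $O(r^{4+d-s})=o(r^{2+d-s})$ contribution of $g$; keeping track of the resulting constants, and of the thresholds at which successive error integrals start to blow up, is the only slightly fiddly step and is entirely routine.
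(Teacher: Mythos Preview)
Your argument is correct and takes a genuinely different route from the paper. The paper rewrites the integral as a Gauss hypergeometric function via the substitution $1+t=2(1-r^2/4)v$, applies a linear transformation to expose the factor $r^{d-s}$, and then uses the mean-value theorem together with an integral representation for the derivative of ${}_2F_1$ to bound the difference $f(r^2/4)-f(0)$; the three cases for $\beta_{s,d}$ arise from estimating the resulting Beta-type integral according to the sign of $1+d-s/2$. Your approach is more elementary: you split off $\frac{\gamma_d}{2}\int_{r^2}^\infty u^{d/2-1-s/2}\,du$ exactly and observe that what remains is strictly negative because $(1-u/4)^{d/2-1}\le 1$ on $[0,4]$. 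This already gives $\mathcal{R}_{s,d}(r)<0$, which is stronger than the stated bound since $\beta_{s,d}\ge 0$ in all three ranges. What the paper's longer computation buys is the identification of the constants $\beta_{s,d}$ themselves as genuine coefficients of the $r^{2+d-s}$ term; these constants are then quoted verbatim in the explicit separation estimate of the Example following Theorem~\ref{thm:main4} and in \eqref{eq:master.inequality}. For the lemma as stated, however, your one-line monotonicity bound suffices and in fact shows that the $\beta_{s,d}$-term in \eqref{eq:master.inequality} could be dropped altogether.
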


\begin{proof}
The Funk-Hecke formula (cf. \cite{Mu1966}) gives
\begin{equation*}
\int_{\mathbb{S}^d \setminus C( \PT{x}, r )} \frac{1}{\left| \PT{x} - \PT{y} \right|^s} \dd \sigma_d( \PT{y} ) = \frac{\omega_{d-1}}{\omega_d} 2^{-s/2} \int_{-1}^{1-r^2/2} \left( 1 - t \right)^{-s/2} \left( 1 - t^2 \right)^{d/2-1} \dd t.
\end{equation*}
The standard substitution $1+t = 2 ( 1 - \frac{r^2}{4} ) v$ and \cite[Eq.~15.6.1]{NIST:DLMF} enables us to obtain an expression in terms of a Gauss hypergeometric function by means of
\begin{align}
\int_{\mathbb{S}^d \setminus C( \PT{x}, r )} \frac{1}{\left| \PT{x} - \PT{y} \right|^s} \dd \sigma_d( \PT{y} )
&= \gamma_d 2^{d-s-1} \left( 1 - \frac{r^2}{4} \right)^{d/2} \int_0^1 \frac{v^{d/2-1} \left( 1 - v \right)^{1-1}}{\left( 1 - \left( 1 - \frac{r^2}{4} \right) v \right)^{1+(s-d)/2}} \, \dd v \notag \\
&= \frac{\gamma_d}{d} 2^{d-s-1} \left( 1 - \frac{r^2}{4} \right)^{d/2} \Hypergeom{2}{1}{1+\frac{s-d}{2},\frac{d}{2}}{1+\frac{d}{2}}{1-\frac{r^2}{4}} \label{eq:potential.deleted.cap} \\
&= \frac{\gamma_d}{d} \, r^{d-s} \left( 1 - \frac{r^2}{4} \right)^{d/2} \Hypergeom{2}{1}{d-\frac{s}{2},1}{1+\frac{d}{2}}{1-\frac{r^2}{4}}, \notag
\end{align}
where the last relation follows from the last linear transformation in \cite[Eq.~15.8.1]{NIST:DLMF}. Using that $( 1 - \frac{r^2}{4} )^{d/2} \leq 1$, we obtain the estimate
\begin{equation*}
\int_{\mathbb{S}^d \setminus C( \PT{x}, r )} \frac{1}{\left| \PT{x} - \PT{y} \right|^s} \dd \sigma_d( \PT{y} ) \leq \frac{\gamma_d}{d} \, r^{d-s} \Hypergeom{2}{1}{d-\frac{s}{2},1}{1+\frac{d}{2}}{1} + \widetilde{\mathcal{R}}_{s,d}(r),
\end{equation*}
where
\begin{equation*}
\widetilde{\mathcal{R}}_{s,d}(r) = \frac{\gamma_d}{d} \, r^{d-s} \left( \Hypergeom{2}{1}{d-\frac{s}{2},1}{1+\frac{d}{2}}{1-\frac{r^2}{4}} - \Hypergeom{2}{1}{d-\frac{s}{2},1}{1+\frac{d}{2}}{1} \right).
\end{equation*}
Note that the hypergeometric function at argument $1$ evaluates as $d / ( s - d )$ by \cite[Eq.~15.4.20]{NIST:DLMF}. Setting $f(x) = \Hypergeom{2}{1}{d-s/2,1}{1+d/2}{1-x}$, the mean-value theorem and the differentiation formula \cite[Eq.~15.5.1]{NIST:DLMF} gives
\begin{equation*}
\begin{split}
\frac{f(r^2/4)-f(0)}{r^2/4-0} = f^\prime( \xi )
&= - \frac{d-\frac{s}{2}}{1+\frac{d}{2}} \, \Hypergeom{2}{1}{1+d-\frac{s}{2},2}{2+\frac{d}{2}}{1-\xi} \\
&= - \frac{d-\frac{s}{2}}{1+\frac{d}{2}} \, \frac{\gammafcn(2+d/2)}{\gammafcn(2) \gammafcn(d/2)} \int_0^1 \frac{t \left( 1 - t \right)^{d/2-1}}{\left( 1 - \left( 1 - \xi \right) t \right)^{1+d-s/2}} \dd t
\end{split}
\end{equation*}
for some $\xi \in (0, r^2/4 )$. (The last step follows again from \cite[Eq.~15.6.1]{NIST:DLMF}.) The result follows by considering the integral (which reduced to a Beta function integral) for the three given cases $d < s \leq 2d$, $2d < s < 2d+2$ and $s > 2d + 2$ and the observation that $\mathcal{R}_{s,d}(r) \leq \widetilde{\mathcal{R}}_{s,d}(r)$.
\end{proof}

\begin{proof}[Proof of Theorem~\ref{thm:main4}]
We follow the argument in \cite[Corollary 4]{KuSa1998}. Let $X_n = \{ \PT{x}_1, \dots, \PT{x}_n \}$ be a $(Q,s)$-Fekete set on $\mathbb{S}^d$. Set $D_k \DEF B \setminus C( \PT{x}_k, \rho_n )$ for $k = 1, \dots, n$, where $B \subset \mathbb{S}^d$ is such that $\sigma_d( B) > 0$ and $\int_B | Q( \PT{x} ) | \dd \sigma_d( \PT{x} ) < \infty$, and where $\rho_n \DEF ( \sigma_d( B) / n )^{1/d}$. Setting $D \DEF \cap_{k=1}^n D_k$, we have from \eqref{eq:cap.area.estimate} that
\begin{equation} \label{eq:sigma.d.D.estimate}
\sigma_d( D ) \geq \sigma_d( B) - n \sigma_d( C( \cdot, \rho_n ) ) \geq \sigma_d( B) \left( 1 - \frac{1}{d} \gamma_d \right) > 0.
\end{equation}
Consider, for a given index $j$, the function
\begin{equation*}
H_j( \PT{x} ) \DEF \sum_{\substack{k = 1 \\ k \neq j}}^n \frac{1}{\left| \PT{x} - \PT{x}_k \right|^{s}} + 2 \left( n - 1 \right) Q( \PT{x} ), \qquad \PT{x} \in \mathbb{S}^d.
\end{equation*}
As $X_{n,Q,s}$ minimizes the discrete weighted $s$-energy \eqref{eq:discrete.Riesz.energy} associated with $Q$, we have
\begin{equation}
\begin{split} \label{eq:pre.master.inequality}
\sigma_d(D) \, H_j( \PT{x}_j )
&\leq \int_D H_j( \PT{x} ) \, \dd \sigma_d( \PT{x} ) \\
&= \sum_{\substack{k = 1 \\ k \neq j}}^n \int_D \frac{1}{\left| \PT{x} - \PT{x}_k \right|^{s}} \dd \sigma_d( \PT{x} ) + 2 \left( n - 1 \right) \int_D Q( \PT{x} ) \, \dd \sigma_d( \PT{x} ).
\end{split}
\end{equation}
Using the inclusions $D \subset D_k = B \setminus C( \PT{x}_k, \rho_n ) \subset \mathbb{S}^d \setminus C( \PT{x}_k, \rho_n )$ and \eqref{eq:mod.potential.estimate}, we have
\begin{equation*}
\int_D \frac{1}{\left| \PT{x} - \PT{x}_k \right|^{s}} \dd \sigma_d( \PT{x} )
\leq \int_{\mathbb{S}^d \setminus C( \PT{x}_k, \rho_n )} \frac{1}{\left| \PT{x} - \PT{x}_k \right|^{s}} \dd \sigma_d( \PT{x} ) = \frac{1}{s-d} \gamma_d \, (\rho_n)^{d-s} + \mathcal{R}_{s,d}( \rho_n ),
\end{equation*}
where $\mathcal{R}_{s,d}( \rho_n ) = o( (\rho_n)^{d-s} )$ as $\rho_n \to 0$; in fact, by Lemma~\ref{lem:explicit.estimates},
\begin{equation*}
\frac{n}{\sigma_d(D)} \, \mathcal{R}_{s,d}( \rho_n ) \leq \frac{\gamma_d}{2} \beta_{s,d} \, \frac{\sigma_d( B)}{\sigma_d(D)} \, (\rho_n)^{2 - s}.
\end{equation*}
Dividing through by $\sigma_d( D )$ in \eqref{eq:pre.master.inequality}, subtracting off the constant $Q( \PT{x}_j )$ (if it is negative) from both sides of the new inequality and substituting the estimates above, we arrive at
\begin{equation*}
\begin{split} 
H_j( \PT{x}_j ) - 2 \left( n - 1 \right) \min\{ 0, Q( \PT{x}_j ) \}
&\leq \frac{\gamma_d}{s-d} \frac{\sigma_d( B)}{\sigma_d(D)} \, (\rho_n)^{- s} + \frac{\gamma_d}{2} \beta_{s,d} \, \frac{\sigma_d( B)}{\sigma_d(D)} \, (\rho_n)^{2 - s} \\
&\phantom{=}+ \frac{2\left( n - 1 \right)}{\sigma_d( D )} \int_D \big( Q( \PT{x} ) - \min\{ 0, Q( \PT{x}_j ) \} \big) \dd \sigma_d( \PT{x} ).
\end{split}
\end{equation*}
Note that $\min\{ 0, Q( \PT{x}_j ) \}$ is well-defined and finite by the lower semi-continuity of $Q$ and minimality of $X_{n,Q,s}$.
Since $\left| \PT{x}_j - \PT{x}_k \right|^{-s}$ ($k = 1, \dots, n$, $k \neq j$) is a lower bound for the left-hand side, we have
\begin{equation}
\begin{split} \label{eq:master.inequality}
\left| \PT{x}_j - \PT{x}_k \right|^{-s}
&\leq \frac{\sigma_d( B)}{\sigma_d(D)} \, (\rho_n)^{- s} \Bigg\{ \frac{\gamma_d}{s-d}  + \frac{\gamma_d}{2} \beta_{s,d} \, (\rho_n)^{2} \\
&\phantom{=}+ 2\left( n - 1 \right) (\rho_n)^{s} \frac{1}{\sigma_d( B )} \int_D \big( Q( \PT{x} ) - \min\{ 0, Q( \PT{x}_j ) \} \big) \dd \sigma_d( \PT{x} ) \Bigg\}.
\end{split}
\end{equation}
Integrability (with respect to $\sigma_d$) of $Q$ over $B \supset D$ and $\min\{ 0, Q( \PT{x}_j ) \} \geq \min\{ 0, \underline{M}^Q \}$, where $\underline{M}^Q$ is the minimum of $Q$ over $\mathbb{S}^d$, yield
\begin{equation} \label{eq:master.inequality.estimate}
\begin{split}
&\left| \frac{1}{\sigma_d( B )} \int_D \big( Q( \PT{x} ) - \min\{ 0, Q( \PT{x}_j ) \} \big) \dd \sigma_d( \PT{x} ) \right| \\
&\phantom{equalsequals}\leq \frac{1}{\sigma_d( B)} \int_{B} \left| Q( \PT{x} ) \right| \dd \sigma_d( \PT{x} ) - \min\{ 0, \underline{M}^Q \} < \infty.
\end{split}
\end{equation}
Hence, the result follows from \eqref{eq:master.inequality}, \eqref{eq:sigma.d.D.estimate} and the definition of $\rho_n$.
\end{proof}


\end{document}